\pgfplotsset{compat=1.17}
\newtheorem{theorem}{Theorem}[section]
\newtheorem{lemma}[theorem]{Lemma}
\newtheorem{definition}[theorem]{Definition}
\newtheorem{corollary}[theorem]{Corollary}
\newtheorem{claim}[theorem]{Proposition}
\newtheorem{fact}[theorem]{Fact}
\newtheorem{remark}[theorem]{Remark}
\newtheorem{example}[theorem]{Example}
\crefname{section}{Section}{Sections}
\crefname{theorem}{Theorem}{Theorems}
\crefname{assumption}{Assumption}{Assumptions}
\crefname{lemma}{Lemma}{Lemmas}
\crefname{definition}{Definition}{Definitions}
\crefname{conjecture}{Conjecture}{Conjectures}
\crefname{corollary}{Corollary}{Corollaries}
\crefname{construction}{Construction}{Constructions}
\crefname{claim}{Proposition}{Propositions}
\crefname{observation}{Observation}{Observations}
\crefname{proposition}{Proposition}{Propositions}
\crefname{fact}{Fact}{Facts}
\crefname{question}{Question}{Questions}
\crefname{problem}{Problem}{Problems}
\crefname{remark}{Remark}{Remarks}
\crefname{example}{Example}{Examples}
\crefname{equation}{Equation}{Equations}
\crefname{appendix}{Appendix}{Appendices}
\crefname{algorithm}{Algorithm}{Algorithms}
\crefname{model}{Model}{Models}
\crefname{figure}{Figure}{Figures}
\newcommand{\yesnum}{\addtocounter{equation}{1}\tag{\theequation}}
\newcommand{\Ast}{\ensuremath{{\cA}_{\rm st}}}
\newcommand{\Agroup}{\ensuremath{{\cA}_{\rm group}}}
\newcommand{\Ainst}{{\ouralgo{}}}
\newcommand{\ouralgo}{\ensuremath{\cA_{\rm inst\text{-}wise}}}
\newcommand{\cA}{\mathcal{A}}
\newcommand{\cD}{\mathcal{D}}
\newcommand{\cE}{\mathcal{E}}
\newcommand{\cI}{\mathcal{I}}
\newcommand{\cL}{\mathcal{L}}
\newcommand{\cR}{\mathcal{R}}
\newcommand{\ells}{{\ell^\star}}
\newcommand{\mb}[1]{\ensuremath{\boldsymbol{#1}}}
\newcommand{\husigma}{{\mb{\hu}, \mb{\sigma}}}
\newcommand{\sfrac}[2]{#1/#2}
\renewcommand{\epsilon}{\varepsilon}
\newcommand{\Ex}{\operatornamewithlimits{\mathbb{E}}}
\newcommand{\E}{\operatornamewithlimits{\mathbb{E}}}
\def\sabs#1{| #1 |}
\newcommand{\inparen}[1]{\left(#1\right)}
\newcommand{\inbrace}[1]{\left\{#1\right\}}
\newcommand{\hf}{\widehat{f}}
\newcommand{\hu}{\widehat{u}}
\newcommand{\eat}[1]{}
\newcommand{\cN}{\mathcal{N}}
\newcommand{\Area}{{\mathsf {Area}}}
\newcommand{\sP}{\mathscr{P}}
\newcommand{\sR}{\mathscr{R}}
\newcommand{\sU}{\mathscr{U}}
\newcommand{\sidebysidecaption}[4]{%
  \begin{minipage}[t]{#1}
    \vspace*{0pt}
    #3
  \end{minipage}
  \hspace{-10mm}
  \begin{minipage}[t]{#2}
    \vspace*{0pt}
    #4
\end{minipage}%
}
\title{Centralized Selection with Preferences in the Presence of Biases}
\author{L. Elisa Celis \\ Yale University \and Amit Kumar \\ {IIT Delhi}  \and Nisheeth K. Vishnoi  \\Yale University \and Andrew Xu \\ Yale University}
\date{}
\begin{document}

\maketitle

\begin{abstract}
This paper considers the scenario in which there are multiple institutions, each with a limited capacity for candidates, and candidates, each with preferences over the institutions. 
A central entity evaluates the utility of each candidate to the institutions, and the goal is to select candidates for each institution in a way that maximizes utility while also considering the candidates' preferences.
The paper focuses on the setting in which candidates are divided into multiple groups and the observed utilities of candidates in some groups are biased--systematically lower than their true utilities.
The first result is that, in these biased settings, prior algorithms can lead to selections with sub-optimal true utility and significant discrepancies in the fraction of candidates from each group that get their preferred choices.
Subsequently, an algorithm is presented along with proof that it produces selections that achieve near-optimal group fairness with respect to preferences while also nearly maximizing the true utility under distributional assumptions.
Further, extensive empirical validation of these results in real-world and synthetic settings, in which the distributional assumptions may not hold, are presented.

\end{abstract}

\newpage

\tableofcontents

\newpage

\section{Introduction}\label{sec:intro-contributions}
  This paper studies a centralized selection problem that occurs in high-stakes contexts such as education and employment.
In this setting there are $p$ institutions, each with capacities $k_1, k_2, \dots, k_p$, and $n$ candidates, and $\sum_i k_i$  is much less than $n$.
A central entity evaluates the utility of each candidate to the institutions (same for each institution) and the candidates declare preferences for the institutions. 
The high-level goal of this entity is to then select a subset of at most $k_\ell$ candidates for  $\ell$-th institution while trying to maximize total utility and ensuring the preferences of the candidates are taken into account.
    Examples include the Joint Seat Allocation Authority (JOSAA) in India, which places over 50,000 students in top engineering colleges each year \cite{baswana2019centralized,seatMatrixJOSSA2022}, and comparable centralized allocation mechanisms across China \cite{wiki:gaokao}.
        Other examples include  public schools in NYC and Boston, which have used centralized admissions systems  \cite{roth1992Handbook,roth1999redesign},
       and online labor markets (such as TaskRabbit) that use centralized systems to match workers to tasks based on their skills and preferences for working hours, location, and type-of-task \cite{hannak2017freelance,Taskrabbit-matching14}. 

    Several works in Game Theory \cite{roth1989incompleteinfo,aziz2016stable}, Machine Learning \cite{das2005two,muthukrishnan2009ad,liu2020matchinbandit,liu2021matching,yifei2022matching}, and Economics \cite{roth1985common,roth1999redesign,liu2014stable, Rastegari2013MatchingPartialInformation} examine this centralized selection problem and its various forms.
    The variant of interest here is where candidates' utilities $u_1,u_2,\dots,u_n\geq 0$ are cardinal and their preferences are ordinal.
    The utilities can denote candidates' test scores, performance in interviews, or suitability for hiring.
   Candidate $i$’s preferences are given by a ranking $\sigma_i$ 
  of the $p$ institutions, ordered from most to least preferred.
    The Gale-Shapley algorithm~\cite{galeshapley1962} for this problem orders candidates in decreasing order of utility and assigns them to their most preferred institution which still has empty slots. 
    This algorithm can be shown to maximize utility and ensure a type of stability--there are no two candidates who both prefer each other's assigned institution to the ones they have been assigned.

     However, the true or latent utilities of candidates are often estimated through evaluation processes like tests, interviews, or user ratings,  which are known to yield biased estimates against underprivileged groups \cite{rooth2010automatic, kite2016psychology, regner2019committees,elsesser2019lawsuitSATACT,corinne2012science,wenneras2001nepotism,lyness2006fit,CelisKMV23}.
     Standardized exams like the SAT, GRE, and ACT, often pivotal in college admissions, disproportionately affect candidates from minority and low-income demographics \cite{elsesser2019lawsuitSATACT}.
    Further examples of biased evaluations include biased peer-review of fellowship applications against women \cite{wenneras2001nepotism}, and stricter promotion standards for women in managerial positions \cite{lyness2006fit}.

    Selections based on such biases in estimated utilities may adversely affect the opportunities of candidates from disadvantaged groups in multiple ways.
   For one, if an algorithm makes selections in an attempt to maximize estimated utilities, the representation of candidates from disadvantaged groups could be significantly low.
   Moreover, candidates from disadvantaged groups might be assigned to lower preferences compared to other groups.
 This latter inequality of opportunity may, in turn, influence candidates'   future educational, employment, and economic opportunities \cite{lecher2021nycschoolbias,laverde2022distance,grover2017brookingsSchoolBias,hannak2017freelance,upturn_help_wanted}, and is the focus of this paper.
    For instance, a study by \cite{laverde2022distance} showed that within the Boston Public Schools' assignment system, white pre-kindergarten students are more frequently placed in higher-performing schools compared to minority students, contradicting the system's intended purpose of ensuring fair access to quality education; see also  \cite{lecher2021nycschoolbias,grover2017brookingsSchoolBias}.
    And, besides disadvantaging candidates from certain groups, biases in utilities also reduce the utility for institutions, who may be assigned candidates with a low true utility but a high {estimated} utility.
    To reduce the adverse effects of such biases, constraints that increase the representation of underprivileged candidates, anti-bias training that aims to reduce bias in evaluations, and anonymized evaluations that aim to blind the evaluator to the socially salient attributes of candidates have been proposed \cite{BlindAuditions2000, BarbaraAscription2000, Sowell_2008,gawande2010checklist, Bohnet_2016,zestcott2016examining,agarwal2020sway}.
  A  line of work in the algorithmic fairness literature \cite{KleinbergR18,celis2020interventions, EmelianovGGL20,celis2021longterm} studies the effectiveness of {\em group-wise} constraints that require at least a certain number of candidates from disadvantaged groups to be selected, in the special case of the problem with only {\em one} institution (and, hence, candidates' preferences are vacuous). 
   For instance, \cite{KleinbergR18} considers a model of bias where candidates are divided into two groups: the advantaged group $G_1$ and the disadvantaged group $G_2$.
    For an {\em unknown} $0<\beta\leq 1$,  the estimated utility of each disadvantaged candidate  $i$ is $\hu_i = \beta\cdot u_i$, and that of each advantaged candidate $j$ is $\hu_j = u_j.$
\cite{KleinbergR18,celis2020interventions} show that when the utilities are i.i.d., then requiring a proportional number of candidates from both advantaged and disadvantaged groups to be selected improves the latent utility of the assignment. 
    However, these results do not generalize to the case when $p>1$. 
    Simple examples show that just ensuring group representation in assignments doesn't guarantee that candidates from each group will be selected for their top choice institution. Specifically, disadvantaged groups may see lower selection rates for their preferred institutions due to biases.
We discuss additional related works in detail in \cref{sec:related_works}.

\paragraph{Our contributions.}
 We study the question: {To what extent can we achieve simultaneous goals of preference fairness, representational fairness, and utility maximization in the multi-institution selection problem with preferences in the presence of biases?}  
We assume that the true utilities and preferences are i.i.d. from distributions $\cD$ and $\cL$ respectively; see \cref{sec:model}.
We consider preference-based fairness metric $\sP$ 
which is the expectation of the ratio of the fraction of candidates from the disadvantaged group ($G_2$) that get assigned to their most preferred institution to the analogous fraction for $G_1$ (\cref{eq:preference}). 
To measure representational fairness, we consider the metric $\sR$ which
is the expectation of the ratio of the fraction of candidates in $G_2$ that are selected to that of $G_1$ (\cref{eq:representation}).
To measure utility, we consider a utility ratio $\sU$, which is defined as the expected value of the ratio of the true utility achieved in the presence of bias to that achieved without any bias (\cref{eq:utility}). 

Our first result studies the performance of the Gale-Shapley algorithm and provides upper bounds for $\sU$ and $\sP$ as a function of $\beta$; see \cref{thm:specialcase} and \cref{sec:effectofbias} for extensions.
When the distribution \(\mathcal{D}\) is uniform on \([0,1]\) and \(\beta\) approaches 0, both preference-based and representational fairness decrease to 0, while the utility ratio approaches \(\frac{2}{3}\).

We consider a type of representational constraint: institution-wise constraints that require a proportional number of candidates from each group to be matched to \textit{each} institution.
The algorithm \ouralgo{} (\cref{alg:GSTwoCopies}) then outputs an assignment 
that is group-wise stable 
subject to satisfying these institution-wise constraints.
Our second result is that \ouralgo{} has $\sP\approx 1$, $\sR \approx 1$, $\sU\approx 1$ (\cref{thm:institutionWiseConstraints}).
This result holds for any distributions $\cD$ and $\cL$.
The proofs of both our results rely on a novel Lipschitz property of the Gale-Shapley algorithm and non-trivial concentration and scaling results for the total number of candidates who are assigned their top choice by it; see~Section~\ref{sec:lipschitz} for details.

  We provide empirical validation of our theoretical results on synthetic and real-world datasets in \cref{sec:empirical} and \cref{sec:additional:empirical:relaxedbounds,sec:additional:empirical:noise,sec:additional:empirical:specialcase} to demonstrate their robustness under deviations from assumptions.
    In \cref{sec:empirical}, we study the performance of $\Ainst$ on real-world data from the 2009 IIT-JEE.
    In \cref{sec:empirical}, we evaluate $\Ainst$ on a synthetic dataset when the preferences of the two groups are drawn from different distributions. 
In \cref{sec:additional:empirical:noise}, we evaluate the performance of $\Ainst$ under different models of bias.
In all simulations, \ouralgo{} maintains a high preference-based fairness and utility ratio compared to the baselines.
For instance, for the IIT-JEE dataset, we observe that \ouralgo{} achieves significantly higher preference-based fairness ($\geq$0.9) than group-wise constraints ($\leq$0.50) (\cref{fig:simulation:real_world_data}). 
In \cref{sec:additional:empirical:relaxedbounds}, we also study the performance of institution-wise constraints under relaxed bounds.
Finally, in \cref{sec:additional:empirical:specialcase}, we show that \cref{thm:specialcase} is robust for different utility and bias models.

      \section{Other Related Work}\label{sec:related_works}

          The game-theoretic analysis of assignment and selection problems began with the foundational work by \cite{galeshapley1962}. Since then, a substantial body of literature, including \cite{roth1985common,roth1989incompleteinfo,roth1999redesign,das2005two,Rastegari2013MatchingPartialInformation,liu2014stable,aziz2016stable,liu2020matchinbandit,liu2021matching,yifei2022matching}, has expanded on this topic, as detailed in overviews like \cite{roth1992Handbook,manlove2013algorithmics}.
This research examines various aspects of assignment problems. It explores game-theoretic properties such as stability and truthfulness  \cite{roth1985common,roth1999redesign}, analyzes these properties under conditions of incomplete or uncertain information \cite{roth1989incompleteinfo,Rastegari2013MatchingPartialInformation,liu2014stable,aziz2016stable}, and uses machine learning tools to impute preferences and utilities \cite{das2005two,liu2020matchinbandit,liu2021matching,yifei2022matching}.

\smallskip
       Several works empirically study biases in algorithmic assignment/selection systems used in the real world, from centralized admissions systems to online labor markets \cite{grover2017brookingsSchoolBias,hannak2017freelance,lecher2021nycschoolbias,laverde2022distance}.
        \cite{remi2022statistical} complement these empirical studies with theoretical analysis: 
            They consider a model of bias utilities of candidates in the disadvantaged group has a higher noise than the utilities of candidates in the advantaged group and show that, somewhat surprisingly, increasing the noise for the disadvantaged candidates worsens the outcomes for both groups by reducing the probability that a candidate is assigned their first preference.
            In comparison, we consider a different model of bias (\cref{sec:model}) and bound the effect of bias on representational and preference-based fairness $\sP$.
            Our result implies a bound on the ratio of the probability that a disadvantaged candidate gets their first preference compared to an advantaged candidate.
                       However, unlike \cite{remi2022statistical}, we also demonstrate that institution-wise representational constraints can mitigate the adverse effect of bias (\cref{thm:institutionWiseConstraints,sec:empirical}).
            Motivated by these, several works design algorithms for finding assignments subject to different types of fairness constraints  \cite{Chierichetti0LV19,biswas2021toward,freeman2021two,Karni2022fairmatching}.
          {Notably, the algorithm of \cite{baswana2019centralized} has been implemented in an Indian context.}
            We focus on mitigating the adverse effects of bias in the centralized selection problem (\cref{thm:institutionWiseConstraints}).

        {\paragraph{The effect of representational constraints.}
            Several works study fundamental algorithmic tasks when the inputs may be biased.
            Examples include, subset selection \cite{KleinbergR18,EmelianovGGL20,salem2020closing,celis2021longterm,garg2021standardized,mehrotra2022intersectional,mehrotra2023submodular,boehmer2023multiwinner,MehrotraV23}, ranking \cite{celis2020interventions}, and classification \cite{blum2020recovering}. 
            Subset selection and ranking are particularly relevant to our study.

\cite{KleinbergR18} proposes a mathematical framework for understanding bias and demonstrates that implementing the Rooney Rule—a type of group-wise constraint \cite{passariello-implicit-rooney-wsj}—enhances the latent utility of selected subsets. \cite{EmelianovGGL20} explore a bias model where estimated utilities are noisier for disadvantaged candidates, showing positive impacts of group-wise constraints on latent utility. 
 Additionally, \cite{boehmer2023multiwinner} analyzes the benefits of group-wise constraints in multi-winner elections—a form of subset selection with submodular objectives based on multiple preference lists.

These studies explore various dimensions of group-wise constraints' effectiveness. For instance, \cite{salem2020closing} focus on online selection scenarios, \cite{celis2021longterm} investigate the long-term effects of biases on evaluators, \cite{mehrotra2022intersectional} broaden \cite{KleinbergR18}'s model to intersecting protected groups, and \cite{mehrotra2023submodular} adapt it to submodular objectives in recommendation systems. Our work builds on \cite{KleinbergR18}'s bias model but diverges by considering multiple institutions and focusing on ensuring fairness relative to candidates' preferences, where group-wise constraints alone are insufficient.

\smallskip
Ranking, which requires ordering candidates where earlier positions are more desired, can be regarded as a specialized form of assignment. 
Each rank position acts as an institution with a single capacity, and all candidates share identical preferences for earlier ranks over later ones. 

\smallskip
\cite{celis2020interventions} extend \cite{KleinbergR18}'s insights to ranking by showing that maintaining proportional group representation at each rank ensures near-optimal latent utility. However, the institution-specific constraints we propose differ significantly from the prefix-based constraints of \cite{celis2020interventions}, which assume uniform candidate preferences and are not applicable in contexts like ours where institutions vary in capacity.
In addition, our study incorporates preference-based fairness, a crucial element in the settings we aim to address (\cref{thm:institutionWiseConstraints}).

\paragraph{Matchings under distributional constraints.}
Several works study matchings under distributional constraints; see
\cite{Goto,KAMADA2017107,Kojima1} and the references therein.
Both–the works on distributional constraints in matching and this work–consider the task of computing a centralized assignment between two sets of agents (denoting, e.g., individuals, institutions, or items).
However, there are several differences:
Works on distributional constraints in matching consider the formulation where agents on both sides have preferences, whereas we consider the problem where agents on one side (denoting candidates) have individual preferences and the agents on the other side (denoting institutions) have certain (common) utilities for the individuals and their goal is utility maximization.
Our motivation comes from centralized admission contexts such as the admissions administered by the Joint Seat Allocation Authority in India where institutions share a common utility, e.g., scores in a standardized exam, and the goal of a centralized body is to match individuals to institutions to maximize the (total) utility of the institutions while respecting the preferences of the individuals as far as possible.
Moreover, the constraints considered by works in matching and mechanism design are similar to the ones considered in this work: the constraints are lower bounds or upper bounds on the number of individuals that can be matched to one or a set of institutions. The key difference is that the constraints in these works apply to all individuals, whereas we consider different constraints for candidates in different socially salient groups (to counter bias against certain groups).
To the best of our knowledge, unlike this work, works on distributional constraints in matchings do not consider biases in the preferences and/or utilities.

\paragraph{Other generalizations of stable assignments.}
There has been much work on generalizations of stable matchings, including weighted stable matchings and stable matchings in non-bipartite settings (see e.g., \cite{VANDEVATE1989147, IRVING1985577, Rothblum1992CharacterizationOS}). In the context of bias, 
\cite{remi2022statistical} considered a stable assignment setting with two groups of candidates and two institutions where both sides have preferences. In this model, the correlation between the two utility values assigned to a candidate by the institutions depends on the group to which the candidate belongs. However, the distribution of the utility of a candidate for a specific institution remains the same for both groups. In contrast,  the utilities of a candidate for different institutions remain the same in our model, and hence, are always perfectly correlated. Instead, the distribution of the utility of a candidate is group-specific -- for one of the groups, the utility values get scaled down by the bias parameter $\beta$.

\section{Model}\label{sec:model}
 Here we present the centralized selection problem, the model of bias, the model for utilities and preferences, and the fairness/utility metrics. Each component follows prior works in fairness and/or centralized assignment literature.

  \paragraph{Centralized selection with preferences.} 
       Given $n$ candidates and $p$ institutions with capacities $\mb{k}=(k_1,k_2,\dots,k_p)$ the goal is to select $K:= \sum_{\ell=1}^p k_\ell$ candidates and assign them to the institutions subject to the capacity constraint.
        We consider the centralized setting where each candidate $i$ has a true of {\em latent} utility $u_i \geq 0$ that captures the value generated if candidate $i$ is selected--it does not depend on the institution. 
        Further, each candidate also has a preference list $\sigma_i$ over the set $[p]\coloneqq\inbrace{1,2,\dots,p}$ of institutions.
        A selection $M$ is a partial function from the set of candidates $[n]$ to the set of institutions $[p]$. 
        For any  $M$ and index $\ell \in [p]$, let $M^{-1}(\ell)$ denote the set of candidates assigned to the $\ell$-th institution. 
        We extend this notation to a subset of $S \subseteq [p]$ of institutions -- $M^{-1}(S)$ denotes the set of candidates assigned to an institution in $S$. 
        One of the goals of the selection problem is to find an $M$ that maximizes the latent utility $\sum_{i\in M^{-1}([p])} u_i$ of the candidates assigned to an institution. %
        another desirable property is stability: consider two candidates $i$ and $j$ with $i$ having higher utility, then $M(i)$ is preferred over $M(j)$ in $\sigma_i$. 
       The Gale-Shapley algorithm is widely used for finding such an $M$ \cite{galeshapley1962,roth1999redesign} as they have many desirable properties, including latent-utility maximization and stability \cite{roth1985common,roth1992Handbook}.

{
\begin{remark}
This model of centralized selection with preferences mirrors several real-world admission systems, where each student is represented by a single utility value and schools rank these values to manage the complexities of large-scale admission processes.
For example, 
the Joint Entrance Examination (JEE Main), overseen by India's National Testing Agency, is among the most prestigious annual college entrance exams globally. In 2023, it drew over 1.17 million candidates. Candidates are evaluated in Mathematics, Physics, and Chemistry, with their scores combined into a single ranking for all participating institutions. For additional information, refer to the JEE Main FAQ \cite{JEE_Main_FAQ}.
The {\em Civil Services Examination in India} is also an annual test for entering around 15 civil services. Attracting over 500,000 applicants annually, the exam has ten sections. Scores from these sections are merged into one overall score. For more details, see  \cite{wiki:Civil_Services_Exam}.
China's Gaokao, an annual standardized test for undergraduate admissions, attracted over 12.9 million candidates in 2023. Students are assessed in six subjects, with their total score calculated as a weighted sum of these scores. Applicants select universities based on this score and their regional preferences. More details are available in  \cite{doi:10.1086/689773}.
\end{remark}
}

\paragraph{Model of bias in utilities.}
       We study the model introduced by \cite{KleinbergR18}.
     In this model, the candidates are divided into two groups: the advantaged group $G_1$ and the disadvantaged group $G_2$.
     The estimated utilities in this model are parameterized by an {\em unknown} bias parameter $0<\beta\leq 1$: given $\beta$, the estimated utility $\hu_i$ for a disadvantaged candidate is $\beta$-times their latent utility, i.e., $\hu_i=\beta\cdot u_i$, and that of an advantaged candidate $j$ is the same as their latent utility, i.e., $\hu_i=u_i$.

     The motivation to consider  $\beta<1$ and, hence, $\hat{u}_i\leq u_i$, comes from contexts (discussed in the introduction) where the estimated utility $\hat{u}_i$ of a candidate in the disadvantaged group systematically underestimates their true utility $u_i$.
     That said, even in this bias model, one can allow for $\hat{u}_i \geq u_i$ by setting $\beta \geq 1$. Our results continue to hold when $\beta \geq 1$. 
Moreover, in our empirical results, we consider the setting where $\beta$ may have some noise; see \ref{sec:additional:empirical:betanoise}.
The motivation to assume $\beta$ is unknown comes from the fact that the analysis is focused on the ``one-round'' setting of the selection problems. 
Finally,  the bias model assumes that the group identities are known and are reported truthfully. 
  This model is easily generalized to multiple disjoint groups by introducing a bias parameter for each group \cite{celis2020interventions}.

    \paragraph{Generative model of utilities and preferences.}
     Following \cite{KleinbergR18,celis2020interventions,EmelianovGGL20}, we let the latent utility $u_i$ of each candidate $i$ be drawn some distribution $\cD$ independent of all other candidates.
     Further, we assume that the preference list $\sigma_i$ of each candidate $i$ is drawn independently from some distribution $\cL$ of the set of all preference lists of $p$ institutions.

     I.I.D. utilities encode the fact that there are no systematic differences in utilities across groups.
     Different choices of $\cD$ arise in different contexts: 
        measures of popularity and success are observed to have power-law distributions \cite{clauset2009power}, 
        the percentile of candidates in a population has a uniform distribution on $[0,100]$, 
        and 
        normal distributions model utilities in standardized tests \cite{dorans2002recentering}.

    I.I.D. preference lists encode the assumption that candidates in either group have the same distribution of preferences over institutions.
    This is motivated by contexts where preferences are largely determined by the ``quality'' of the institution:
        For instance, most candidates applying to the JOSAA in India have similar preferences over universities \cite{mind2022IITallotment,verma2022JOSAA}, 
        in Boston and New York Public Schools, most parents prefer high-performing schools over others \cite{laverde2022distance},
       and 
       workers naturally prefer tasks with higher pay per time over tasks with a lower pay per time \cite{hannak2017freelance}. 
  {Note that the i.i.d. assumption allows candidates to have different preferences. For example, if $\cal L$ assigns probability $0.5$ to each of two preference lists $\sigma_1$ and $\sigma_2$ over $[p]$, then roughly half the candidates have preference $\sigma_1$ (and similarly for $\sigma_2$). }

    \paragraph{Utility ratio.}
    For a fixed utility vector $\mb{v}$ and a preference vector $\mb{\sigma}$,  given a selection $M_{\mb{v},\mb{\sigma}}:[n] \rightarrow [p]$, we  define the utility of $M_{\mb{v},\mb{\sigma}}$ as the sum of utilities of all candidates that are assigned to any institution, i.e.,   
    $\mathsf{U}(M_{\mb{v},\mb{\sigma}}):=\sum_{i \in M^{-1}([p])} v_i$.
    Since $\mb{\hat{u}}$ is a scaled-down version of $\mb{u}$, note that the maximum value $\mathsf{U}(\cdot)$ can take is the sum of top $K$ entries in the true utility vector $\mb{u}$; we denote this by $\mathsf{U}^\star(\mb{u})$.
    We measure the performance of an assignment with respect to $\mathsf{U}^\star$.
    Consider an algorithm $\cA$ that, given preferences $\mb{\sigma}$ and utilities $\mb{\hu}$, outputs a selection $M_{\mb{\hu},\mb{\sigma}}$. The utility-ratio of $\cA$ is defined as:
    \begin{equation}\label{eq:utility}
           \textstyle
  \sU_{\cD,\cL}(\cA) 
        \coloneqq 
        \Ex_{\mb{u}\sim \cD,\mb{\sigma}\sim \cL} \frac{\mathsf{U}(M_{\mb{\hat{u}},\mb{\sigma}})}{\mathsf{U}^\star(\mb{u})}. \end{equation} 
       {It is important to note that the algorithm's assignment relies on the {\em estimated} utilities $\mb{\hu}$ of the candidates. However, when measuring the utility ratio, we use the {\em true} utilities of the selected candidates.}

  \paragraph{Fairness metrics.} 
There is a wide body of work on metrics for fairness \cite{fairmlbook}. The two fairness metrics considered here are grounded in the idea of ``proportional representation'', a common and useful notion of fairness in many contexts.
        \textit{Representational fairness} disregards candidate preferences and only considers how many candidates from each group are assigned to at least one institution. 
For group $j \in \{1,2\}$, let $\rho_j$ denote the fraction of candidates in $G_j$ that are selected by $M_{\mb{\hu}, \mb{\sigma}}$.
Representational fairness measures the disparity in the values of $\rho_1$ and $\rho_2$. 
We consider a multiplicative notion of comparing the performance across groups.
For an algorithm $\cA$, its representational fairness is defined as: 
  \begin{equation}\label{eq:representation}
           \textstyle  \sR_{\cD,\cL}(\cA) \coloneqq 
            \Ex_{\mb{u}\sim \cD,\mb{\sigma}\sim \cL} \frac{\min_{j\in \{1,2\}}\rho_j}{\max_{j'\in \{1,2\}} \rho_{j'}}.
            \end{equation}
        This definition is similar in spirit to the fairness metrics used by past works \cite{celis2019classification}: they require the value of some desirable quality to be similar for different groups.
        By definition, $\sR_{\cD,\cL}(\cA)$ is a value between 0 and 1. 
        $\sR_{\cD,\cL}(\cA)$ is close to 1 if the $\cA$ ensures that a proportional number of candidates from each group are assigned to at least one institution.
        The larger the value of $\sR_{\cD,\cL}(\cA)$ is the more ``fair'' $\cA$ is.

   \emph{Preference-based fairness} captures the disparity in the fraction of candidates in each group that are assigned to their ``top'' preferences.
    The definition of top preference is context-dependent: 
        if there are a small number of institutions, as in school admissions, the first preference of students is important \cite{laverde2022distance} (see also \cite{fairExposureAshudeep, celis2018ranking,manning2010introduction}), and if there are a large number of institutions, as in the online labor market, then the relevant notion may be the first $\ell$ preferences for some $1\leq\ell\leq p$.
Let $\pi_j$ denote the fraction of candidates in group $G_j$ that get their first preference. For an algorithm $\cA$, its preference-based fairness is defined as: 

  \begin{equation}\label{eq:preference}
           \textstyle  \sP_{\cD,\cL}(\cA) \coloneqq 
            %
            \Ex_{\mb{u}\sim \cD,\mb{\sigma}\sim \cL} \frac{\min_{j\in \{1,2\}}\pi_j}{\max_{j'\in \{1,2\}} \pi_{j'}}.
            \end{equation}
             $\sP_{\cD,\cL}(\cA)$ is a value between 0 and 1 and large values are preferable. 
These three metrics extend to the setting of multiple disjoint groups by considering the minimum and maximum over all groups.
Similarly, for $\ell \in [p]$, one can consider the performance $\pi_j^{(\ell)}$, which considers the fraction of candidates in group $G_j$ that get an assignment from among their top-$\ell$ choices; and similarly  define $\sP^{(\ell)}_{\cD,\cL}(\cA).$
When $\cD$ and $\cL$ are clear, we drop the subscripts from $\sU_{\cD,\cL},\sR_{\cD,\cL}$, and $\sP_{\cD,\cL}$ and use $\sU$, $\sR$, and $\sP$, $\sP^{(\ell)}$ etc.

\section{Theoretical Results}  
\label{sec:theory-results}
  In this section, we first present theoretical results on the effect of the bias parameter on the fairness and utility metrics of a stable assignment algorithm. Subsequently, we develop an algorithm for enhancing these metrics by adding institutional representation constraints. Finally, we present an overview of the ideas involved in proving these results. 
    
    \paragraph{Performance of a stable assignment in the presence of bias.}
       We consider how stable allocation algorithms that take into account the preferences of participating members perform in the presence of bias. 
       In our special setting, where the preferences of institutions are derived from a centralized evaluation of candidates, there is a unique stable assignment of candidates that can be obtained by the following algorithm: consider the candidates in decreasing order of observed utilities. 
        When considering a candidate, assign it to its most preferred institution that still has an available spot (see~\Cref{def:stable} for the definition of a stable assignment, and \cref{cl:stableassignment} for the uniqueness of such an assignment).   This algorithm, denoted $\Ast$, is formally described in~\Cref{alg:GSLatUtil}. When there is no bias, i.e. $\beta = 1$, it can be shown that $\Ast$ has near-optimal representational and preference-bases fairness, and near-optimal utility (see~\Cref{cor:betaone}). Our first result shows that the performance of $\Ast$ deteriorates as $\beta$ decreases:

\begin{restatable}{theorem}{effectofbiasspecial}
            \label{thm:specialcase}
            Consider an instance where the utilities of the candidates are drawn from the uniform distribution on $[0,1]$ and the distribution over preferences is arbitrary. Assume $n_1 = n_2=K$. Then,    $\sP(\Ast) \leq \beta + O \left( \frac{p\sqrt{\log n}}{{\sqrt{n}}} \right),$
            $\sR(\Ast) = \beta \pm O \left( \frac{\sqrt{\log n}}{{\sqrt{n}}} \right),$
            and 
            $\sU(\Ast) = \frac{2}{3} + \frac{  4 \beta}{3(\beta+1)^2} 
            \pm O \left( \frac{\sqrt{\log n}}{{\sqrt{n}}} \right).$
        \end{restatable}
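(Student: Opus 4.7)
The plan is to analyze $\Ast$ through the order in which it processes candidates (i.e., decreasing observed utility). Since the observed utilities of $G_2$ lie in $[0,\beta]$ while those of $G_1$ lie in $[0,1]$, the top of the sorted order is (with high probability, for large $n$) an \emph{early region} consisting solely of the $\approx K(1-\beta)$ candidates from $G_1$ with $u>\beta$, followed by a \emph{mixed region} on $[0,\beta]$ in which the two groups appear at relative density $\beta:1$ (there are $\approx K\beta$ candidates from $G_1$ and $K$ from $G_2$ with observed utility in this range). The combined CDF of observed utilities on $[0,\beta]$ is $F(x)=(1+\beta)x/(2\beta)$, so standard concentration of uniform order statistics (e.g.\ the Dvoretzky--Kiefer--Wolfowitz inequality) yields that the selection threshold $\tau$---the $K$-th largest observed utility---equals $\beta/(1+\beta)\pm O(\sqrt{\log n/n})$ with high probability.

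The $\sR$ and $\sU$ bounds then follow routinely. The selected fractions are $\rho_1=1-\tau=1/(1+\beta)$ and $\rho_2=1-\tau/\beta=\beta/(1+\beta)$, so $\sR=\rho_2/\rho_1=\beta\pm O(\sqrt{\log n/n})$. For $\sU$, a selected $G_1$ candidate has true utility uniform on $[\tau,1]$ (mean $(1+\tau)/2$), while a selected $G_2$ candidate has true utility uniform on $[\tau/\beta,1]$ (mean $(1+\tau/\beta)/2$); summing the contributions, normalizing by $\mathsf{U}^\star\approx 3K/4$, and substituting $\tau=\beta/(1+\beta)$ yields $\sU=\tfrac{2}{3}+\tfrac{4\beta}{3(\beta+1)^2}$ up to a Hoeffding error of $O(\sqrt{\log n/n})$.

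The preference bound is the main content. For each institution $\ell$, let $q_\ell$ denote its $\cL$-probability of being a first choice, and let $T_\ell$ be the rank at which $\ell$ receives its $k_\ell$-th first-preference match (so $T_\ell\approx k_\ell/q_\ell$ by concentration). Partition institutions into $S_1=\{\ell:T_\ell\le K(1-\beta)\}$ (filled entirely within the early region) and $S_2$ (the rest). No $G_2$ candidate can obtain first preference at any $\ell\in S_1$. For $\ell\in S_2$, about $K(1-\beta)q_\ell$ of the $k_\ell$ first-preference matches to $\ell$ arrive in the early region (all $G_1$), while the remaining $k_\ell-K(1-\beta)q_\ell$ are split in ratio $\beta:1$ across groups within the mixed region. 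Therefore
\begin{align*}
K\pi_2 &= \tfrac{1}{1+\beta}\sum_{\ell\in S_2}\bigl(k_\ell-K(1-\beta)q_\ell\bigr),\\
K\pi_1 &= \sum_{\ell\in S_1}k_\ell+K(1-\beta)\sum_{\ell\in S_2}q_\ell+\tfrac{\beta}{1+\beta}\sum_{\ell\in S_2}\bigl(k_\ell-K(1-\beta)q_\ell\bigr).
\end{align*}
When $S_1=\emptyset$, the identities $\sum_\ell k_\ell=K$ and $\sum_\ell q_\ell=1$ collapse these to $\pi_2/\pi_1=\beta$ exactly; otherwise, transferring $\ell$ from $S_2$ to $S_1$ only increases $\pi_1$ while decreasing $\pi_2$, so $\pi_2/\pi_1\le\beta$ in general.

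The main obstacle is promoting these ``mean-field'' identities to high-probability statements with the claimed $O(p\sqrt{\log n/n})$ error. The factor of $p$ comes from a union bound over the $p$ fill times $T_\ell$ and empirical first-preference counts: each fluctuates at scale $\sqrt{n\log n}$ by Chernoff/martingale concentration, which translates to $O(\sqrt{\log n/n})$ in the normalized fractions. The delicate case is an institution whose $T_\ell$ straddles the $S_1/S_2$ boundary at $K(1-\beta)$; handling that boundary carefully (and the corresponding worst case where $q_\ell$ is very small, so that $T_\ell$ has large relative fluctuations) is what dictates the shape of the error term.
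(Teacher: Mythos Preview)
Your treatment of $\sR$ and $\sU$ is correct and essentially matches the paper's: identify the threshold $\tau=\beta/(1+\beta)$ (the paper writes it as $\beta-\Delta$), use DKW/Hoeffding to concentrate the selected counts around $K/(1+\beta)$ and $K\beta/(1+\beta)$, and evaluate order statistics for the utility ratio.

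Your $\sP$ argument, however, has a real gap. Your mean-field formulas satisfy $K\pi_1+K\pi_2=K$ identically (check: summing your two displayed expressions and using $\sum k_\ell=K$, $\sum q_\ell=1$ gives $K$), i.e.\ they implicitly assume \emph{every} selected candidate gets first choice. This is false whenever cascades occur. Concretely, take $p=3$, $k_1=1$, $k_2=1$, $k_3=K{-}2$, and let every candidate have first choice $1$ and second choice $2$. Then $T_1=1$ puts $1\in S_1$, while $T_2=T_3=\infty$ puts $2,3\in S_2$; your formula gives $K\pi_2=(K{-}1)/(1+\beta)$ and $K\pi_1=1+\beta(K{-}1)/(1+\beta)$, so $\pi_2/\pi_1\to 1/\beta>1$ as $K\to\infty$. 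In reality only one candidate (rank $1$, in $G_1$) gets first choice, so $\pi_2=0$. The problem is that for $\ell\in S_2$ you equate the number of first-preference matches to $\ell$ with $k_\ell$, ignoring that $\ell$ may fill from overflow of earlier institutions long before $k_\ell$ first-choice requests arrive. Your ``transferring $\ell$ from $S_2$ to $S_1$'' monotonicity claim inherits this error.

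The paper avoids per-institution bookkeeping entirely. It conditions on the sizes $g_1=|A|$, $g_1'=|B\cap G_1|$, $g_2'=|B\cap G_2|$ (your early and mixed regions), observes that $\Ast$'s output depends only on ranks and preferences and not on group labels, and hence the group labels within $B$ are exchangeable: $|\pi(B)\cap G_2|\approx\frac{g_2'}{g_1'+g_2'}\,|\pi(B)|$ by a hypergeometric bound. The only structural fact needed is the monotonicity $\Pr[\text{rank }i\text{ gets first choice}]\ge\Pr[\text{rank }i{+}1\text{ gets first choice}]$, which gives $\E[\pi(A)]/|A|\ge\E[\pi(B)]/|B|$; combining with the exchangeability step yields $\pi_2/\pi_1\le\frac{g_2'}{g_1+g_1'}\approx\beta$. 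The $p$ in the error term comes not from a union bound over $p$ fill times but from a Lipschitz property of $\Ast$ (changing one preference list alters at most $p{+}1$ assignments), which feeds into McDiarmid's inequality to concentrate $|\pi(A)|$ and $|\pi(B)|$.
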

\noindent
The above result shows that when the bias parameter $\beta$ is close to zero, then the fairness metrics $\sR(\Ast)$ and $\sU(\Ast)$  deteriorate substantially, and there is a significant drop in the utility of the selected candidates as well. It is also worth noting that the error terms in the theorem decay as   $p/\sqrt{n}$ and are negligible for reasonable values of $n$ and $p$. 
The proof of this result appears in Section~\ref{sec:effectofbias} and develops novel results on Lipschitz and concentration properties of a stable assignment in our setting. It is also worth noting that the proof first shows a high probability bound on the utility and fairness metrics and then uses this to bound the expectation of the respective metrics.  We present an outline of the proof at the end of this section.

The theorem extends to arbitrary values of $n_1, n_2$ and $K$ (see~\Cref{thm:effectOfBias}  for details). For example, when $n_1 = n_2$ and $K = \alpha n_1$ for some parameter $\alpha \geq 1-\beta$, then both $\sR(\Ast)$ and $\sP(\Ast)$ are at most $\frac{\alpha - (1-\beta)}{\alpha \beta + (1-\beta) }$. This result generalizes significantly to arbitrary log-concave distributions of utilities that include (truncated) Gaussian and Pareto (see Section~\ref{sec:logconcave} and also empirical results in Section~\ref{sec:paretogaussiantest4.1}). It is also worth noting that such a result does not hold for arbitrary distributions of utilities (see~\Cref{ex:logconcave}). We also empirically validate the robustness of ~\Cref{thm:specialcase} on other models of bias (see Sections~\ref{sec:othermodels1-4.1}, \ref{sec:othermodels2-4.1}).

   \paragraph{Enhancing preference-based fairness via institution-wise representational constraints.}
Our second result shows an algorithm that can enhance preference-based fairness while ensuring near-optimal utility.
   \begin{restatable}
   {theorem}{thminstitutionWiseConstraints}
    \label{thm:institutionWiseConstraints}
            Let $\eta_1, \eta_2, \eta_3 > 0$ be parameters  such that $|G_j| \geq \eta_1 n$ for each $j \in \{1,2\}$, $K \geq \eta_2 n$,  and $k_\ell \geq \eta_3 K$ for each $\ell \in [p]$. 
            There is an allocation algorithm $\Ainst$ (stated formally in~\Cref{alg:GSTwoCopies}) such that, 
            for any distribution of utilities and preference lists, and bias parameter $\beta$, 
               $ \sP(\Ainst) \geq 1 - O\left(\frac{p \sqrt{ \log K}}{\eta_1 \eta_3 \sqrt{K}} \right), $
         $  \textstyle 
                \sU(\ouralgo{})\ 
                \geq \ 
                1 - O\left(\frac{\sqrt{\log n}}{\sqrt{\eta_2 n}} \right)$ and
                $\sR(\ouralgo{}) = 1.$
        \end{restatable}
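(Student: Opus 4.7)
The plan is to analyze $\Ainst$, which I interpret as running a Gale-Shapley-style subroutine separately on each group $G_j$ with per-institution quotas $k_\ell^{(j)} = k_\ell \cdot |G_j|/n$, using the within-group ranking of the observed utilities $\hu_i$ as the priority order. Three bounds must be proved: $\sR(\Ainst)=1$ is essentially immediate, $\sU(\Ainst)\approx 1$ follows from standard concentration of order statistics, and $\sP(\Ainst)\approx 1$ is the main difficulty and will require the novel Lipschitz property of Gale-Shapley promised in Section~\ref{sec:lipschitz}.

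For representational fairness, the institution-wise quotas force $|M^{-1}(\ell) \cap G_j| = k_\ell |G_j|/n$ exactly (up to at most one unit of rounding per institution per group). Hence $\rho_1 = \rho_2 = K/n$ for every realization, and $\sR(\Ainst)=1$ deterministically. For the utility ratio, within each group the observed utility is a positive scaling of the true utility ($\hu_i = u_i$ on $G_1$ and $\hu_i = \beta u_i$ on $G_2$), so the subroutine within $G_j$ selects exactly the top-$K_j$ candidates by \emph{true} utility in $G_j$, where $K_j = K |G_j|/n$. The unconstrained optimum $\mathsf{U}^\star(\mb u)$ picks the top-$K$ true utilities across all $n$ candidates. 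Since the $u_i$ are i.i.d.\ from $\cD$ regardless of group, the per-group count in the unconstrained top-$K$ is close to $K_j$ with high probability; Dvoretzky--Kiefer--Wolfowitz applied to the two empirical CDFs bounds the discrepancy by $O(\sqrt{n \log n})$ candidates, which translates to a utility loss of comparable order. Normalizing by $\mathsf{U}^\star(\mb u) = \Theta(\eta_2 n)$ yields the stated $1 - O(\sqrt{\log n / (\eta_2 n)})$ bound.

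For preference-based fairness, within each group $\Ainst$ runs a Gale-Shapley instance of size $|G_j|$ with capacities $(k_\ell |G_j|/n)_\ell$ and i.i.d.\ preference lists drawn from $\cL$. Both sub-instances share the same aspect ratio $n/K$ and the same $\cL$, so the fraction $\pi_j$ of first-preference assignments should concentrate around a common value $\pi^\star$ that depends only on $\cL$ and on the normalized capacity vector $\mb k / K$. The argument proceeds in two steps. First, invoke the Lipschitz property of Section~\ref{sec:lipschitz}: modifying a single candidate's preference list or utility perturbs the number of first-preference assignments in the output by at most $O(p)$. McDiarmid's bounded-differences inequality then gives $|\pi_j - \E[\pi_j]| = O\bigl(p\sqrt{\log |G_j|}/\sqrt{|G_j|}\bigr)$ with high probability. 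Second, argue $\E[\pi_1] = \E[\pi_2]$ by distributional symmetry plus scaling: both sub-instances are identically distributed up to their sizes, and $\pi^\star$ is an intensive quantity of the instance. Forming the ratio $\min_j \pi_j / \max_j \pi_j$, lower-bounding the denominator, and tracking how $\eta_1$ enters through the per-group effective problem size $|G_j| \geq \eta_1 n$ and $\eta_3$ enters through the smallest relative institutional capacity, produces the claimed $1 - O\bigl(p \sqrt{\log K}/(\eta_1 \eta_3 \sqrt{K})\bigr)$ bound. A union bound over the two groups and a trivial passage from high-probability to expectation (since $\pi_j \in [0,1]$) complete the argument.

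The main obstacle is the Lipschitz step together with the companion scaling claim $\E[\pi_1] = \E[\pi_2]$. A naive tracing argument after swapping one candidate's preference can cascade through many rejections and re-proposals in Gale-Shapley, potentially perturbing $\Omega(n)$ assignments; the Lipschitz lemma must exploit the specific structure of the setting (a single common utility ranking, i.i.d.\ preferences, bounded institutional capacities) to truncate this cascade at length $O(p)$. The scaling invariance of $\E[\pi_j]$ similarly needs care: that the mean first-preference fraction converges to a size-independent limit is a law-of-large-numbers statement, and the most natural route is to apply the same Lipschitz estimate in a coupling between instances of sizes $|G_1|$ and $|G_2|$. Once these two ingredients are in hand, the remaining steps are routine concentration and algebra.
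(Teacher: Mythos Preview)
Your overall architecture matches the paper's proof closely: $\sR=1$ is immediate from the quotas; $\sU$ follows from concentration of the per-group count among the top-$K$ true utilities (the paper conditions on the utility values and uses Hoeffding for negatively correlated indicators rather than DKW, but either works); and $\sP$ is handled by combining a Lipschitz-based McDiarmid concentration for $\pi(M_j)$ with a comparison of the two sub-instances, then lower-bounding the denominator via $\E[\pi(M)]\ge\min_\ell k_\ell\ge\eta_3 K$.

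The one point where your sketch is not quite right is the assertion that $\E[\pi_1]=\E[\pi_2]$ exactly by ``distributional symmetry plus scaling.'' The paper explicitly refutes this: \Cref{ex:expectation} gives a small instance where doubling both capacities and number of candidates changes the expected \emph{fraction} of first-choice assignments. So $\pi^\star$ is \emph{not} an intensive quantity of the serial-dictatorship process. What the paper proves instead is the approximate statement $\bigl|\E[\pi(M)]/a-\E[\pi(M_a)]\bigr|=O(p\sqrt{K\log K})$ (\Cref{lem:gap}), which suffices. Your final paragraph correctly anticipates that a coupling is needed here, but the coupling the paper uses (sample $n/a$ candidates uniformly from the size-$n$ instance, so the restriction of the big stable assignment is itself stable for the sampled instance with \emph{random} per-institution capacities) requires a \emph{second} Lipschitz property beyond the preference-swap one you invoke: perturbing one institution's capacity by one unit changes at most $p$ assignments (\Cref{lem:lipschitzs2}). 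This is what bridges the random-capacity coupled instance to the deterministic-capacity target instance $\cI_a$. Once you have both Lipschitz lemmas, the rest of your plan goes through exactly as you describe.
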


     \noindent   
        The error bounds in the theorem are negligible in typical scenarios. 
        Consider for example a setting where $|G_1| = |G_2| = n/2, K = cn$ for some constant $c$, and all $p$ institutions have the same capacity. Then $\eta_1 = 1/2, \eta_2 = c$ and $\eta_3 = \sfrac{1}{p}$. Thus the error bounds for $\sP(\Ainst)$ and $\sU(\Ainst)$ decay as $\sfrac{p^2}{\sqrt{K}}$ and $\sfrac{1}{\sqrt{n}}$ respectively.
        The proof of this result appears in Section~\ref{sec:proofof:thm:institutionWiseConstraints}. As in the case of~\Cref{thm:specialcase}, we show a stronger statement that the fairness and the utility metrics of $\Ainst$ are close to 1 with high probability. 
        The algorithm~$\Ainst$ circumvents the strong lower bounds mentioned in~\Cref{thm:specialcase} arising due to the bias parameter $\beta$. 
        It is worth emphasizing that the results in~\Cref{thm:institutionWiseConstraints} hold for arbitrary distributions $\cD$ and $\cL$ over utilities and preferences, and the stated bounds in the theorem do not depend on the bias parameter. The theorem does 
 require that the distribution of the utilities and the preference lists of candidates are identical and independent. Removing this i.i.d. assumption seems challenging and we leave it as an interesting avenue for future work. 
         We now motivate the ideas in the algorithm~$\Ainst$ {by arguing that several natural algorithms do not satisfy the conclusions of~\Cref{thm:institutionWiseConstraints}: 

\paragraph{(i) Algorithm $\Ainst$:}
As~\Cref{thm:specialcase} shows, running the algorithm $\Ast$ directly on the original instance could lead to a significant loss in utility and preference-based fairness. A natural fix would be first to learn (and remove the effect of) the bias parameter $\beta$ and then apply $\Ast$ on the ``corrected'' input.
We give a detailed discussion in Section~\ref{sec:discussionAinst} on why such an approach is highly impractical. 

\paragraph{(ii) Algorithm $\Ainst$ with proportional group-wise representational constraints:} This approach is rooted in in prior works \cite{KleinbergR18, celis2020interventions} on the single institution setting.  
For instance, we could run $\Ast$ on the set of candidates selected from the top  $\sfrac{|G_j|}{n}$ fraction from each group $G_j$ (see~\Cref{alg:groupwiseconstraints} for a formal description). 
While this algorithm can be shown to enhance representational fairness while obtaining near-optimal utility, it may have low preference-based fairness  (see Section~\ref{sec:discussionAinst}). 

\paragraph{(iii) Algorithms that directly enforce preference-based fairness: }
An approach for generalizing the known results for the single institution setting to multiple institutions while directly enforcing high preference-based fairness would be the following (assume for the sake of simplicity that $|G_1|=|G_2|$) -- each institution $\ell$ considers the candidates that prefer $\ell$ as their first choice. We assign the highest utility candidates from this subset to institution $\ell$ while ensuring that no more than half the capacity at this institution is allocated to a single group (in case some slots are vacant at the end, we can fill them with the highest utility unassigned candidates). 
Such an algorithm $\cA$ would indeed achieve high preference-based fairness, but may not obtain near-optimal utility even when there is no bias -- see Section~\ref{sec:discussionAinst} for an example.  

      We consider a different family of representational constraints: institution-wise constraints.
        These constraints require an adequate number of candidates from each group to be assigned to \textit{each} institution.
Formally, we require that for each institution $\ell \in [p]$,  $\sfrac{|G_j|}{n} \cdot k_\ell$ candidates from each group $G_j$ are assigned to it, where $k_\ell$ is the capacity of the $\ell$-th institution. 
  The algorithm $\ouralgo{}$, which is a strict generalization of the approaches for single institution setting in  \cite{KleinbergR18, celis2020interventions}, satisfies the proportional representational constraints as follows:
            We invoke two independent instantiations of the 
            algorithm $\Ast$ described at the beginning of Section \ref{sec:theory-results}.
            The first instance is run with the capacity of each institution $\ell$ set to $k_{\ell}\cdot \sfrac{n_1}{n}$ and restricting the candidates to $G_1$ only.
            Similarly, the second instance is run with the capacity of each institution $\ell$ set to $k_{\ell}\cdot \sfrac{n_2}{n}$ and restricting the candidates to those from $G_2$ only. 
            A formal description of \ouralgo{} appears in \Cref{alg:GSTwoCopies}. 
It is worth noting that this algorithm does not explicitly enforce preference-based constraints and does not require the knowledge of $\beta$. It is easy to show that $\Ainst$ is group-wise stable, Pareto-efficient, and strategy-proof (see Section~\ref{sec:discussionAinst}). 
The algorithm $\Ainst$ and the results in~\Cref{thm:institutionWiseConstraints} can also be extended to settings with multiple groups and preference-based fairness metric involving a fraction of candidates that are assigned an institution among their top-$\ell$ choices for a parameter $\ell$.

We empirically validate the conclusions of~\Cref{thm:institutionWiseConstraints} on the following robust settings: (i) real-world data from India's centralist IIT admission system, where the i.i.d. assumption on the utilities and the preference lists of candidates may not hold and compare $\Ainst$ with algorithms based on group-wise constraints only (see Section~\ref{sec:empirical}), (ii)   relaxed versions of the institutional constraints where less than $\sfrac{|G_j|}{n_j} \cdot k_\ell$ slots may be reserved in the $\ell$-the institution for each group $G_j$ (see~\Cref{alg:grouprelaxed} for the formal description of such an algorithm and Section~\ref{sec:additional:empirical:relaxedbounds} for related empirical results), (iii) execute $\Ainst$ on data generated using other models of bias (see Sections~\ref{sec:additional:empirical:betanoise}, \ref{sec:additional:empirical:impvar}), and (iv) allow two different distributions over the preferences for candidates from the two respective groups (see Section~\ref{sec:additional:empirical:dispersionpref}). Note that our results cannot be extended directly to settings where institutions have varying utilities for candidates. Indeed, in such settings, the algorithm $\Ast$ (restricted to candidates from the same group) is not well defined. One could replace $\Ast$ with another stable matching algorithm, but such an algorithm may not have both near-optimal utility and preference-based fairness.

       \paragraph{{Outline of the proofs of Theorems \ref{thm:specialcase} and \ref{thm:institutionWiseConstraints}}.}
             We now describe the main ideas in the proofs of~\Cref{thm:specialcase} and~\Cref{thm:institutionWiseConstraints}. 
             It is well known that the algorithm $\Ast$, also known as a {\em serial dictatorship mechanism}, is stable (see e.g. ~\cite{Svensson99}).  However, proving either of these results requires us to delve much deeper into the properties of the algorithm $\Ast$. In particular, we study the Lipschitz properties of the assignment given by $\Ast$ and the concentration properties of this assignment when preference lists are drawn from a distribution. We begin by stating the following concentration bound.
             \begin{lemma}[\bf Informal; see~\Cref{lem:concentrationbounds}]
                 \label{lem:concinformal}
                 Let $G$ be a subset of candidates belonging to the same group. Let $S$ be the subset of selected candidates and $\pi(S)$ denote the candidates in $S$ that receive their top choice. Then, 
                  $ \textstyle \Pr\left[\left| |S \cap G| - \Ex[|S \cap G|] \right| \geq t\right] \leq 2 e^{\sfrac{-2t^2}{|G|}}, $
        and 
        $ \Pr \left[ \left| |\pi(S) \cap G| - \Ex[|\pi(S) \cap G|] \right| \geq t \right] \leq 2 e^{-\sfrac{2t^2}{(p^2K)}}. $
             \end{lemma}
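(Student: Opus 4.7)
The plan is to prove both concentration bounds via McDiarmid's bounded-differences inequality, exploiting the different sources of randomness that control each quantity.

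For the first bound, I would observe that $|S \cap G|$ depends only on the vector of estimated utilities, since $\Ast$ selects the top $K$ candidates purely by estimated utility and the preference lists do not enter this quantity. Because $G$ lies in a single group, its estimated utilities are iid. I would condition on the estimated utilities of candidates outside $G$; conditional on these, $|S \cap G|$ is a function of only the $|G|$ iid random variables in $U_G$, and swapping the value of one of them can move at most one candidate in or out of the top-$K$ set. Thus each bounded-difference constant is $1$, and McDiarmid yields $\Pr[||S\cap G| - \Ex[|S\cap G|\mid U_{G^c}]| \geq t] \leq 2e^{-2t^2/|G|}$. Transferring this to the unconditional mean requires a short additional concentration step for the conditional mean $\Ex[|S\cap G|\mid U_{G^c}]$ as a function of $U_{G^c}$; a second McDiarmid application works, and its contribution is harmless in the regime where $|G|$ is a constant fraction of $n$, which is the regime in which the lemma is applied.

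For the second bound, I would condition on the estimated utilities, which fixes $S$ with $|S| = K$. Then $|\pi(S) \cap G|$ becomes a function of only the $K$ preference lists of the selected candidates, since $\Ast$ ignores the preferences of unselected candidates. The key structural ingredient is a Lipschitz property of serial dictatorship: altering one selected candidate's preference list triggers a cascade of reassignments that affects at most $p$ candidates, because each step of the cascade simply moves some candidate from one institution to another, and such a chain must close once at most $p$ institutions have been touched. Hence each bounded-difference constant is at most $p$, $\sum_i c_i^2 \leq K p^2$, and McDiarmid produces the stated exponent $2t^2/(p^2 K)$.

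The main obstacle is rigorously justifying the cascade-length bound of $p$ in the second step, since a naive accounting would allow arbitrarily long chains of preference-driven reassignments; I would import this fact from the dedicated Lipschitz analysis of $\Ast$ promised in Section~\ref{sec:lipschitz}. A secondary subtlety is the unconditioning step in the first bound: a clean route is to apply McDiarmid a second time to the function $U_{G^c} \mapsto \Ex[|S \cap G| \mid U_{G^c}]$ (whose per-coordinate bounded difference is also at most $1$ by a coupling argument), then union-bound the two deviations, which costs only a factor of two in the tail.
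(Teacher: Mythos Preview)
Your treatment of the second inequality is essentially identical to the paper's: both condition on the observed utilities (fixing $S$), view $|\pi(S)\cap G|$ as a function of the $K$ independent preference lists of the selected candidates, invoke the Lipschitz property of $\Ast$ (\Cref{lem:lipschitzstable1}) to get a per-coordinate bounded difference of $p$, and apply McDiarmid.

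For the first inequality your route diverges from the paper's. The paper does not condition on $U_{G^c}$ at all; instead it writes $|S\cap G|=\sum_{i\in G}X_i$ where $X_i=\ind[i\in S]$, argues that the $X_i$ for $i$ in a single group are \emph{negatively correlated} (by exchangeability of the i.i.d.\ utilities within the group and the fact that selection into the top $K$ behaves like sampling without replacement once one conditions on $|S\cap G_j|$), and then applies Hoeffding's inequality in the form valid for negatively correlated Bernoulli variables. This yields the stated exponent $-2t^2/|G|$ in one step, with no unconditioning needed.

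Your conditioning-plus-two-McDiarmids argument is correct, but note what it actually delivers: the second McDiarmid on $U_{G^c}\mapsto \Ex[|S\cap G|\mid U_{G^c}]$ has $|G^c|=n-|G|$ coordinates, so after the union bound and halving $t$ you end up with an exponent of order $-t^2/n$ rather than $-2t^2/|G|$. You flag this honestly, and you are right that in the paper's regime $|G|=\Theta(n)$ the loss is only a constant factor in the exponent and does not affect any downstream conclusion. Still, the negative-correlation route is both shorter and recovers the lemma exactly as stated; it is worth knowing, since the same trick recurs elsewhere in the paper (e.g.\ in the proof of \Cref{lem:gap}).
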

            We now give an outline of the proof of this result. We can express $|S \cap G|$ as $\sum_{i \in G} X_i$, where $X_i$ are negatively correlated random variables.  Therefore, we can apply Chernoff-type concentration bounds here. However, the quantity $|\pi(S) \cap G|$ is much trickier to handle. 
           Clearly, $|\pi(S) \cap G|$ can be again be written as $\sum_{i \in G} Y_i$, where $Y_i$ is an indicator variable denoting whether the $i$-th candidate in $G$ received the most preferred choice. 
            Unlike the case of $X_i$'s, the variables $Y_{i}$'s may not be independent or negatively correlated. 
            One possibility is to use concentration bounds relying on the Lipschitz property of the assignment (see e.g., \Cref{thm:lipschitz}), i.e.,  changing the preference list of only one candidate
            changes the assignment slightly. 
            However, one can construct examples for general stable matching instances (see~\Cref{ex:counterexample:stable:Lipschitz}), where such a change can dramatically alter the stable assignment.
            Our problem has the additional feature that all the institutions agree on a central ranking of the candidates. 
            For such settings, we show that 
             there is a unique stable assignment (\Cref{cl:stableassignment}), and is given by  algorithm $\Ast$ (formally described in~\Cref{alg:GSLatUtil}). We can compare two runs of $\Ast$ on instances that differ on only one candidate. Let $S$ and $S'$ be the set of candidates selected by $\Ast$ in these two runs. We show that {the size of the set difference between $\pi(S)$ and $\pi(S')$, i.e., $|(\pi(S) \setminus \pi(S')) \cup (\pi(S') \setminus \pi(S))|$ is at most $p+1$} (\Cref{lem:lipschitzstable1}). 
            This Lipschitz property of $\Ast$ allows us to show the second concentration bound mentioned in~\Cref{lem:concinformal}.

            We now give an overview of the proof of~\Cref{thm:specialcase} (and the more general result~\Cref{thm:effectOfBias}). Assume for the sake of simplicity that $|G_1| = |G_2|.$ We first consider $\sR(\Ast)$. Let $S$ be the set of candidates selected by $\Ast$. The metric $\sR(\cA)$ considers the expectation of the ratio $\frac{|S \cap G_2|}{|S \cap G_1|}.$ Using~\Cref{lem:concinformal}, we can closely approximate this by $\frac{\Ex[|S \cap G_2|]}{\Ex[|S \cap G_1|]}.$  We now find the threshold $\Delta \in [0,1]$ such that $S$ is closely approximated by candidates with utility at least $\Delta$. This allows us to estimate both of the desired expectations.  The utility metric $\sU(\Ast)$ can 
            be estimated by evaluating the expected utility of top $\Ex[|S \cap G_j|]$ candidates from each of the groups $G_j$. 
            We now consider $\sP(\Ast)$ which is defined as $\Ex \left[\frac{\pi(S \cap G_2)}{\pi(S \cap G_1)} \right]$ (recall that $\pi(X)$ denotes the number of candidates in a set $X$ that are assigned their top choice by $\Ast$). Again, using the concentration bound in~\Cref{lem:concinformal}, it suffices to bound  $\frac{\Ex[\pi(S \cap G_2)]}{\Ex[\pi(S \cap G_1)]}$.    However, we cannot simply write down a closed form expression for $\Ex[\pi(S \cap G_j)]$. Instead, we show (here $\Delta$ is as defined above):
            \begin{lemma}[\bf Informal; see \cref{cl:AB}]
            \label{lem:informalpref1}
                Let $A$ and $B$ be the set of candidates with observed utility in the range $[\Delta,1]$ and $[0, \Delta]$ respectively. Conditioning on $g_1 := |A|, g_1':= |B \cap G_1|, g_2' :=|B \cap G_2|$,  $\pi(G_1)$ and $\pi(G_2)$ are closely approximated by $\pi(A) + \frac{g_1'}{g_1'+g_2'} \cdot \pi(B)$ and $\frac{g_2'}{g_1'+g_2'} \cdot \pi(B)$ respectively.
            \end{lemma}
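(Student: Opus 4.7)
The plan is to decompose $\pi(G_j)=\pi(A\cap G_j)+\pi(B\cap G_j)$ and exploit a distributional symmetry inside $B$ to extract the proportional split between the two groups.

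First I would verify the key symmetry: conditional on the defining event of $B$, the observed utilities of $G_1\cap B$ and $G_2\cap B$ follow the \emph{same} law, namely $\mathrm{Unif}[0,\Delta]$. For $G_1$ this is the conditional distribution of $\mathrm{Unif}[0,1]$ given $\hu<\Delta$; for $G_2$, whose observed utilities are $\mathrm{Unif}[0,\beta]$, the same holds because the choice of $\Delta$ (so that $\Ex[|A|]=K$) forces $\Delta\le\beta$. Labeling the $g_1'+g_2'$ candidates in $B$ by their processing position (decreasing observed utility within $B$), it follows that conditional on $(g_1,g_1',g_2')$, the assignment of group labels to positions is uniform over all $\binom{g_1'+g_2'}{g_1'}$ subsets, and is independent of the preference lists of candidates in $B$ (which remain i.i.d.\ from $\cL$).

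Second, I would combine this exchangeability with the dynamics of $\Ast$. Whether the candidate at processing position $t$ of $B$ receives their first choice is a function only of $t$, the residual institution capacities left after $A$ is processed, and the preferences of the first $t-1$ candidates in $B$. Crucially, it does \emph{not} depend on the group labels. Writing $Z_t$ for this indicator, $\pi(B)=\sum_t Z_t$ and $\pi(B\cap G_j)=\sum_t Z_t\cdot\ind[\text{position }t\in G_j]$; exchangeability of labels then gives
\[
\Ex\!\bigl[\pi(B\cap G_j)\,\big|\,g_1,g_1',g_2',\pi(B)\bigr]\;=\;\tfrac{g_j'}{g_1'+g_2'}\,\pi(B),
\]
and the resulting hypergeometric-type count concentrates around its mean with deviations $O(\sqrt{(g_1'+g_2')\log n})$. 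Applying the Lipschitz bound \cref{lem:lipschitzstable1} and the concentration bound \cref{lem:concinformal} to $\pi(A)$ controls the $A$-contribution independently; this is lumped onto the $G_1$ side, with any $G_2$-contribution from $A$ absorbed into the $O(\cdot)$ slack of ``closely approximated'' (sufficient for the downstream bound on $\sP(\Ast)$ in \cref{thm:specialcase}).

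The main obstacle is the exchangeability step. Naively, conditioning on $\{\hu_i<\Delta\}$ looks as if it destroys cross-group symmetry, because the marginals of $\hu$ in $G_1$ and $G_2$ differ before conditioning. The clean route is a coupling: sample $g_1'+g_2'$ i.i.d.\ $\mathrm{Unif}[0,\Delta]$ observed utilities and $g_1'+g_2'$ i.i.d.\ preferences, then assign $g_1'$ positions to $G_1$ uniformly at random. This coupled process has the same joint law as the original conditional process within $B$, and is manifestly invariant under any permutation of positions fixing the count pair $(g_1',g_2')$. Once exchangeability is set up this way, propagating it through the deterministic $\Ast$ dynamics and invoking the Lipschitz/concentration machinery from Section~\ref{sec:lipschitz} completes the argument.
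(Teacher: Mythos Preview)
Your exchangeability/coupling mechanism is exactly what the paper uses, but you have applied it to the wrong partition, and as written the argument does not close. You take $\Delta$ to be the selection threshold (chosen so that $\Ex[|A|]=K$), which makes $A$ essentially the selected set and $B$ the unselected remainder. Two things then fail. First, once $A$ has been processed the capacities are exhausted up to $O(\sqrt{n\log n})$ fluctuations, so $\pi(B)\approx 0$ and your proportional split $\tfrac{g_j'}{g_1'+g_2'}\,\pi(B)$ is vacuous. Second, your $A$ contains \emph{all} selected $G_2$ candidates, and inside $A$ the conditional observed-utility laws differ by group ($\mathrm{Unif}[\Delta,1]$ for $G_1$ versus $\mathrm{Unif}[\Delta,\beta]$ for $G_2$), so there is no exchangeability there either. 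Consequently ``absorbing $\pi(A\cap G_2)$ into the $O(\cdot)$ slack'' is not legitimate: under your partition $\pi(A\cap G_2)=\pi(G_2)$ is precisely the quantity you are trying to estimate, not a lower-order error.

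The fix is to move the split point from the selection threshold up to $\beta$. Take $A$ to be the candidates with observed utility in $[\beta,1]$---this set lies entirely in $G_1$, so $\pi(A\cap G_2)=0$ for free---and take $B$ to be the selected candidates with observed utility below $\beta$ (i.e., observed utility in $[\beta-\Delta',\beta]$ for the appropriate $\Delta'$). Inside \emph{this} $B$ both groups genuinely have the same conditional observed-utility law $\mathrm{Unif}[\beta-\Delta',\beta]$, and now your coupling (sample the $g_1'+g_2'$ utilities and preferences first, then assign group labels via a uniformly random $g_1'$-subset) works verbatim to give the hypergeometric split of $\pi(B)$. This is exactly the construction in the paper's proof of \cref{cl:AB}; the ranges ``$[\Delta,1]$'' and ``$[0,\Delta]$'' in the informal statement are a loose shorthand that does not match the formal definitions used there.
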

            Using the above result along with the fact that candidates in $A$ are more likely to get their first choice than in $B$, and hence $\pi(A)/g_1 \geq \pi(B)/(g_1'+g_2')$ (\Cref{cl:expectationbounds}), we can bound $\sU(\Ast)$. Thus, we show all the three desired bounds in~\Cref{thm:specialcase}.

            We now give an overview of the proof of~\Cref{thm:institutionWiseConstraints}. The metric $\sR(\Ainst)=1$ by the definition of the algorithm $\Ainst$. Indeed, we ensure that the number of selected candidates from each group $G_j$ is proportional to $|G_j|$. The utility ratio $\sU(\Ast)$ is also close to 1 for a similar reason: the institution-wise representational constraints imply that the total utility of the top $K$ candidates is close to the sum over each group $G_j$ of the total utility of the top  $\frac{|G_j|}{n} \cdot K$ candidates from $G_j$. 
            We now consider $\sP(\Ainst)$, which is defined as $\Ex \left[\frac{|\pi(S \cap G_2)|/|G_2|}{|\pi(S \cap G_1)|/|G_1|} \right]$.~\Cref{lem:concinformal} allows us to approximate this by the ratio of the corresponding expectations $\Ex[|\pi(S \cap G_j)|/|G_j|]$. In light of the representational constraints,   one would expect  $\Ex[|\pi(S \cap G_j)|]/|G_j|]$ to be the same for both groups. Somewhat surprisingly, this fact turns out to be false (see~\Cref{ex:expectation}). In particular, consider the following two instances (consisting of just one group): in the instance $\cI$ there are $n$ candidates and capacity $k_j$ at each institution $j \in [p]$; whereas in the ``scaled'' instance  $\cI_a$, where $a \geq 1$,  there are $n/a$ candidates with $k_j/a$ capacity at each institution $j \in [p]$. 
            Let $M$ and $M_a$ be the corresponding assignments given by $\Ast$, and  $\pi(M)$ and $\pi(M_a)$ be the number of candidates who receive their top choices in these two assignments. Then $\Ex[\pi(M)]$ may not equal $ a\Ex[\pi(M_a)]$. However, we can bound their gap: 
            \begin{restatable}{lemma}{lemgap}
                \label{lem:gap}
        $\left|  \frac{\Ex[\pi(M)]}{a} - \Ex[\pi(M_a)] \right| = O(p\sqrt{K \log K}). $
            \end{restatable}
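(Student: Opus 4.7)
The plan is to prove the scaling claim by coupling the large instance $\cI$ with $a$ independent copies of $\cI_a$. By the i.i.d.\ assumption on $\cD$ and $\cL$, I can sample $\cI^{(1)}, \ldots, \cI^{(a)}$ i.i.d.\ as $\cI_a$ and merge their $n$ candidates into a single instance $\cI^+$ equipped with the original capacities $k_1, \ldots, k_p$; then $\cI^+ \stackrel{d}{=} \cI$. Let $M^+ := \Ast(\cI^+)$ and $M^{(b)} := \Ast(\cI^{(b)})$. Since $\Ex[\pi(M^+)] = \Ex[\pi(M)]$ and $\Ex[\pi(M^{(b)})] = \Ex[\pi(M_a)]$, it suffices to prove
\[
\Bigl|\Ex\bigl[\pi(M^+) - \textstyle\sum_{b=1}^a \pi(M^{(b)})\bigr]\Bigr| = O(a\,p\sqrt{K\log K}),
\]
since dividing by $a$ yields the claimed bound.

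Next I would run all $a+1$ instantiations of $\Ast$ in lockstep over the utility ordering of $\cI^+$. For each institution $j$ and rank $t \in [n]$, let $u^+_j(t)$ be the slots used at $j$ in $M^+$ among the top $t$ candidates, and let $u^{(b)}_j(\tau_b(t))$ be the corresponding quantity in $M^{(b)}$, where $\tau_b(t)$ counts class-$b$ candidates among the top $t$. A candidate $c$ at rank $t$ in class $b$ has identical outcomes in $M^+$ and $M^{(b)}$ precisely when the two ``fullness'' predicates $[u^+_j(t{-}1) < k_j]$ and $[u^{(b)}_j(\tau_b(t){-}1) < k_j/a]$ agree for every institution $j$. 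In the absence of any prior divergence, the class labels $b(c)$ are exchangeable and, conditionally on the utility ordering, behave as i.i.d.\ uniform draws from $[a]$ that are independent of the candidates' preferences. Consequently, $\sum_b u^{(b)}_j(\tau_b(t))$ tracks $u^+_j(t)$ as a sum with Hoeffding-type concentration, and an Azuma argument applied to the discrepancy $\Delta_j(t) := u^+_j(t) - \sum_b u^{(b)}_j(\tau_b(t))$ gives $|\Delta_j(t)| = O(\sqrt{aK\log K})$ uniformly in $j$ and $t$ with probability $1 - 1/\poly(K)$.

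The crux is counting divergences. A divergence at a candidate $c$ with top choice $j$ requires $u^+_j(t{-}1)$ to lie in a transition zone of width $O(\sqrt{aK\log K})$ slots around $k_j$---precisely the regime where the big- and small-fullness predicates can disagree. A Chernoff bound on the number of candidates with top choice $j$ processed while $u^+_j$ resides in this zone, summed over $b$ and $j$, gives $O(p\sqrt{aK\log K})$ total potential divergences with high probability. Each divergence contributes $\pm 1$ to $\pi(M^+) - \sum_b \pi(M^{(b)})$, so $|\pi(M^+) - \sum_b \pi(M^{(b)})| = O(p\sqrt{aK\log K})$ w.h.p. Since this difference is bounded by $K$ deterministically and the failure event has negligible probability, taking expectations yields $|\Ex[\pi(M^+)] - a\Ex[\pi(M_a)]| = O(p\sqrt{aK\log K})$, which implies the lemma.

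The hard part is the cascade control. Once one candidate is routed differently in $M^+$ versus $M^{(b)}$, $|\Delta_j|$ picks up an extra unit; this widens the transition zone for subsequent candidates and can, in principle, trigger further divergences in a feedback loop. Closing the loop requires a bootstrap argument that leverages the Lipschitz lemma (\Cref{lem:lipschitzstable1}): each divergence alters only $O(p)$ later outcomes, so provided the initial (no-cascade) divergence count is already $O(\sqrt{aK\log K})$, iterating does not blow up. Reconciling the Azuma bound on $\Delta_j$---which assumes no prior divergence---with the Lipschitz bound that quantifies the damage of each actual divergence is where the technical weight of the proof lies.
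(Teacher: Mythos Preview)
Your coupling idea is sound, but the route you take through lockstep simulation and cascade control is both harder and less complete than the paper's argument. The paper avoids the cascade problem entirely by a structural observation you do not exploit: if you fix the big assignment $M$ on all $n$ candidates and then draw a single random subset $X$ of size $n/a$, the restriction $M|_X$ is itself the (unique) stable assignment for the instance on $X$ with capacities $x_\ell := |M^{-1}(\ell)\cap X|$. By Hoeffding, each $x_\ell$ lies within $O(\sqrt{K\log K})$ of $k_\ell/a$, and $|X\cap\pi(M)|$ lies within $O(\sqrt{K\log K})$ of $\pi(M)/a$. Now the instance $(X,\{x_\ell\})$ and the target instance $\cI_a=(X,\{k_\ell/a\})$ differ only in capacities, with total capacity discrepancy $O(p\sqrt{K\log K})$; the capacity-Lipschitz result (\cref{cor:lipschitz}) immediately gives $|\pi(M_X)-|X\cap\pi(M)||=O(p\sqrt{K\log K})$, hence $|\pi(M_X)-\pi(M)/a|=O(p\sqrt{K\log K})$. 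Taking expectation over $X$ finishes the proof, since $\Ex_X[\pi(M_X)]=\Ex[\pi(M_a)]$.

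Your bootstrap sketch, by contrast, is not closed. The Lipschitz lemma you invoke (\cref{lem:lipschitzstable1}) bounds the effect of changing a single preference list between two otherwise identical instances; it does not directly bound the propagation of a ``divergence'' between runs of $\Ast$ on instances with different candidate sets and different capacities. Making your feedback-loop argument rigorous would require re-proving a Lipschitz statement for a moving target, and the Azuma bound on $\Delta_j$ already assumes no prior divergence, so the bootstrap is circular as written. You also do not need $a$ copies: one random subsample suffices, because its marginal distribution is exactly that of $\cI_a$. The paper's restriction-plus-capacity-perturbation trick is the missing idea.
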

            \noindent
            The proof of this result 
            relies on yet another Lipschitz property of the algorithm $\Ast$: changing the capacity of an institution by one unit changes the allocation for at most $p$ candidates (\Cref{lem:lipschitzs2}). 
            \Cref{lem:gap} allows us to show that $\Ex \left[\frac{|\pi(S \cap G_2)|/|G_2|}{|\pi(S \cap G_1)|/|G_1|} \right]$ is close 1. Thus, we can show that all three metrics are near optimal for $\Ainst$. 

\section{Pseudocode for Algorithms with Institution-Wise, Group-Wise and No Constraints}\label{sec:additional:pseudocode}
    In this section, we provide the pseudocodes for the three algorithms we compare in \cref{sec:empirical}, as well as extensions for relaxed bounds in \cref{sec:additional:empirical:relaxedbounds}. These include a special case of the Gale-Shapley algorithm ($\Ast$; \cref{alg:GSLatUtil}), the algorithm to satisfy group-wise proportional representation constraints ($\Agroup$; \cref{alg:groupwiseconstraints}), the algorithm to satisfy institution-wise constraints ($\Ainst$; \cref{alg:GSTwoCopies}), and modifications of \cref{alg:groupwiseconstraints} and \cref{alg:GSTwoCopies} under relaxed bounds.
        We implement algorithms and run all empirical work in \texttt{Python3}. The code implementations of algorithms, fairness metrics, as well as simulation procedures and data, can be found \href{https://github.com/sandrewxu/CentralizedSelectionwithPreferenceBias}{here}.\footnote{ \url{https://github.com/sandrewxu/CentralizedSelectionwithPreferenceBias}}
        
        \paragraph{Unconstrained algorithm $\Ast$.}
        This is a special case of the Gale-Shapley algorithm for stable matching \cite{galeshapley1962} where all the institutions agree on a common central ranking of the candidates.
        The algorithm considers the candidates in decreasing order of observed utility. When considering a candidate, it assigns it to the highest preferred institution that still has an available slot. 
        \begin{algorithm}[ht!]
            \caption{Unconstrained algorithm ($\Ast$)}\label{alg:GSLatUtil}
            \begin{algorithmic}[1]
                \REQUIRE $S \text{ (set of candidates)}$; $\textbf{\textit{k}}=(k_1, k_2, ..., k_p) \text{ (institutional capacities)}$; $\sigma = (\sigma_1, \sigma_2, \ldots, \sigma_n) \text{ (preferences for } i \in S$ \text{)}; $\hat{u} = (\hat{u}_1, \hat{u}_2, \ldots, \hat{u}_n) \text{ (estimated utilities for } i \in S \text{)}$
                \STATE $r_{unmatched} \gets$ candidates in $S$ sorted in order of decreasing utility
                \STATE $M \gets $ assignment filled with -1 (no institution)
                \FOR {candidate $a$ in $r_{unmatched}$}
                \STATE $M_a \gets $ top choice institution $n$ such that $k_n > 0$
                \STATE $k_n \gets k_n-1$
                \STATE \textbf{until} $k_n = 0$ for all $n \in k$
                \ENDFOR
                \STATE \textbf{return} $M$
            \end{algorithmic}
        \end{algorithm}

        \paragraph{Algorithm with group-wise constraints.} The algorithm first selects 
        the top (according to observed utilities) $\frac{|G_j|}{n} \cdot K$ candidates from each group $G_j$. It then 
        runs the algorithm $\Ast$ to assign these $K$ candidates to institutions. This guarantees representation-based fairness for both groups.

        \begin{algorithm}[ht!]
            \caption{Group-wise algorithm ($\Agroup$)}\label{alg:groupwiseconstraints}
            \begin{algorithmic}[1]
                \REQUIRE $G_1 \text{ (set of candidates in group 1)}$; $G_2 \text{ (set of candidates in group 2)}$; $\textbf{\textit{k}}=(k_1, k_2,\ldots, k_p) \text{ (institutional capacities)}$; $\sigma = (\sigma_1, \sigma_2, \ldots , \sigma_n) \text{ (preferences for each } i \in S$ \text{)}; $\hat{u} = (\hat{u}_1, \hat{u}_2,\ldots, \hat{u}_n) \text{ (estimated utilities for each } i \in S$ \text{)}
                \STATE $m_1 \gets (\sum_\ell k_{\ell}) \cdot |G_1| / n$
                \STATE $S_{1} \gets$ first $m_1$ candidates in $G_1$ sorted in order of decreasing utility
                \STATE $m_2 \gets (\sum_\ell k_{\ell}) \cdot |G_2| / n$
                \STATE $S_{2} \gets$ first $m_2$ candidates in $G_2$ sorted in order of decreasing utility
                \STATE $S_0 \gets S_1 \cup S_2$
                \STATE \textbf{return} $\Ast(S_0, \textbf{\textit{k}}, \sigma, \hat{u})$
            \end{algorithmic}
        \end{algorithm}

        \paragraph{Algorithm with institution-wise constraints.} This algorithm creates two independent instances, one for each group. In the instance corresponding to group $G_j$, the capacity at an institution $i$ is set to $\frac{|G_j|}{n} \cdot k_i$. Then $\Ast$ is independently executed on each of these two instances.

        \begin{algorithm}[ht!]
            \caption{Institution-wise algorithm (\ouralgo{})}\label{alg:GSTwoCopies}
            \begin{algorithmic}[1]
                \REQUIRE $G_1 \text{ (set of candidates in group 1)}$; $G_2 \text{ (set of candidates in group 2)}$; $\textbf{\textit{k}}=(k_1, k_2, \ldots, k_p) \text{ (institutional capacities)}$; $\sigma = (\sigma_1, \sigma_2, \ldots, \sigma_n) \text{ (preferences for each } i \in S$ \text{)}; $\hat{u} = (\hat{u}_1, \hat{u}_2, \ldots, \hat{u}_n) \text{ (estimated utilities for each } i \in S$ \text{)}
                \STATE $A \gets \{k_i \cdot |G_1| / n: i \in [p]\}$ {\em \small \color{blue} $/*$ Define capacity vector $A$ for the $p$ institutions for candidates in $G_1$ $*/$} 
                \STATE $B \gets \{k_i \cdot |G_2| / n: i \in [p] \}$ {\em \small \color{blue} $/*$ Define capacity vector $B$ for the $p$ institutions for candidates in $G_2$$*/$}
                \STATE $M_{1} \gets \Ast(G_1, A, \{\sigma_i : i \in A\}, \{\hat{u}_i : i \in G_1\})$
                \STATE $M_{2} \gets \Ast(G_2, B, \{\sigma_i : i \in B\}, \{\hat{u}_i : i \in G_2\})$
                \STATE \textbf{return} $M_1 \cup M_2$ 
            \end{algorithmic}
        \end{algorithm}
        
        \paragraph{Algorithm with relaxed group-wise constraints.}
        This algorithm satisfies relaxed group-wise constraints (\cref{sec:additional:empirical:relaxedbounds})  specified by a parameter $\alpha$, where $0 \leq \alpha \leq 1$. For each group $G_j$, it first selects the top $\alpha K \cdot   \frac{|G_j|}{n}$ candidates. From the remaining pool of $n-\alpha K$ candidates, it selects the top $K - \alpha K$ candidates. Finally, the algorithm $\Ast$ is executed on these $K$ selected candidates. Observe that when $\alpha=0$, this algorithm is same as the unconstrained algorithm $\Ast$; and when $\alpha=1$, we recover the algorithm $\Agroup.$

        \begin{algorithm}[ht!]
            \caption{Relaxed group-wise constraints algorithm}\label{alg:grouprelaxed}
            \begin{algorithmic}[1]
                \REQUIRE $G_1 \text{ (set of candidates in group 1)}$; $G_2 \text{ (set of candidates in group 2)}$; $\textbf{\textit{k}}=(k_1, k_2, \ldots, k_p) \text{ (institutional capacities)}$; $\sigma = (\sigma_1, \sigma_2, \ldots, \sigma_n) \text{ (preferences for each } i \in S$ \text{)}; $\hat{u} = (\hat{u}_1, \hat{u}_2, \ldots, \hat{u}_n) \text{ (estimated utilities for each } i \in S$ \text{)}; $\alpha$ (proportion of strictness)
                \STATE $m_1 \gets \alpha 
                \cdot \sum k_{\ell} \cdot |G_1| / n$ {\em \small \color{blue} $/*$ Minimum representation for candidates in $G_1$ is proportional to $|G_1|$ and $\alpha$  $*/$}
                \STATE $m_2 \gets \alpha \cdot \sum k_{\ell} \cdot |G_2| / n$ {\em \small \color{blue} $/*$ Minimum representation for candidates in $G_2$ is proportional to $|G_2|$ and $\alpha$  $*/$}
                \STATE $m_{excess} \gets \sum k_{\ell} - m_1 - m_2$ {\em \small \color{blue} $/*$ Capacities not allocated to a specific group when $\alpha < 1$  $*/$}
                \STATE $r_{unmatched} \gets$ candidates in $S$ sorted in order of decreasing utility
                \STATE $M \gets $ assignment filled with -1 (no institution)
                \FOR {candidate $a$ in $r_{unmatched}$}
                \STATE $M_a \gets \text{top choice institution } n \text{ such that } k_n > 0 \text{ and } m_{group} > 0 \text{ or } m_{excess} > 0$
                \STATE $k_n \gets k_n-1$
                \IF {$m_{group} > 0$}
                \STATE $m_{group} \gets m_{group} - 1$ {\em \small \color{blue} $/*$ $m_{group}$ is $m_1$ if the candidate is in $G_1$ and $m_2$ if the candidate is in $G_2$ $*/$} 
                \ELSE
                \STATE $m_{excess} \gets m_{excess} - 1$
                \ENDIF
                \STATE \textbf{until } $k_n = 0$ for all $n \in k$
                \ENDFOR 
                \STATE \textbf{return} $M$
            \end{algorithmic}
        \end{algorithm}

        \paragraph{Algorithm with relaxed institution-wise constraints.}
        This algorithm satisfies relaxed institution-wise constraints (\cref{sec:additional:empirical:relaxedbounds}) specified by a parameter $\alpha$, where $0 \leq \alpha \leq 1$. For each institution $i$ and group $G_j$, it reserves $\alpha k_i \cdot |G_j|/n$ slots for candidates from group $G_j$, and the remaining $(1-\alpha)k_i$ ``excess'' slots are available for candidates from either group. The algorithm $\Ast$ is now modified as follows: it again considers candidates in decreasing order of observed utility values. When considering a candidate $a$,  say from group $G_j$, it assigns it to the highest ranked institution $i$ that has an available slot in either the ones reserved for $G_j$ or in the unreserved pool. 
        Observe that if $\alpha = 1$, we recover $\Ainst$, and when $\alpha = 1$, we execute the unconstrained algorithm $\Ast$.
        \begin{algorithm}[ht!]
            \caption{Relaxed institution-wide constraints algorithm}\label{alg:instrelaxed}
            \begin{algorithmic}[1]
                \REQUIRE $G_1 \text{ (set of candidates in group 1)}$; $G_2 \text{ (set of candidates in group 2)}$; $\textbf{\textit{k}}=(k_1, k_2, \ldots, k_p) \text{ (institutional capacities)}$; $\sigma = (\sigma_1, \sigma_2, \ldots, \sigma_n) \text{ (preferences for each } i \in S$ \text{)}; $\hat{u} = (\hat{u}_1, \hat{u}_2, \ldots, \hat{u}_n) \text{ (estimated utilities for each } i \in S$ \text{)}; $\alpha$ (proportion of strictness)
                \STATE $k_{G_1} \gets \{k_i \cdot \alpha \cdot |G_1| / n: i \in [p]\}$ {\em \small \color{blue} $/*$ Minimum institutional representation for candidates in $G_1$, proportional to $|G_1|$ and $\alpha$  $*/$} 
                \STATE $k_{G_2} \gets \{k_i \cdot \alpha \cdot |G_2| / n: i \in [p] \}$ {\em \small \color{blue} $/*$ Minimum institutional representation for candidates in $G_2$, proportional to $|G_2|$ and $\alpha$  $*/$}
                \STATE $k_{excess} \gets \{k_i \cdot (1-\alpha \cdot (|G_1|+|G_2|) / n): i \in [p] \}$ {\em \small \color{blue} $/*$ Capacities not allocated to a specific group when $\alpha < 1$ $*/$} 
                \STATE $r_{unmatched} \gets$ candidates in $S$ sorted in order of decreasing utility
                \STATE $M \gets $ assignment filled with -1 (no institution)
                \FOR {candidate $a$ in $r_{unmatched}$}
                \STATE $M_a \gets \text{ top choice institution } n \text{ such that } k_{group}[n] > 0 \text{, or } k_{excess}[n] > 0$
                \IF {$k_{group}[n] > 0$}
                \STATE $k_{group}[n] \gets k_{group}[n] - 1$ {\em \small \color{blue} $/*$ $k_{group}$ is $k_{G_1}$ if the candidate is in $G_1$ and $k_{G_2}$ if the candidate is in $G_2$ $*/$} 
                \ELSE
                \STATE $k_{excess}[n] \gets k_{excess}[n] - 1$
                \ENDIF
                \STATE \textbf{until } $k_n = 0$ for all $n \in k$
                \ENDFOR 
                \STATE \textbf{return} $M$
            \end{algorithmic}
        \end{algorithm}

\section{Empirical Results}\label{sec:empirical}
        We empirically examine the robustness of \cref{thm:institutionWiseConstraints} under non-ideal conditions. We evaluate \cref{thm:institutionWiseConstraints} using data from India's centralized IIT admissions system, and evaluate a setting in which preferences are not drawn from a single distribution. 
The code and data can be found \href{https://github.com/sandrewxu/CentralizedSelectionwithPreferenceBias}{here}\footnote{ \url{https://github.com/sandrewxu/CentralizedSelectionwithPreferenceBias}}.
       
        \paragraph{Baselines and metrics.} \label{sec:empirical:definitions}
            We consider three algorithms in each simulation: the unconstrained algorithm  ($\Ast$), the algorithm subject to group-wise constraints ($\Agroup$), and the algorithm that satisfies institution-wide constraints ($\Ainst$). Pseudocodes
             are provided in \cref{sec:additional:pseudocode}.
            We consider two groups, a general and a 
 disadvantaged group, and measure preference-based fairness using $\sP^{(1)}$ and $\sP^{(3)}$ as defined in \cref{sec:model}.
            We generate preferences via the well-studied Mallows model, which mirrors many real-world scenarios \cite{Mallows}. Preferences are distributed according to a central ranking $\rho$ and the Kendall-Tau distance ($d_{\rm KT}$) between rankings, defined as the number of pairwise disagreements between two rankings. A dispersion parameter, $0 < \phi \leq 1$, defines the distribution such that for any ranking $\sigma$, $\Pr (\sigma) \propto \phi^{d_{\rm KT}(\sigma, \rho)}$. %

            \begin{figure*}[t!]
    \centering
    \includegraphics[width=\linewidth]{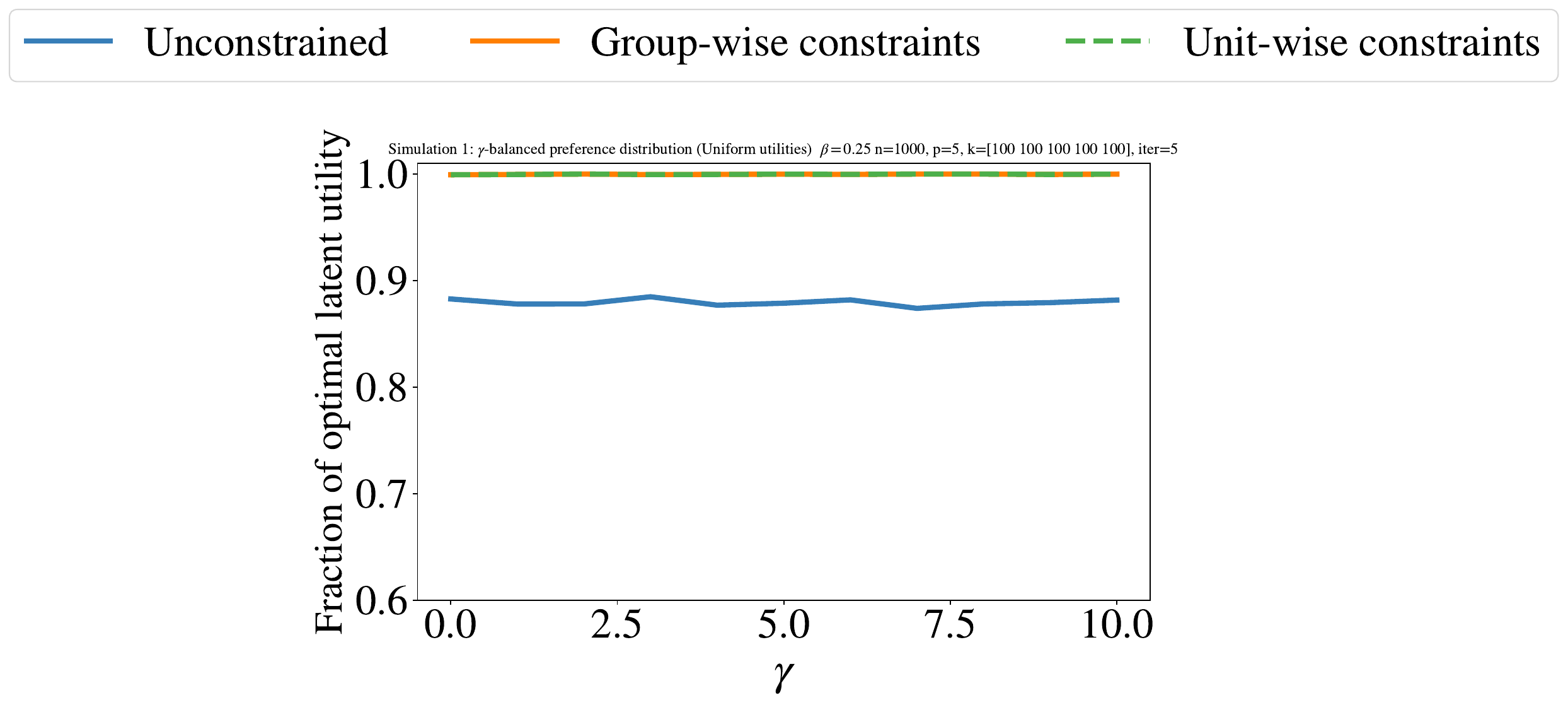}
    \par
    \subfigure[$\sP^{(1)}$ with gender]{
        \includegraphics[width=0.4\linewidth, trim={0cm 0cm 0cm 0cm},clip]{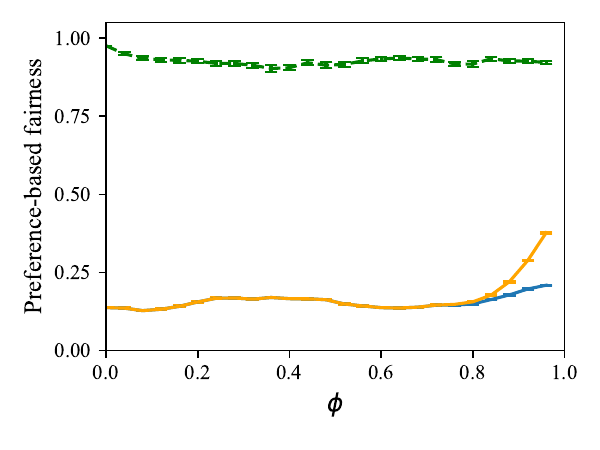}
    }
    \hspace{12mm}
    \subfigure[$\sP^{(1)}$ with birth-category]{
        \includegraphics[width=0.4\linewidth, trim={0cm 0cm 0cm 0cm},clip]{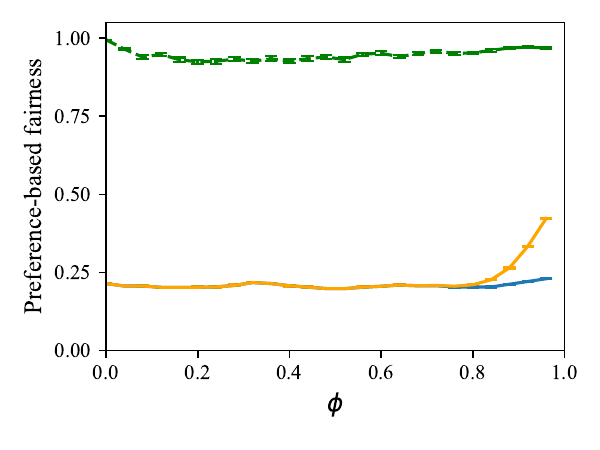}
    }
    \par
    \subfigure[$\sP^{(3)}$ with gender]{
        \includegraphics[width=0.4\linewidth, trim={0cm 0cm 0cm 0cm},clip]{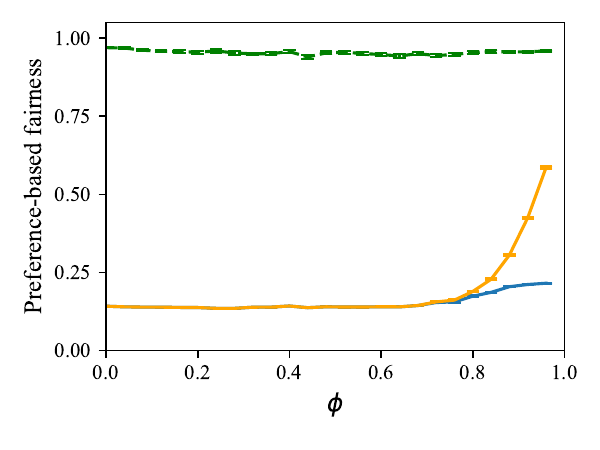}
    }
    \hspace{12mm}
    \subfigure[$\sP^{(3)}$ with birth-category]{
        \includegraphics[width=0.4\linewidth, trim={0cm 0cm 0cm 0cm},clip]{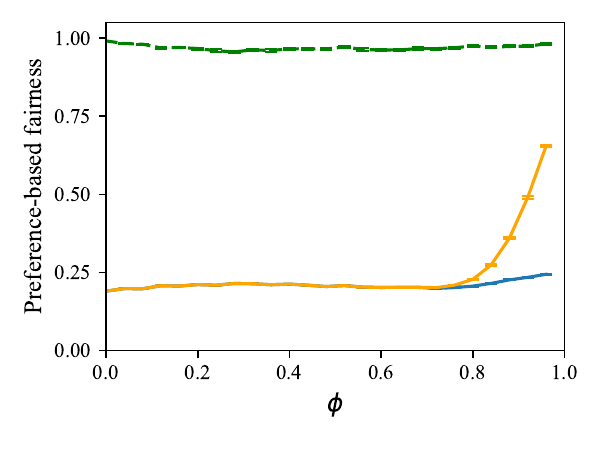}
    }
    \caption{
        Preference-based fairness as measured by $\sP^{(1)}$ and $\sP^{(3)}$ using either gender or birth-category as the protected attribute
        with data from the 2009 JEE test under centralized admission (see \cref{sec:simulation:real_world_data} for details).
        The $x$-axis denotes $\phi$, the dispersion parameter of the Mallows distribution.
        Error bars denote the standard error of the mean over 50 iterations.
        Institution-wise constraints achieve near-optimal preference-based fairness.
    }
    \label{fig:simulation:real_world_data}
    \vspace*{-0.2in}
\end{figure*}

\paragraph{Simulation with real-world data.}\label{sec:simulation:real_world_data}
            We use admissions data from India's centralized admissions system for the Indian Institutes of Technology (IITs) to evaluate institution-wise constraints in a real-world, non-ideal setting.

        \medskip
            \noindent\textit{{Data.}}
               Admission to IITs is determined by a centralized system that considers an applicant's score in the Joint Entrance Exam (JEE) as their utility and their preferences over different major-institute pairs \cite{baswana2019centralized}. 
                The data contains the test scores of 384,977 students from IIT-JEE 2009, self-reported gender and official birth category, opening and closing ranks, and capacities of major-institute pairs \cite{IITJEEBrochure}.
                Individual preferences were not reported, so we model preferences over institutions, defined as major-institute pairs, using a Mallows distribution with $\rho$ defined by increasing order of $C_{(M, I)},$ the closing All India Rank of candidates admitted into the major $M$ at institute $I$ in 2009.
                Further discussion about the dataset, the notion of latent utilities, and preference generation is in \cref{sec:additional_JEE}.

\medskip
        \noindent\textit{{Setup.}}
            We run two simulations, one considering gender and the other considering birth category as the protected attribute.
            We consider the $p=33$ institutions corresponding to the top major-institute pairs in $\rho$. The last index such that the $p$-th pair $(M,I)$ in $\rho$ has $C_{(M,I)}\leq 1000$ is $p=33$. 
            We fix $k_\ell$ to be the 2009 capacity of the $\ell$-th major-institute pair in $\rho$.
            For $\phi \in [0, 1]$, we report the average of $\sP^{(1)}$ and $\sP^{(3)}$ over 50 iterations, where we regenerate student preferences according to the Mallows distribution in each iteration. Implementation details appear in \cref{sec:additional_JEE}.

\medskip
    \noindent\textit{{Results and observations.}}
            Results appear in \cref{fig:simulation:real_world_data}. Our main observation is that institution-wise constraints achieve near-optimal preference-based fairness across all preference distributions, whereas group-wise constraints do not.
            For all $\phi$, $\sP^{(1)}(\Ainst) \geq 0.9$ and $\sP^{(3)}(\Ainst) \geq 0.95$.
            On the other hand, group-wise constraints and no constraints result in low preference-based fairness: $\sP^{(1)}(\Agroup) \leq 0.5$, $\sP^{(3)}(\Agroup) \leq 0.7$, and $\sP^{(3)}(\Ast) \leq 0.25$. 
            When $\phi \leq 0.75$, $\sP^{(3)}(\Agroup) \leq 0.25$.
            Under all conditions, institution-wide constraints achieve near-optimal preference-based fairness, higher than both other algorithms. Given $\phi$ $\leq 0.75$, which likely models the admissions system for IITs, there is no distinguishable difference between $\Ast$ and $\Agroup$ in terms of preference-based fairness.

        \paragraph{Simulation with distinct preference distributions.}
        \label{sec:simulation:synthetic_data}
            We also evaluate the efficacy of institution-wise constraints when groups have different preference distributions. This tests \cref{thm:institutionWiseConstraints} when the assumption that preferences are generated from a single distribution is not fulfilled. This situation can occur in many real-life scenarios: notably, Historically Black Colleges and Universities (HBCUs) in the United States are highly prestigious and a coveted choice for black students, but are rarely a top choice for students of other racial groups.

\begin{figure*}[t!]
    \centering
    \includegraphics[width=\linewidth]{figures_FINAL/legend.pdf}
    \par
    \vspace{-2mm}
    \subfigure[\small $\cD_{\rm Gauss}$ with $\beta{=}\frac{1}{4}$]{
        \includegraphics[width=0.4\linewidth, trim={0cm 0cm 0cm 0cm},clip]{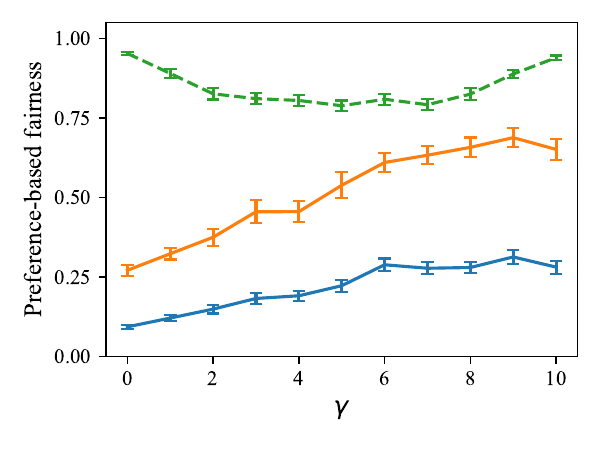}
    }
    \hspace{12mm}
    \subfigure[\small $\cD_{\rm Pareto}$ with $\beta{=}\frac{1}{4}$]{
        \includegraphics[width=0.4\linewidth, trim={0cm 0cm 0cm 0cm},clip]{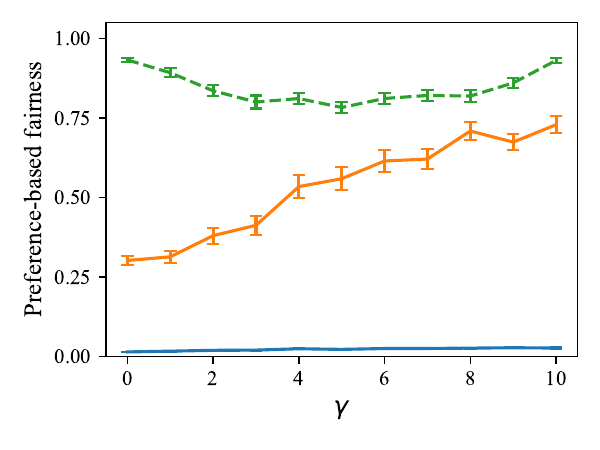}
    }
    \vspace{-0.1in}
    \caption{ 
        Preference-based fairness as measured by $\sP^{(1)}$, 
        with synthetic data where non-i.i.d. preferences are generated from Mallows distributions. 
        The $x$-axis denotes $\gamma$, the Kendall-Tau distance between the central rankings,  
        and the error bars denote the standard error of the mean over 50 iterations. 
        We observe that institution-wise constraints achieve higher preference-based fairness than group-wise and unconstrained settings.
    }
    \label{fig:simulation:synthetic_data}
    \vspace*{-0.2in}
\end{figure*}

        \medskip
        \noindent\textit{{Data (Latent utilities and preferences).}}
            We consider two different distributions for latent utility: $\cD_{\rm Gauss}$, the truncated Gaussian distribution bounded below by 0, and $\cD_{\rm Pareto}$, the Pareto distribution with shape parameter $\alpha = 3$.
            The Gaussian distribution is used because it often models the outcomes of standardized tests \cite{dorans2002recentering}. The fat-tailed Pareto distribution with shape parameter $2 \leq \alpha \leq 3$ models many real-life phenomena such as citations of a paper or success in creative professions \cite{clauset2009power}.
            When generating preferences, we use the Mallows distribution with $\phi = 0.25$ centered around $\rho_1, \rho_2$ for groups $G_1$ and $G_2$. For distance $0 \leq \gamma \leq \gamma_{\rm max} \coloneqq \binom{p}{2}$ 
            , $\rho_1$ and $\rho_2$ are randomly generated under the constraint $d_{\rm KT}(\rho_A, \rho_B) = \gamma$.

\medskip
        \noindent\textit{{Setup.}}
            We set $n = 1000, p = 5,$ and $k_{i} = 100$ for $i \in [p]$. 
            For $\cD \in \{\cD_{\rm Gauss}, \cD_{\rm Pareto}\}$ and $\beta \in \{\frac{1}{4}, \frac{1}{2}, \frac{3}{4}\}$, we vary $\gamma \in [0, \gamma_{max}]$ and calculate $\sP^{(1)}$ and $\sU$ over 50 iterations. 

           \medskip
            \noindent\textit{{Results and observations.}}
                \cref{fig:simulation:synthetic_data} shows the results using the $\sP^{(1)}$ metric with $\beta = \frac{1}{4}$. Results with other $\beta$ and a discussion of $\sU$ appear in \cref{sec:additional:empirical:additional_plot_synthetic}.
                We observe that implementing institution-wide constraints retains a high degree of preference-based fairness, while group-wise constraints do not.
                For all $\gamma$, $\sP^{(1)}(\Ainst) \geq 0.75$ while $\sP^{(1)}(\Agroup) \leq 0.7$. When $\gamma = 0$, central ranking is consistent across groups and $\sP^{(1)}(\Agroup) \leq 0.3$ and $\sP^{(1)}(\Ainst) \geq 0.9$. 
                As $\gamma$
                increases, $\sP(\Ainst)$ slightly decreases, then increases again. This is because an increase in $\gamma$ decreases the ability of $\Ainst$ to guarantee high $\sP$, but as $\gamma$ approaches $\gamma_{max}$, rankings tend to become more opposite and more people from each group match with their top choice institutions, resulting in high $\sP$. The preference-based fairness obtained through group-wise constraints and no constraints generally increases with $\gamma$. 
                Even when groups follow different preference distributions, institutional constraints outperform group-wise constraints, especially when $\gamma \leq 3$.
                
            To better model the complexities of socio-economic biases, we also explore other models of bias. 
            In \cref{sec:additional:empirical:betanoise}, we utilize the noisy $\beta$-bias model, where $\beta$ is sampled from a truncated Gaussian distribution with a standard deviation of $0.1$ within the range $[0,1]$. Our findings reveal that for $0.15 \leq \beta \leq 0.85$, the fairness achieved in this model is within $0.05$ of that predicted by the $\beta$- bias model. We also consider the implicit variance model in \cref{sec:additional:empirical:impvar}, which introduces Gaussian noise directly to the true utility of each candidate. The mean of the noise is centered at 0, and the variance is higher for the disadvantaged group. We vary the standard deviation of the Gaussian noise within the interval $[0,2]$ and observe that our algorithm maintains a preference-based fairness above $0.9$, while the introduced noise reduces the fairness of baseline algorithms.

        Recognizing that a rigorous implementation of \cref{thm:institutionWiseConstraints} may seem challenging in some contexts, we also consider an algorithm that relaxes the strict capacity reservation constraint of institutions in \cref{sec:additional:empirical:relaxedbounds}. We introduce the parameter $\alpha$, which ranges from $0$ to $1$, to allow a proportionate fraction of an institution's capacity to be reserved for various groups. By adjusting $\alpha$ to values less than 1, we offer institutions the flexibility to proportionally reserve a smaller portion of their capacity, making the algorithm adaptable to diverse policy requirements and institutional constraints. We find that this can provide an effective method for improving $\sP$, with a natural tradeoff between preference-based fairness and the strength of the constraints on the institutions.
        
        Finally, we empirically evaluate \cref{thm:specialcase} under different utility distributions, as well under the noisy $\beta$-bias model and the implicit variance model in \cref{sec:additional:empirical:specialcase} to see that under 
        these settings, the predictions of \cref{thm:specialcase} still generally hold.

\section{Proofs}\label{sec:proofs}

In this section, we formally prove~\Cref{thm:specialcase} (and its generalization~\Cref{thm:effectOfBias}) and~\Cref{thm:institutionWiseConstraints}. In Section~\ref{sec:lipschitz}, we prove crucial properties of a stable assignment in our setting. These properties are needed in proving~\Cref{thm:specialcase} and~\Cref{thm:institutionWiseConstraints}. In particular, we show that there is a unique stable assignment of candidates to institutions (\Cref{cl:stableassignment}). Further, this stable assignment has the following Lipschitz properties:  changing the preference list of one candidate or changing the capacity of an institution by one unit affects the assignment of at most $p+1$  candidates (\Cref{lem:lipschitzstable1}, \Cref{lem:lipschitzs2}). Using this  Lipschitz property, we show that given any instance and a subset $S$ of candidates, the number of selected candidates in $S$ and the number of candidates in $S$ receiving their top choice are concentrated around their respective means (\ref{lem:concentrationbounds}). Finally, we show that when we consider an instance $\cI$ (on a single group of candidates only), and {\em scale} it by a factor $a \geq 1$, i.e., scale down the number of candidates and the capacities by factor $a$, the expected fraction of selected candidates that get their top choice in the two instances remain close (\Cref{lem:gap}). 

In Section~\ref{sec:effectofbias}, we prove~\Cref{thm:effectOfBias} which is a generalization of~\Cref{thm:specialcase} to arbitrary values of $|G_1|, |G_2|$ and $K$. In Sections~\ref{sec:BoundingR}, \ref{sec:boundingsU} and \ref{sec:boundingsp}, we bound the metrics $\sR(\Ast), \sU(\Ast)$ and $\sP(\Ast)$ respectively. We show the properties of the algorithm $\Ainst$ by proving~\Cref{thm:institutionWiseConstraints} in Section~\ref{sec:proofof:thm:institutionWiseConstraints}. We show that $\sR(\Ainst), \sU(\Ainst)$ and $\sP(\Ainst)$ are near-optimal in Sections~\ref{sec:boundingRinst},~\ref{sec:boundingUinst} and~\ref{sec:boundingPinst} respectively. We now mention the concentration bounds that will be useful in the subsequent proofs. 
  
    \paragraph{Preliminary concentration bounds.}
    We state the concentration inequalities that will be useful in the analysis. 

        \begin{theorem}[\bf Hoeffding's Bound]
        \label{thm:Hoeffding}
        Let $X_1, \ldots, X_m$ be independent random variables such that $X_i \in [0,1]$ for each $i=1, \ldots, m$. Let $X$ denote $X_1 + \ldots + X_m$ and $\mu := \E[X]$. Then 
        $$ \Pr[X- \mu(X) \geq t ] \leq e^{-2t^2/m}, $$
        and 
        $$ \Pr[X- \mu(X) \leq t ] \leq e^{-2t^2/m}.$$
    \end{theorem}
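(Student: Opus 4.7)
The plan is to prove this by the classical Chernoff--Cram\'er exponential moment method applied to the centered variables $Y_i := X_i - \E[X_i]$. Since $X_i \in [0,1]$, we have $Y_i \in [a_i, b_i]$ with $b_i - a_i \leq 1$ and $\E[Y_i] = 0$. Setting $Y := X - \mu = \sum_i Y_i$, the goal is to bound $\Pr[Y \geq t]$ (the lower tail follows by applying the same argument to $-Y_i$, which also has zero mean and support in an interval of length $\leq 1$).

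The first key step is Hoeffding's lemma: for any zero-mean random variable $Z$ supported on $[a,b]$, one has $\E[e^{\lambda Z}] \leq e^{\lambda^2(b-a)^2/8}$ for all $\lambda \in \R$. I would establish this by writing $z \in [a,b]$ as a convex combination of the endpoints and using convexity of $u \mapsto e^{\lambda u}$ to upper bound $\E[e^{\lambda Z}]$ by a function $\varphi(\lambda)$ depending only on $a,b$; then a direct computation shows $\varphi''(\lambda) \leq (b-a)^2/4$ for all $\lambda$, and $\varphi(0) = \varphi'(0) = 0$, which by a second-order Taylor expansion gives the stated bound.

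Next, I would apply Markov's inequality to $e^{\lambda Y}$ for any $\lambda > 0$:
\begin{equation*}
\Pr[Y \geq t] \;\leq\; e^{-\lambda t}\, \E[e^{\lambda Y}] \;=\; e^{-\lambda t}\, \prod_{i=1}^m \E[e^{\lambda Y_i}] \;\leq\; e^{-\lambda t + \lambda^2 m/8},
\end{equation*}
where the product step uses independence of the $X_i$ (and hence of the $Y_i$), and the last inequality applies Hoeffding's lemma to each $Y_i$ with $b_i - a_i \leq 1$. Optimizing the exponent $-\lambda t + \lambda^2 m/8$ over $\lambda > 0$ yields $\lambda^\star = 4t/m$, and substituting back gives the desired tail bound $\Pr[Y \geq t] \leq e^{-2t^2/m}$. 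The symmetric lower-tail bound is obtained identically.

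This is a canonical proof, so there is no real obstacle. The only mildly delicate ingredient is Hoeffding's lemma itself, where one must carefully verify the uniform bound on $\varphi''$; everything else is bookkeeping (independence to factor the MGF, Markov to pass from moment generating function to tail, and a one-variable optimization in $\lambda$).
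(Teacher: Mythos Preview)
Your proof is the canonical Chernoff--Cram\'er argument for Hoeffding's inequality and is correct. Note, however, that the paper does not prove this theorem at all: it is stated as a standard preliminary result and simply invoked where needed, so there is no ``paper's own proof'' to compare against.
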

    The above bound also holds when the random variables $X_1, \ldots, X_m$ are negatively correlated~\cite{dubhashibook}. For more general functions than the sum of the random variables, similar concentration bounds hold provided the function is Lipschitz. 

        \begin{theorem}[\cite{dubhashibook}]
        \label{thm:lipschitz}
        Let $T: \Omega^n \rightarrow \Re$ be a function of $n$ variables, each lying in a domain $\Omega$, such that for any two vectors $\mb{a}$ and $\mb{a'}$ differing in only one coordinate, 
        $$|T(\mb{a})-T(\mb{a'})| \leq c,$$
        for some parameter $c$.
    Let $X_1, \ldots, X_n$ be $n$ independent random variables, each taking values in $\Omega$. Then, 
    $$\Pr \left[|T(X_1, \ldots, X_n)-\Ex[T] | \geq t \right] \leq 2 e^{-2t^2/(c^2n)}.$$
    Here $\Ex[T]$ denotes the expectation of $T(Y_1, \ldots, Y_n)$, where the random variables $Y_j$ are independent, and each $Y_j \sim X_j$. 
        \end{theorem}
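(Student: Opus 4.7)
The plan is to prove this bounded-differences (McDiarmid-style) inequality via the Doob martingale construction and Azuma--Hoeffding. Concretely, define the filtration $\mathcal{F}_k = \sigma(X_1, \ldots, X_k)$ for $k = 0, 1, \ldots, n$, and the associated Doob martingale $Z_k := \Ex[T(X_1, \ldots, X_n) \mid \mathcal{F}_k]$, so that $Z_0 = \Ex[T]$ and $Z_n = T(X_1, \ldots, X_n)$. The high-level strategy is to show that this martingale has bounded differences $|Z_k - Z_{k-1}| \leq c$ almost surely, and then invoke Azuma--Hoeffding to get the stated sub-Gaussian tail bound with parameter $c^2 n$ in the exponent.

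First, I would carry out the martingale-difference bound. Fix $k \in [n]$ and condition on the values $x_1, \ldots, x_{k-1}$. Writing
\[
Z_k - Z_{k-1} = \Ex[T \mid X_1 = x_1, \ldots, X_{k-1} = x_{k-1}, X_k] - \Ex[T \mid X_1 = x_1, \ldots, X_{k-1} = x_{k-1}],
\]
one uses independence of the $X_j$'s to write the second term as an average over the distribution of $X_k$ of conditional expectations of the first form, and then appeals to the Lipschitz hypothesis: for any two values $x_k, x_k'$, coupling the remaining coordinates $X_{k+1}, \ldots, X_n$ one gets $|T(x_1, \ldots, x_k, X_{k+1}, \ldots, X_n) - T(x_1, \ldots, x_k', X_{k+1}, \ldots, X_n)| \leq c$ pointwise, hence the conditional expectations differ by at most $c$. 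This shows $|Z_k - Z_{k-1}| \leq c$ almost surely.

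Next, apply Azuma--Hoeffding to the martingale $(Z_k)_{k=0}^n$ with increment bound $c$: for any $t > 0$,
\[
\Pr\bigl[\, |Z_n - Z_0| \geq t \,\bigr] \leq 2 \exp\!\left(-\frac{2 t^2}{\sum_{k=1}^n (2c)^2}\right),
\]
which after substitution yields precisely $2 e^{-2 t^2 / (c^2 n)}$ (the constants work out since the increment lies in an interval of length at most $2c$; if one wants exactly the stated bound, use the sharper one-sided Hoeffding lemma for the conditional MGF $\Ex[e^{\lambda (Z_k - Z_{k-1})} \mid \mathcal{F}_{k-1}] \leq e^{\lambda^2 c^2 / 2}$, then apply Markov and optimize in $\lambda$). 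Then combine the upper and lower deviation events by a union bound to get the two-sided inequality as stated.

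The main obstacle, and the only genuinely subtle point, is verifying the almost-sure bound $|Z_k - Z_{k-1}| \leq c$: naively the Lipschitz hypothesis only controls $T$ when a single coordinate is changed, whereas $Z_k - Z_{k-1}$ takes expectation over many coordinates. The clean way is to rewrite $Z_{k-1}$ as $\Ex_{X_k'}[\Ex[T(X_1, \ldots, X_{k-1}, X_k', X_{k+1}, \ldots, X_n) \mid \mathcal{F}_{k-1}, X_k']]$ using the independence of $X_k'$ from $\mathcal{F}_{k-1}$, so that $Z_k - Z_{k-1} = \Ex_{X_k'}\bigl[\,\Ex[T(\ldots, X_k, \ldots) - T(\ldots, X_k', \ldots) \mid \mathcal{F}_{k-1}, X_k, X_k']\bigr]$, and each inner difference is bounded by $c$ by hypothesis. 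Once this coupling argument is in hand, the rest of the proof is a routine application of standard martingale concentration machinery.
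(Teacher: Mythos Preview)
The paper does not actually prove this statement: it is quoted as a preliminary result with the citation \cite{dubhashibook} and used as a black box in the subsequent analysis (e.g., in the proof of \Cref{lem:concentrationbounds}). So there is no ``paper's own proof'' to compare against.

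That said, your proposal is the standard textbook argument for McDiarmid's bounded-differences inequality, and it is essentially what appears in the cited reference. One small point worth tightening: you first derive $|Z_k - Z_{k-1}| \leq c$ and then plug this into Azuma--Hoeffding with increments in $[-c,c]$, which would give the weaker exponent $-t^2/(2c^2 n)$. To get exactly $-2t^2/(c^2 n)$ as stated, you need the sharper observation that, conditionally on $\mathcal{F}_{k-1}$, the increment $Z_k - Z_{k-1}$ lies in an interval of length at most $c$ (not $2c$): writing $g(x) = \Ex[T \mid X_1=x_1,\ldots,X_{k-1}=x_{k-1}, X_k=x]$, the Lipschitz hypothesis gives $\sup_x g(x) - \inf_x g(x) \leq c$, and $Z_k - Z_{k-1} = g(X_k) - \Ex[g(X_k)]$. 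Hoeffding's lemma then yields $\Ex[e^{\lambda(Z_k-Z_{k-1})} \mid \mathcal{F}_{k-1}] \leq e^{\lambda^2 c^2/8}$, and optimizing in $\lambda$ gives the stated constant. You allude to this in your parenthetical remark, but the main line of your argument uses the looser $2c$-length interval, so the exposition should be adjusted accordingly.
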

    
    \subsection{Properties of a stable assignment of candidates to institutions}\label{sec:lipschitz}
   We fix an input $\cI$ consisting of $n$ candidates and $p$ institutions, with institution $\ell$ having capacity $k_\ell$. The utilities of the candidates are drawn independently from a distribution $\cD$ over $[0,1]$ and the preferences over the institutions are drawn independently for a distribution $\cL$. This input is instantiated by a pair of tuples $(\husigma)$ sampled from $\cD \times \cL$, where $\mb{\hu}$ is the sequence of observed utilities of the $n$ candidates and $\mb{\sigma}$ is the sequence of their preferences. We begin by defining a stable assignment. 
    \begin{definition}
    \label{def:stable}
        Given $\husigma$, consider an assignment  $M$ of candidates to institutions. We say that $M$ is {\em stable} if for every pair of candidates $i$ and institution $j \neq M(i)$, the following condition must not hold: candidate $i$ prefers $j$ to its assignment $M(i)$, and institution $j$ prefers $i$ to all the candidates assigned to it by $M$.  
    \end{definition}

\noindent
    Observe that a stable assignment cannot leave a slot unassigned if $n \geq K$. Given arbitrary rankings of candidates by each institution and vice versa, stable assignments always exist; and they may not be unique. However, in the special setting where the institutions agree on a central ranking of the candidates, there 
    is a unique stable assignment.
    \begin{claim}
    \label{cl:stableassignment}
        Given $\husigma,$ there is a unique stable assignment of candidates to institutions. 
    \end{claim}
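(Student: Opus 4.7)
The plan is to establish existence and uniqueness separately. For existence, I would verify directly that the assignment $M^*$ produced by $\Ast$ (\Cref{alg:GSLatUtil}) is stable. Let $a_1, a_2, \ldots, a_n$ denote the candidates sorted in decreasing order of observed utility $\hu$. Suppose, for contradiction, that some pair $(a_t, j)$ with $j \neq M^*(a_t)$ violates stability, so $a_t$ prefers $j$ to $M^*(a_t)$. By the description of $\Ast$, $j$ must already have been full by the time $a_t$ was processed, which means every candidate in $(M^*)^{-1}(j)$ has strictly higher observed utility than $a_t$. Since all institutions rank candidates by $\hu$, each candidate currently at $j$ is strictly preferred to $a_t$, contradicting the blocking condition.

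For uniqueness, I would induct on $t$ to show that every stable assignment $M$ satisfies $M(a_t) = M^*(a_t)$ for all $t \in [n]$. The inductive hypothesis ensures $M$ and $M^*$ agree on $a_1, \ldots, a_{t-1}$, so the residual capacity vector after processing these candidates is identical in both. Let $j^* := M^*(a_t)$ be the top institution in $\sigma_{a_t}$ with positive residual capacity at this stage. If $M(a_t) = j$ for some $j$ that $a_t$ ranks above $j^*$, then by definition of $j^*$ institution $j$ has zero residual capacity under $M$ as well after assigning $a_1, \ldots, a_{t-1}$, so assigning $a_t$ to $j$ would exceed capacity -- a contradiction. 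Otherwise $a_t$ strictly prefers $j^*$ to $M(a_t)$. Since $j^*$ had residual capacity at step $t$, the candidates in $M^{-1}(j^*)$ must be drawn from $\{a_{t+1}, \ldots, a_n\}$, each with strictly lower observed utility than $a_t$. Thus either $j^*$ has an unfilled slot in $M$ or it is full with candidates that $j^*$ ranks below $a_t$; in either case $(a_t, j^*)$ is a blocking pair, contradicting stability of $M$.

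The main potential pitfall is coupling $M$ and $M^*$ correctly through the inductive hypothesis: the argument crucially relies on the fact that once $M$ matches $M^*$ on the top $t-1$ candidates, the residual capacity profile is identical, so that the notion of ``most-preferred institution with remaining capacity'' is unambiguous in both and the same institution $j^*$ is singled out. The second delicate point is handling the case where $j^*$'s slots in $M$ might be partly filled by later, lower-utility candidates rather than being genuinely vacant; this is resolved by noticing that such a configuration still furnishes the blocking pair $(a_t, j^*)$, since the common central ranking guarantees $j^*$ strictly prefers $a_t$ to any candidate with lower $\hu$. Combining the two parts yields the claim.
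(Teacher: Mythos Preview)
Your proof is correct and follows essentially the same inductive idea as the paper's: peel off the highest-utility remaining candidate, argue any stable assignment must send it to its top still-available institution, and recurse (the paper inducts on $n$ and restricts to a smaller instance, you induct on the rank $t$ directly---same content, different bookkeeping). One phrasing slip to tighten: at step $t$ the set $M^{-1}(j^*)$ may well contain some of $a_1,\ldots,a_{t-1}$, so the correct statement is that at least one slot of $j^*$ is either vacant or filled by some $a_s$ with $s>t$; that is precisely what your blocking-pair argument needs, and your ``pitfall'' paragraph already identifies this point.
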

    \begin{proof}
        We prove this by induction on the number of candidates. As a base case, when there are no candidates, the statement is vacuously true. 

        Suppose the claim is true for $n-1$ candidates, and consider an instance $\cI :=(\husigma)$ with $n$ candidates. Let $M$ be a stable assignment. Let $i^\star$ be the candidate with the highest utility $\hu_{i^\star}$. We first claim   $M$ must assign $\hu_{i^\star}$ to its most preferred institution, say $j^\star$. Suppose not. Consider the candidate $i^\star$ and the institution $j^\star$. The candidate $i^\star$ prefers $j^\star$ to $M(i)$. Further, $j^\star$ prefers $i^\star$ to all the candidates assigned to it by $M$. But this contradicts the fact that $M$ is stable. 

        Now we define an instance $\cI'$ obtained by restricting $\cI$ to the candidates in $[n] \setminus \{i^\star\}$, and institution $j^\star$ having $k_{j^\star}-1$ capacity (all other institutions have the same capacity as in $\cI$). The assignment $M'$ given by the restriction of $M$ to $[n] \setminus \{i^\star\}$ is also a stable assignment for the instance $\cI'$. By induction hypothesis, $M'$ is uniquely determined by the instance $\cI'$. It follows that $M$ is also uniquely determined. 
    \end{proof}

\noindent
     We shall use $\Ast$ to denote the unique stable assignment algorithm. The proof of~\Cref{cl:stableassignment} shows that $\Ast$ can be specified as follows: sort the candidates in decreasing order of estimated utility. 
     Now consider the candidates in this order, and when considering candidate $i$, assign it to the highest preferred institution with a vacant slot. It is also worth noting that $\Ast$ does not depend on the actual values of the estimated utilities, but only on the relative ordering of the candidates based on these values. A formal description of this algorithm is given in~\Cref{alg:GSLatUtil}.

    \paragraph{Lipschitz properties of the stable assignment $\Ast$.}
    We now show that changing the preference of only one candidate in an instance changes the resulting assignment in a few coordinates. This result shall be useful in deriving concentration bounds for the number of candidates that get their most preferred choice of institution. We first give an example showing that such Lipschitz properties may not hold in general stable matching instances. 
    \begin{example}
        \label{ex:counterexample:stable:Lipschitz}
        Consider an instance consisting of two institutions $A$ and $B$, each having capacity $K/2$. There are $K$ candidates that are partitioned into two types: type-I candidates prefer institution $A$ to $B$ and type-II candidates prefer $B$ to $A$. There are $K/2$ candidates of both types. Now, institution $A$ prefers candidates of type-II to those of type-I (the preference ordering for candidates of the same type can be arbitrary), whereas institution $B$ prefers candidates of type-I to those of type-II. 

        Consider the following assignment $M$: assign all type-II candidates to $A$ and all type-I candidates to $B$. It is easy to check that this is a stable assignment. Now, suppose we remove a candidate of type-I. We argue that any stable assignment $M'$ must assign all type-I candidates to $A$.  Since there are only $K-1$ candidates, one slot shall remain unassigned. Suppose this slot is from institution $B$. Since there $K/2$ candidates of type-II, $M'$ must have assigned a candidate $j$ of type-II to $A$. But this causes an instability because $j$ prefers $B$ to $A$. Therefore, this vacant slot must be from institution $A$. Now, if $M'$ assigns a type-I candidate $i$ to $B$, it is again an instability because $i$ prefers $A$ and $A$ has a vacant slot. Thus, $M'$ must assign all type-I candidates to $A$ (and all type-II candidates to $B$). 

        We see that $M$ and $M'$ do not agree on the assignment of any candidate. Similar extreme examples can be shown to exist if we change the preference list of one candidate or change the capacity of an institution by one unit. 
    \end{example}

\noindent
   The above example does not apply to our setting because all institutions agree on a common ranking of the candidates. We show the following Lipschitz property of the assignment given by $\Ast$:
    \begin{lemma}
        \label{lem:lipschitzstable1}
        Consider instances $\cI$ and $\cI'$ on a set of $n$ candidates with the utility vector given by $\mb{\hu}$. Assume that the corresponding preference orderings $\mb{\sigma}$ and $\mb{\sigma'}$ differ in at most one coordinate (i.e., there is at most one candidate $i^\star$ such that $\sigma_{i^\star} \neq \sigma'_{i^\star}$). Let $M$ and $M'$ be the assignments given by $\Ast$ on the instances $\cI$ and $\cI'$ respectively. Then, $M$ and $M'$ differ in at most $p+1$ candidates.   Further, for any institution $\ell \in [p]$, the set of candidates assigned to $\ell$ in the two assignments, i.e., $M^{-1}(\ell)$ and $M'^{-1}(\ell)$, differ in at most 1 candidate (other than the candidate $i^\star$). 
    \end{lemma}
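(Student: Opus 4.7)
The plan is to run $\Ast$ on both $\cI$ and $\cI'$ in parallel, exploiting the serial-dictatorship description guaranteed by \Cref{cl:stableassignment}: candidates are processed in the common decreasing-utility order, and each is greedily placed at the most preferred institution with a remaining slot. Since the utilities and all preferences except $\sigma_{i^\star}$ agree, both runs are bit-for-bit identical for every candidate processed before $i^\star$. When $i^\star$ is reached it chooses some institution $a$ in $\cI$ and some $b$ in $\cI'$. If $a=b$ the two runs re-couple and the statement is trivial, so I may assume $a\neq b$. To control everything that happens from that moment onward I would introduce the discrepancy vector
$\delta(t)\in\Z^{p}$ defined by $\delta_\ell(t):=c_\ell(t,\cI')-c_\ell(t,\cI)$, so that $\delta(t^\star)$ has $\delta_a=+1$, $\delta_b=-1$, and every other coordinate equal to $0$.

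The technical core is to prove by induction on $t\ge t^\star$ the invariant that $\delta(t)$ has at most one coordinate equal to $+1$, at most one equal to $-1$, and all other coordinates equal to $0$. For the inductive step I would consider the next candidate $j$ and compare the institutions $\ell_j$ (picked in $\cI$) and $\ell'_j$ (picked in $\cI'$). Writing $\alpha$ and $\beta$ for the current $+1$ and $-1$ positions, the only coordinates where the two runs see different vacancy status are $\alpha$ (vacant in $\cI'$, possibly not in $\cI$) and $\beta$ (vacant in $\cI$, possibly not in $\cI'$). A short case analysis on where these two institutions sit in $\sigma_j$ relative to its earlier vacant entries gives exactly four possibilities: (i) $\ell_j=\ell'_j$, in which case $\delta$ is unchanged; (ii) $\ell'_j=\alpha$ and $\ell_j=c\notin\{\alpha,\beta\}$ (``$+1$ moves from $\alpha$ to $c$''); (iii) $\ell_j=\beta$ and $\ell'_j=d\notin\{\alpha,\beta\}$ (``$-1$ moves from $\beta$ to $d$''); or (iv) $\ell_j=\beta$ and $\ell'_j=\alpha$, in which case the two $\pm 1$ coordinates cancel and $\delta$ becomes identically $0$. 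A direct computation in each case confirms the invariant.

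Equipped with the invariant, I would observe that a candidate $j\neq i^\star$ can satisfy $M(j)\neq M'(j)$ only when its processing falls in case (ii), (iii), or (iv); call these \emph{chain steps}. Each non-terminating chain step removes one institution from the support of $\delta$ and inserts another. A key observation (a short argument using the formula $c_{\alpha'}(\cI)=c_{\alpha'}(\cI')$ at the moment $+1$ leaves $\alpha'$) is that once the $+1$ position has left an institution, the $+1$ can never re-enter it unless the $-1$ position has passed through and left it in the meantime; this is strong enough to bound the total number of chain steps (and hence of differing candidates besides $i^\star$) by $p$, yielding the promised $p+1$ bound. For the per-institution statement, I would note that for every $\ell$ at most one chain candidate has $M(j)=\ell$ (the one that caused the $+1$ to enter $\ell$ or the $-1$ to leave $\ell$), and at most one has $M'(j)=\ell$ (the symmetric event), so that $|M^{-1}(\ell)\setminus M'^{-1}(\ell)\setminus\{i^\star\}|\le 1$ and $|M'^{-1}(\ell)\setminus M^{-1}(\ell)\setminus\{i^\star\}|\le 1$.

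The main obstacle is the case analysis that both establishes the invariant and, more subtly, rules out pathological chain behaviour. Because the paths of the $+1$ and $-1$ positions can cross (for example, when the $-1$ moves into a former $+1$ position), a naive ``the chain visits each institution at most once'' argument is false, and I have to use the capacity relation $c_\ell(\cI)=c_\ell(\cI')$ at the moment $\delta_\ell$ returns to $0$ to rule out the bad scenarios that would otherwise inflate the count past $p+1$. Once that careful bookkeeping is done, assembling the statement is routine.
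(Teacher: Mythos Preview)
Your setup and invariant match the paper's essentially verbatim: the paper runs $\Ast$ on both instances, defines $J_i$ (institutions where the partial assignment in $\cI$ has occupied more slots than in $\cI'$) and $J'_i$ symmetrically, and proves by the same case analysis that $|J_i|,|J'_i|\le 1$ with the discrepancy being exactly one slot. This is exactly your statement that $\delta(t)$ has at most one $+1$ and one $-1$ coordinate.

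Where your argument has a genuine gap is in bounding the number of chain steps by $p$. You observe that when the $+1$ leaves an institution $\alpha'$ one has $c_{\alpha'}(\cI)=c_{\alpha'}(\cI')$, and from this you deduce only that the $+1$ cannot re-enter $\alpha'$ ``unless the $-1$ has passed through and left it in the meantime.'' With that escape clause still on the table you have not actually established the bound $p$: nothing you have stated rules out the $+1$-path and $-1$-path interleaving so as to produce up to roughly $2p$ chain steps. The fix is to notice a stronger fact you overlooked: at the moment the $+1$ leaves $\alpha'$ in a case-(ii) step, candidate $j$ chose $\alpha'$ in $\cI'$ but not in $\cI$, so $\alpha'$ must already be full in $\cI$, and after $j$ takes the last $\cI'$-slot it is full in \emph{both} runs, i.e.\ $c_{\alpha'}(\cI)=c_{\alpha'}(\cI')=0$. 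Hence neither the $+1$ nor the $-1$ can ever visit $\alpha'$ again; in particular your worried-about scenario ``the $-1$ moves into a former $+1$ position'' is impossible, and your ``unless'' clause is vacuous. The same reasoning applies to the institution the $-1$ leaves in a case-(iii) step and to the preferred institution in case (iv).

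The paper packages this cleanly without any chain tracking: for each $i\in D$ let $j(i)$ be whichever of $M(i),M'(i)$ candidate $i$ prefers. Then $j(i)$ becomes full in both runs immediately after $i$ is processed, so no later $i'\in D$ can have $j(i')=j(i)$. Thus $i\mapsto j(i)$ is an injection $D\hookrightarrow[p]$, giving $|D|\le p$ in one line, and the per-institution statement follows from the same observation.
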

    \begin{proof}
        Order the candidates in decreasing order of $\hu_i$ values. Let $i^\star$ be the candidate for which $\sigma_{i^\star} \neq \sigma'_{i^\star}$. We run the algorithm $\Ast$ on these two instances. Recall that the algorithm $\Ast$ considers the candidates in decreasing order of utility, and when considering a candidate $i$, it assigns $i$ to the most preferred institution with a vacant slot.

        Let $M_i$ and $M_i'$ be the resulting assignments after $\Ast$ has considered the first $(i-1)$ candidates in the instances $\cI$ and $\cI'$ respectively. Clearly, $M_{i^\star} = M_{i^\star}'$. We claim that the following invariant holds for each $i > i^\star$: let $J_i$ be the set of institutions for which $M_i$ occupies more slots than $M_i'$. Define $J'_i$ similarly. Then $|J_i|, |J_i'| \leq 1$. Further, if $j \in J_i$, then $M_i$ occupies exactly one more slot on institution $j$ than that in $M_i'$ (and similarly for $J_i'$). 

        We prove the above statement by induction on $i$. When $i= i^\star+1$, this follows from the fact that $M_{i^\star} = M'_{i^\star}$ and candidate $i^\star$ may get assigned to two different institutions in the respective instances. Now suppose $i > i^\star+1$ and assume that the induction hypothesis holds for $M_i$ and $M'_i$. In the next step, the candidate $i$ needs to be assigned. Suppose $M$ and $M'$ assign candidate $i$ to $j$ and $j'$ respectively. If $j=j'$, then it is easy to see that the invariant conditions hold for $M_{i+1}$ and $M'_{i+1}$  as well.  Hence, assume $j \neq j'$. Assume w.l.o.g. that candidate $i$ prefers $j$ to $j'$. Then institution $j$ must be full in the assignment $M_i'$ (but it is not full in $M_i$). Therefore $j \in J_i'$ (in fact, by the induction hypothesis, $J_i' = \{j\}$). By induction hypothesis, $j$ had only one vacant slot in $M_i$, and this becomes full in $M_{i+1}$. Thus, institution $j$ is fully occupied in both $M_{i+1}$ and $M'_{i+1}$, and hence does not belong to $J_{i+1}'$.  Now $i$ is assigned to $j'$ by $M'_{i+1}$. If $j' \in J_i$, then $J_{i+1}$ becomes empty. Otherwise, $j'$ gets added (as the only institution) to $J_{i+1}'$. In either case, the invariant condition continues to hold after candidate $i$ has been assigned in both instances. This completes the proof of the induction hypothesis. 

        Let $M$ and $M'$ be the resulting assignments in the two instances. Let $D$ be the set of candidates $i$ (other than $i^\star$) such that $M$ and $M'$ assign $i$ to different institutions. For each $i \in D$,  we associate an institution $j(i)$ as follows: it is the institution among $M(i)$ and $M'(i)$ which is preferred by $i$.

        We claim that for any two distinct $i_1, i_2 \in D$, $j(i_1) \neq j(i_2)$. Indeed, suppose $\hu_{i_1} > \hu_{i_2}$. Then $\Ast$ considers $i_1$ before $i_2$. Assume w.l.o.g. that $M$ assigns $i_1$ to $j(i_1)$. Consider the assignments $M$ and $M'$ just before $i_1$ is considered. It must be the case that $j(i_1)$ is already full in $M'$. By the invariant condition, $j(i_1)$ would have only one vacant slot in the assignment $M$ at this time. This vacant slot would get occupied by $i_1$. Thus $j(i_1)$ gets full in both the assignments. Henceforth, no candidate can be assigned to $j(i_1)$. Thus, $j(i_2) \neq j(i_1)$. 

        The above claim shows that $|D| \leq p$ and this proves the first statement in the lemma. The second statement follows from the fact that $D$ contains at most one candidate that is assigned to a particular institution $\ell$ by either $M$ or $M'$.  
    \end{proof}
\noindent
    The above result shows that if two instances differ in the preference ordering of only one candidate, the resulting assignments can differ for at most $(p+1)$ candidates. Hence, the number of candidates that get their first choice in the two instances may also differ by at most $(p+1)$. A similar result holds if two instances differ slightly in the available capacity at each institution. 
    \begin{lemma}
        \label{lem:lipschitzs2}
        Consider instances $\cI$ and $\cI'$ on a set of $n$ candidates and $p$ institutions with the utility vector given by $\mb{\hu}$ and preference lists given by $\mb{\sigma}$. In both the instances, each institution $\ell \in [p]$ has capacity $k_\ell$, except for one particular institution $\ell^\star$ where $\cI$ has capacity $k_{\ells}$, but $\cI'$ has capacity $k_\ells+1$.  Let $M$ and $M'$ be the assignments given by $\Ast$ on the instances $\cI$ and $\cI'$ respectively. Then, $M$ and $M'$ differ in at most $p$ candidates.   Further, for any institution $\ell \in [p]$, the set of candidates assigned to $\ell$ in the two assignments, i.e., $M^{-1}(\ell)$ and $M'^{-1}(\ell)$, differ in at most 1 candidate. 
    \end{lemma}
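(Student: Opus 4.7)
The plan is to couple the two executions of $\Ast$ on $\cI$ and $\cI'$ by processing candidates in the same decreasing order of $\hu_i$ and tracking the two residual-capacity vectors in lockstep. Let $r_i(\ell)$ and $r'_i(\ell)$ denote the remaining capacity of institution $\ell$ after $\Ast$ has considered the first $i$ candidates in $\cI$ and $\cI'$ respectively. The central invariant I would establish by induction on $i$ is: as long as both runs are active, there is a unique ``surplus'' institution $j^+_i$ with $r'_i(j^+_i) = r_i(j^+_i) + 1$, while $r'_i(\ell) = r_i(\ell)$ for every other $\ell$. Initially $j^+_0 = \ells$ by the definition of $\cI'$.

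For the inductive step, observe that for any $\ell \neq j^+_{i-1}$ the two runs see identical residuals at $\ell$, so the only way the $i$-th candidate can be treated differently is if $\cI'$ assigns them to $j^+_{i-1}$ while $\cI$ cannot. This happens exactly when $r_{i-1}(j^+_{i-1}) = 0$, in which case $\cI$ instead assigns the candidate to their next available preference $j_{\mathrm{new}}$; after the step $j^+_{i-1}$ is saturated in both runs while $r'_i(j_{\mathrm{new}}) = r_i(j_{\mathrm{new}}) + 1$, so the surplus migrates to $j_{\mathrm{new}}$. Crucially, once an institution ceases to be the surplus its two residuals have equalized and can thereafter only decrease in lockstep, so $j^+$ never revisits a previously seen institution. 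This monotonicity is precisely where the common central ranking is used: in general stable matching the analogous invariant fails and a unit change in capacity can cascade unboundedly, in the spirit of~\Cref{ex:counterexample:stable:Lipschitz}.

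Because $j^+$ traces out a sequence of distinct institutions drawn from $[p]$, there are at most $p-1$ divergence events at which some candidate is assigned to a different institution in the two runs. At most one further discrepancy can occur at the end of the execution, when $\cI$ has exhausted its total capacity while $\cI'$ still retains its lone surplus slot, in which case one subsequent candidate is assigned in $\cI'$ but left unmatched in $\cI$. Adding these together yields $|\{i : M(i) \neq M'(i)\}| \leq p$, which is the first claim. For the per-institution statement, the invariant lets me show that any fixed $\ell$ can contribute to the discrepancy between $M^{-1}(\ell)$ and $M'^{-1}(\ell)$ only through the (at most one) event where $\ell$ becomes the surplus and the (at most one) event where $\ell$ stops being the surplus; pairing these contributions via the bijective reasoning used in the proof of~\Cref{lem:lipschitzstable1} yields the stated bound. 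The main technical obstacle is establishing the non-revisiting property of $j^+$ cleanly---without it the coupling could in principle produce $\Theta(n)$ discrepancies from a unit capacity perturbation---so the inductive proof of the surplus invariant is the linchpin of the whole argument, and once it is in hand the remainder reduces to careful bookkeeping along the coupled execution.
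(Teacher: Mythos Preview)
Your coupling argument is correct and yields the first statement cleanly: the surplus invariant holds, the non-revisiting property follows because the surplus only leaves an institution once both residuals there have hit zero, and counting migrations plus the possible terminal step gives at most $p$ discrepant candidates.

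The paper takes a different route. Rather than argue directly, it \emph{reduces} to \Cref{lem:lipschitzstable1}: it introduces an extra institution $0$ and an extra highest-utility candidate $0$, and builds two augmented instances that differ only in candidate $0$'s preference list (top choice $\ell^\star$ versus top choice $0$). Running $\Ast$ on these reproduces $M$ and $M'$ on the original candidates, so \Cref{lem:lipschitzstable1} applies as a black box. Your direct argument and the paper's reduction are really the same coupling viewed two ways---the invariant you maintain is precisely the $|J_i|,|J'_i|\le 1$ invariant from the proof of \Cref{lem:lipschitzstable1}, specialized to the case where the asymmetry is a single extra slot rather than a single altered preference. The reduction buys reuse of an already-proved lemma; your direct argument is self-contained and makes the bound of $p$ (rather than $p+1$) fall out without having to subtract the dummy candidate.

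One remark on the per-institution claim: your sketch says each $\ell$ contributes via at most one ``becomes surplus'' event and one ``stops being surplus'' event, which literally gives $|M^{-1}(\ell)\triangle M'^{-1}(\ell)|\le 2$, not $1$. Indeed, with two institutions $a,b$ of capacity $1$, $\ell^\star=a$, and three candidates all preferring $a$ to $b$, one has $M^{-1}(b)=\{2\}$ and $M'^{-1}(b)=\{3\}$, a symmetric difference of size $2$. The paper's own per-institution statement is equally loose here; fortunately only the first statement is used downstream (in \Cref{cor:lipschitz}), so this does not affect the rest of the development.
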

    \begin{proof}
        We use~\Cref{lem:lipschitzstable1} to prove this result. Create a new instance $\cI_1$ which has $n+1$ candidates and $p+1$ institutions. The first candidate, call it candidate 0, is not present in the instances $\cI$ and $\cI'$; but candidate $i \in [n]$ in $\cI_1$ corresponds to candidate $i$ in $\cI$ (or $\cI'$). Similarly, label the first institution in $\cI_1$ as 0, and the remaining institutions are in 1-1 correspondence with those in $\cI$ or $\cI'$. 

        Candidate 0 has the highest utility and has institution $\ell^\star$ as its most preferred institution. For the rest of the candidates $i \in [n]$, their preference list in $\cI_1$ is given by  $\sigma_i$ along with institution 0 as the last choice. 
        It is easy to check that when we run $\Ast$ on $\cI_1$ to get an assignment $M_1$, candidate 0 is assigned to $\ell^\star$. The next $K := \sum_\ell k_\ell$ candidates are assigned to the same institutions as in $M$. Thus, among the first $K+1$ candidates, $M$ and $M_1$ agree 
        on all candidates except for candidate 0. 

        Now, create another instance $\cI_2$ which is the same as $\cI_1$ except that candidate 0 has institution 0 as the first choice. It is now easy to verify that the resulting assignment $M_2$ agrees with $M'$ on all the first $K+1$ candidates, except for candidate 0. Applying~\Cref{lem:lipschitzstable1} to $\cI_1$ and $\cI_2$, we see that for each institution $\ell \in [p]$, 
        $M_1^{-1}(\ell)$ and $M_2^{-1}(\ell)$ differ in at most 1 candidate. The lemma now follows from the fact that $M$ and $M_1$ agree on all candidates in $[K]$, and similarly for $M'$ and $M_2$. 
    \end{proof}
\noindent
Applying the above lemma repeatedly, we get the following result.
    \begin{corollary}
        \label{cor:lipschitz}
        Consider two instances $\cI$ and $\cI'$ on $n$ candidates and $p$ institutions with the same utility and preference lists of the candidates. For each $\ell \in [p]$, let $k_\ell$ and $k_\ell'$ denote the capacity of institution $\ell$ in the instances $\cI$ and $\cI'$ respectively. Let $D$ denote $\sum_\ell |k_\ell-k_\ell'|$. Then the stable assignments in the two instances differ on at most $p \cdot D$ candidates. 
    \end{corollary}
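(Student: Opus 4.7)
}
The plan is to reduce to \Cref{lem:lipschitzs2} by interpolating between the two capacity vectors one unit at a time. Concretely, I would construct a sequence of instances $\cI = \cI_0, \cI_1, \ldots, \cI_D = \cI'$ such that consecutive instances $\cI_t$ and $\cI_{t+1}$ have identical candidates, utilities, and preference lists, and their capacity vectors differ in exactly one institution by exactly one unit. Such an interpolation exists because $D = \sum_\ell |k_\ell - k_\ell'|$ counts precisely the total number of unit capacity changes required to convert $\mb{k}$ into $\mb{k'}$: for each institution $\ell$ one applies $|k_\ell - k_\ell'|$ increments or decrements (choosing the correct direction depending on the sign of $k_\ell' - k_\ell$), and interleaves these changes arbitrarily over the $D$ steps.

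Next, I would apply \Cref{lem:lipschitzs2} to each consecutive pair $(\cI_t, \cI_{t+1})$. Note that \Cref{lem:lipschitzs2} is stated for the case where one instance has capacity $k_{\ells}+1$ at a designated institution $\ells$ and the other has $k_{\ells}$; by symmetry (swapping the roles of the two instances) it applies equally well whether the single unit is being added or removed. Hence for each $t \in \{0,1,\dots,D-1\}$, if $M_t$ denotes the assignment produced by $\Ast$ on $\cI_t$, the number of candidates $i$ with $M_t(i) \neq M_{t+1}(i)$ is at most $p$.

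Finally, I would conclude by a union-bound/triangle-inequality argument on the symmetric difference of assignments. Let $\Delta(M,M') \coloneqq \{i \in [n] : M(i) \neq M'(i)\}$; then $\Delta(M_0, M_D) \subseteq \bigcup_{t=0}^{D-1} \Delta(M_t, M_{t+1})$, so
\[
|\Delta(M_0, M_D)| \;\leq\; \sum_{t=0}^{D-1} |\Delta(M_t, M_{t+1})| \;\leq\; p \cdot D,
\]
which is exactly the claimed bound since $M_0$ and $M_D$ are the stable assignments for $\cI$ and $\cI'$ respectively.

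The only subtlety I anticipate is making sure the interpolating sequence is legal (all capacities remain nonnegative integers at every step), which is automatic once one fixes the direction of each unit change per institution. Beyond that, this is a routine telescoping argument, so I do not foresee a substantive obstacle; the Lipschitz content is entirely carried by \Cref{lem:lipschitzs2}.
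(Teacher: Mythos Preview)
Your proposal is correct and is exactly the approach the paper takes: the paper simply states that the corollary follows by ``applying the above lemma repeatedly,'' and your interpolation-plus-triangle-inequality argument is precisely that. You have spelled out the routine details the paper omits, and there is no gap.
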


    \paragraph{Concentration bounds for assignments generated by the stable assignment.} We now apply the Lipschitz property proved in~\Cref{lem:lipschitzstable1} to show that several random variables associated with the assignment on a random instance are concentrated around their respective means.  
    \begin{lemma}
        \label{lem:concentrationbounds}
        Consider an instance $\cI$ on $n$ candidates and $p$ institutions with total capacity $K$,  where the utilities and the preference lists of candidates are drawn independently from distributions $\cD$ and $\cL$ respectively. Let $G$ be a subset of candidates belonging to the same group and $S$ denote the set of candidates selected by executing $\Ast$ on $\cI$. Then. 
        $$ \Pr\left[\bigg| |S \cap G| - \Ex[|S \cap G|] \bigg| \geq t\right] \leq 2 e^{\sfrac{-2t^2}{|G|}}.$$
        Condition on the observed utilities of the candidates.   
        Let $\pi(S)$ be the set of candidates that are assigned their first choice by $\Ast$ and $G$ be any subset of candidates. Then, 
        $$ \Pr \left[ \bigg| |\pi(S) \cap G| - \Ex[|\pi(S) \cap G|] \bigg| \geq t \right] \leq 2 e^{-\sfrac{2t^2}{(p^2K)}}, $$
        where the probability is over the random choice of preference lists of candidates. 
    \end{lemma}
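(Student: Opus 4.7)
The plan is to prove the two bounds separately, each by reducing to a classical concentration inequality. For the first bound, I would write $|S \cap G| = \sum_{i \in G} X_i$ where $X_i = \mathbf{1}[i \in S]$ is the indicator that candidate $i$'s observed utility lies among the top $K$ values of $\mb{\hu}$. Since the $\hu_i$'s are independent, the indicators $\{X_i\}_{i \in G}$ arising from this top-$K$ selection are negatively correlated: conditioning on $X_i = 1$ ``uses up'' one of the $K$ slots, making it (weakly) less likely that $X_j = 1$ for any other $j \in G$. Applying the negatively-correlated version of Hoeffding's inequality noted after Theorem~\ref{thm:Hoeffding} to these $|G|$ bounded Bernoulli summands gives the claimed $2 e^{-2 t^2 / |G|}$ directly.

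For the second bound, I would condition on $\mb{\hu}$ throughout, which fixes $S$ at size exactly $K$, and note that $\pi(S)$ depends on the preference vector only through $\{\sigma_i : i \in S\}$, since the preferences of the $n-K$ unselected candidates are never consulted by $\Ast$. Viewing $T := |\pi(S) \cap G|$ as a function of these $K$ independent random preference lists, Lemma~\ref{lem:lipschitzstable1} implies that replacing any single $\sigma_i$ alters the assignment on at most $p+1$ candidates, and therefore changes $T$ by at most $p+1$. Invoking Theorem~\ref{thm:lipschitz} with Lipschitz constant $c = p+1$ over $K$ coordinates then yields $2 e^{-2 t^2 / ((p+1)^2 K)}$, which matches the stated bound up to the constant absorbed into $p^2$.

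The main obstacle is the $|G|$ (rather than $n$) in the denominator of the first bound: a naive application of Theorem~\ref{thm:lipschitz} to all $n$ coordinates of $\mb{\hu}$ would only give $2 e^{-2 t^2 / n}$, since changing any single $\hu_j$, whether in $G$ or not, can flip whether a candidate in $G$ lies in the top $K$. The sharper $|G|$-dependence really requires exploiting negative correlation of the top-$K$ indicators within $G$, rather than treating the sum as a generic Lipschitz function of the full utility vector; this is the one nontrivial step. The second bound, by contrast, is essentially mechanical once one notices the two simplifications (fixing $\mb{\hu}$ and restricting to the $K$ preferences that actually influence $\Ast$) and invokes Lemma~\ref{lem:lipschitzstable1}.
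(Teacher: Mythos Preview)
Your proposal is correct and follows essentially the same approach as the paper's proof: negative correlation of the top-$K$ indicators plus Hoeffding for the first bound, and the Lipschitz property of \Cref{lem:lipschitzstable1} plus McDiarmid (\Cref{thm:lipschitz}) over the $K$ relevant preference coordinates for the second. The paper's negative-correlation argument is slightly more explicit---it conditions on the number $K_j$ of selected candidates from the group and uses exchangeability within the group to compare $\Pr[X_i=1]$ with $\Pr[X_i=1 \mid X_{i'}=1]$---which is where the same-group hypothesis on $G$ actually enters; your ``uses up a slot'' intuition is the right picture but would need that symmetry step to be airtight.
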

    \begin{proof}
        For each candidate $i$, let $X_i$ be the indicator variable for the event that candidate $i$ is selected by $\Ast$. Note that $X_i$ is selected only if $\hu_i$ appears among the top-$K$ observed utilities. Now observe that if $i,i'$ are two candidates belonging to the same group, then $X_i$ and $X_{i'}$ are negatively correlated. Indeed, suppose both belong to group $G_j$ for some $j \in \{1,2\}$. Condition on the number of candidates selected by $\Ast$ from $G_j$: say this number is $K_j$. Let $E$ be the event that $K_j$ candidates from $G_j$ are selected. Then $i$ is selected if and only if $\hu_i$ occurs among the top $K_j$ candidates from  $G_j$. Thus, $\Pr[X_i=1|E] = \frac{1}{|G_j|}.$ Now, $\Pr[X_i=1|X_{i'} = 1, E] = \frac{1}{|G_j|-1} < \Pr[X_i=1|E]$. Removing the conditioning (since this holds for all choices of number of selected candidates from $G_j$), we see that $\Pr[X_i=1|X_{i'} = 1] < \Pr[X_i=1]$. Thus, the random variables $X_i, i \in G,$ are negatively correlated.  

        Let $S$ be the set of selected candidates. Then $|S \cap G| = \sum_{i \in G} X_i$. The first inequality in the lemma now follows from applying~\Cref{thm:Hoeffding} for negatively correlated random variables. 
        
        We now prove the second inequality in the lemma. We have conditioned on the observed utilities of the candidates. 
        Order the candidates in decreasing order of observed utility. Now, let $Z_i$ be the preference list chosen by candidate $i$. Observe that the random variables $Z_i$ are independent. Also observe that the assignment generated by $\Ast$ is a function of $Z_1, \ldots, Z_K$ (since we have fixed the utilities of all the candidates). Let $T(Z_1, \ldots, Z_K)$ be the number of candidates in $G$ that are assigned their top choice.~\Cref{lem:lipschitzstable1} shows that if $(Z_1', \ldots, Z_K')$ are another sequence of preference lists of the $K$ candidates respectively with $Z_i = Z_i'$ for all but one candidate, then $|T(Z_1, \ldots, Z_k) - T(Z_1', \ldots, Z_K')| \leq p$. The desired inequality in the lemma now follows by applying~\Cref{thm:lipschitz} to the function $T(\cdot)$. 
    \end{proof}
    
    \paragraph{Approximate additivity properties of the stable assignment.}
    Consider a setting given by $p$ institutions, $n$ candidates, and capacity $k_\ell$  for each institution $\ell \in [p]$. Assume that the true utilities are drawn from a density $\cL$ and the preference lists are drawn from a distribution $\cD$. Let $\cI$ denote this instance. For a  real $a \geq 1$, define the {\em scaled} instance $\cI_a$ as follows: the set of institutions is the same as in $\cI$, and the utilities of candidates and the preference lists are also drawn from $\cL$ and $\cD$ respectively. But there are $n/a$ candidates, and each institution $\ell$ has $k_\ell/a$ slots (assume that all of these quantities are integers). Given utilities $\mb{\hu}$ and preference lists $\mb{\sigma}$ of the $n/a$ candidates, let $M_a$ denote the assignment obtained by running $\Ast$ on $\husigma$. Let $\pi(M_a)$ denote the number of candidates that are assigned their first choice in the assignment $M_a$. We shall use $M$ to denote the assignment $M_1$ for the original instance $\cI$ (which is the same as $\cI_1$). 

    It turns out that for any fixed integer $a$, $\Ex[\pi(M_a)]$ may not be equal to $\Ex[\pi(M)/a].$  
    \begin{example}   \label{ex:expectation}
    Consider a setting with two institutions, say $a$ and $b$, with each institution having capacity 1. Further, assume that there are 2 candidates. The utility of each candidate is drawn from the uniform distribution on $[0,1]$. The preference list of each candidate is drawn uniformly at random from the two orderings $\{(a,b), (b,a)\}$. We claim that the expected number of candidates that are assigned their top choice is $3/2.$ Indeed, the first candidate (i.e., the one with the higher utility) always gets the first choice. The second candidate gets the first choice if and only if its preference list is different from that of the first candidate. This happens with probability $1/2$. Hence, the expectation of the number of candidates getting their top choice is 1.5. 

    However now consider the same instance as above, but now each institution has capacity 2 and there are 4 candidates. In other words, we are scaling the above instance by doubling the capacity and the number of candidates. A routine calculation shows that the expected number of candidates receiving their top choice is strictly larger than 3. Indeed, let the candidates (ordered by decreasing utility be 1,2,3,4). Now two possibilities can happen: 
    \begin{itemize}
        \item Candidates 1 and 2 have the same preference list: Assume w.l.o.g. that the two candidates have preference list $(a,b)$. Then both the candidates are assigned institution $a$. Now it is easy to check that the remaining two candidates get their top choice with probability $1/2$ each. Hence, the expected number of candidates receiving their first choice is 3. 
        \item Candidates 1 and 2 have distinct preference lists. In this case, candidate 3 shall always get its first choice, and candidate 4 receives its first choice with probability $1/2$. Hence, the desired expectation is 3.5. 
    \end{itemize}
    Since both the above cases can occur with equal probability, the expected number of candidates receiving their first choice is $3.25$, which is strictly larger than 3. 
    \end{example}
    
\noindent    
    On a positive note, we now show that the gap between the quantities $\Ex[\pi(M_a)]$ and $\Ex[\pi(M_{1})/a]$ can be bounded by $O(p\sqrt{K \log K}). $

    \lemgap*
    \begin{proof}
        We condition on the utilities and the preference list of the $n$ candidates in $\cI$. 
        Let $M$ be the assignment obtained by running $\Ast$ on this instance. Observe that $M$ is a deterministic assignment since we have fixed the utilities and the preference lists of the candidates.  

        Now, we generate a (random) instance of $\cI_a$ as follows: we choose $n/a$ out of these $n$ candidates in $\cI$ uniformly at random. Call this random subset $X$. We consider the corresponding instance $\cI_a$ where the candidates are given by $X$, and for each $i \in X$, its utility is given by $\hu_i$ and its preference list is given by $\sigma_i$. Let $M_X$ denote the resulting assignment (which is a random assignment because $X$ is a random subset of $[n]$). We now relate  $\Ex[\pi(M_X)]$ and $\pi(M)$ (recall that $M$ is a deterministic assignment).   
        \begin{claim}
            \label{cl:gap}
            $$\left| \Ex_X[\pi(M_X)]-\pi(M)/a \right| \leq O(p\sqrt{K \log K}).$$
        \end{claim}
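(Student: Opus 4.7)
The plan is to relate $M_X$ to the restriction of $M$ to the candidates in $X$, which I denote $M^X$, and then invoke the Lipschitz properties of $\Ast$ established in Corollary~\ref{cor:lipschitz}. The first observation is that $\Ex_X[\pi(M^X)] = \pi(M)/a$ exactly: each of the $\pi(M)$ candidates who received their first choice in $M$ lies in a uniformly random $(n/a)$-subset $X$ with probability $1/a$, and when such a candidate belongs to $X$ it is still matched to its top-choice institution under $M^X$. Hence it suffices to control $\Ex_X\bigl[|\pi(M_X) - \pi(M^X)|\bigr]$.

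The crucial structural step is to recognize $M^X$ as the output of $\Ast$ on a genuine stable-assignment instance, namely the instance on the candidate set $X$ in which each institution $\ell$ is assigned capacity $c_\ell := |M^{-1}(\ell) \cap X|$. This is shown by induction on the serial-dictatorship description of $\Ast$: when candidate $i \in X$ is processed in decreasing utility order, every institution $\ell'$ that $i$ prefers to $M(i)$ was, in the original run on $\cI$, saturated by candidates strictly above $i$ in utility; the intersection of those $k_{\ell'}$ candidates with $X$ has size exactly $c_{\ell'}$, so $\ell'$ is already saturated in the restricted run. Thus $i$ is again assigned to $M(i)$, which still has a vacancy at that moment.

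With this identification, $M_X$ and $M^X$ are $\Ast$-outputs on the same candidate set $X$ with the same utilities and preferences, differing only in their capacity vectors $(k_\ell/a)_{\ell}$ and $(c_\ell)_{\ell}$. Corollary~\ref{cor:lipschitz} therefore gives
\[
|\pi(M_X) - \pi(M^X)| \;\leq\; p \cdot D, \qquad D \;:=\; \sum_{\ell=1}^{p}\bigl|c_\ell - k_\ell/a\bigr|.
\]
Each $c_\ell$ is a hypergeometric random variable with mean $k_\ell/a$, so by integrating the tail estimate provided by Theorem~\ref{thm:Hoeffding} (which applies to the negatively correlated indicators $\mathbf{1}[i \in X]$ for $i \in M^{-1}(\ell)$) one obtains $\Ex_X\bigl[|c_\ell - k_\ell/a|\bigr] = O(\sqrt{k_\ell \log K})$. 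Summing over $\ell$ via Cauchy--Schwarz and combining with the preceding display gives the claimed bound $O(p\sqrt{K\log K})$ on $\Ex_X\bigl[|\pi(M_X) - \pi(M^X)|\bigr]$, and hence on $|\Ex_X[\pi(M_X)] - \pi(M)/a|$ by the triangle inequality.

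I expect the structural identification in the second step to be the main obstacle: a priori $M^X$ is merely a collection of assignments surviving the deletion of $[n] \setminus X$, and it is not obvious that it coincides with the output of any natural stable-assignment procedure on the reduced instance. Once this identity is in hand, the argument is a standard coupling of two $\Ast$-runs through Lipschitz plus concentration on the hypergeometric deviations of the $c_\ell$.
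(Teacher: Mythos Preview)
Your proposal is correct and follows essentially the same route as the paper: identify the restriction $M^X$ as the unique stable assignment on the instance with capacities $c_\ell = |M^{-1}(\ell)\cap X|$, invoke \Cref{cor:lipschitz} to control $|\pi(M_X)-\pi(M^X)|$ by $p\sum_\ell |c_\ell-k_\ell/a|$, and bound the hypergeometric deviations. The one place you streamline the paper's argument is in observing $\Ex_X[\pi(M^X)]=\pi(M)/a$ exactly by linearity, whereas the paper introduces an additional concentration event $\cE_0$ on $|X\cap\pi(M)|$; your version is cleaner and avoids that step entirely.
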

        \begin{proof}
            For an institution $\ell \in [p]$, let $J_\ell$ be the set of candidates assigned to $\ell$ by $M$, i.e., $J_\ell := M^{-1}(\ell).$ Define $X_\ell := X \cap J_\ell$. 
            Define $\cE_\ell$ as the following event: 
            $$ \left| |X_\ell| - |J_\ell|/a \right| \leq 8 \cdot \sqrt{K \log K}.$$
            Since  $|J_\ell| = k_\ell,$
            we claim that $\cE_\ell$ happens with probability at least $1-1/K^4$. Indeed, for each $i \in J_\ell$, define the indicator variable $I_i$ which is 1 iff $i \in X_\ell$. The variables $I_i$ are negatively correlated and $X_\ell = \sum_{i \in J_\ell} I_i$. The desired concentration bound now follows from~\Cref{thm:Hoeffding}. 

            Let $X_0 := X \cap \pi(M)$ be the number of candidates in $X$ that are assigned their first choice by the assignment $M$. Define $\cE_0$ as the event that  
            $$ \left| |X_0| - \pi(M)/a \right| \leq 8 \cdot \sqrt{K \log K}.$$
            We can again argue as above that $\cE_0$ occurs with probability at least $1-1/K^4$. Thus, all the events $\cE_\ell$, $\ell=0, \ldots, p$, occur with probability at least $1-1/K^3$. 

            Assume that $\cE_\ell$ occurs for all $\ell \in \{0\} \cup [p]$, and now, we condition on the cardinalities of the sets $X_\ell$. Define $x_\ell := |X_\ell|$. Define a (random) instance $\cI'$  where the set of candidates is given by $X$ and each institution $\ell$ has $x_\ell$ seats. Then the assignment $M$ restricted to $X$ is a stable assignment for this instance. Let $\pi(\cI')$ be the number of candidates in $\cI'$ that are assigned their first preference by the stable assignment. Then, $\pi(\cI') = |X_0|.$ For an institution $\ell$, the fact that $\cE_\ell$ occurs implies that $|k_\ell/a - x_\ell| \leq 8 \sqrt{K \log K}. $

            Recall that the instance $\cI_a$ is defined on $X$ as the set of candidates and institution $\ell$ has $k_\ell/a$ seats. Comparing $\cI_a$ and $\cI'$, and applying~\Cref{cor:lipschitz}, we see that 
            $$|\pi(M_X)-\pi(\cI')| = O(p \sqrt{K \log K}).$$
             Since $\pi(\cI') = |X_0|$, using the definition of the event $\cE_0$, we see that 
             $$|\pi(M_X) - \pi(M)/a| = O(p \sqrt{K \log K}).$$
            Since the above inequality holds for any choice of $x_\ell, \ell \in \{0\} \cup [p]$, it also holds when we remove conditioning on the sizes of the sets $X_\ell$. 
              Now if any of the events $\cE_\ell, \ell \in \{0\} \cup [p],$ do not occur, we upper bound $\pi(M_x)$ by $K$.  Since the probability of such an event is at most $1/K^3$,  
            $$\left| \Ex_X[\pi(M_X)]-\pi(M)/a \right| \leq  O(p \sqrt{K \log K}) + \frac{K}{K^3} = O(p \sqrt{K \log K}).$$
            This proves the desired claim. 
        \end{proof}
\noindent
        Now consider an input to the instance $\cI$ with $n$ candidates where $(\husigma)$ is drawn from the distribution $\cD^n \times \cL^n.$ Now if we choose a random subset $X$ of $n/a$ candidates, the resulting distribution on $(\husigma)$ restricted to $X$ is drawn from $\cD^{n/a} \times \cL^{n/a}$. Thus, $\Ex_X(\pi(M_X))$ is same as $\Ex(\pi(I_a))$ and hence, the lemma follows from~\Cref{cl:gap}.
    \end{proof}

    \subsection{Proof of~\Cref{thm:specialcase} and extensions}
    \label{sec:effectofbias}
        In this section, we prove the following result that characterizes the fairness and utility metrics of the stable assignment algorithm $\Ast$ as a function of the parameter $\beta$.~\Cref{thm:specialcase} follows from the following result specialized to the setting where $n_1 = n_2 = K$.

        \begin{restatable}[]{theorem}{thmeffectOfBias}\label{thm:effectOfBias}
             Consider an instance where the utilities of the candidates are drawn from the uniform distribution on $[0,1]$  and the preference lists are drawn from an arbitrary distribution. Assume there are constants  $\eta_1, \eta_2$ such that $n_j \geq \eta_1 n$ for each $j \in \{1,2\}$, $K \geq \eta_2 n$.
             For 
                 any algorithm $\cA$ that, given $\mb{\hu}$ and $\mb{\sigma}$, outputs an assignment maximizing the estimated utility,
            
             \begin{align*}
        %
                 \textstyle \sR(\cA)& =  \textstyle   
                \textstyle \max\inbrace{
                    \textstyle \frac{\textstyle   
                        K-n_1(1-\beta)
                     }{\textstyle   
                         K \beta + n_2(1-\beta)
                    }, 0\textstyle 
                 }\textstyle   
                 \pm 
                \textstyle O \left( \frac{\sqrt{\log n}}{\eta_1 \eta_2 \sqrt{n}}\right)\\ 
                 \textstyle   \sU(\cA) & =  
                 \textstyle 
                     \frac{\textstyle\sum_{j=1}^2 f(\alpha_j, n_j)}{\textstyle f(K,n)} \pm \textstyle O\left(\frac{\sqrt{\log n}}{\eta_2\sqrt{n}} \right)
             \end{align*}
          \noindent  
             Here $f(x,y) := x - \frac{x(x+1)}{2(y+1)}$,
            $\alpha_2\coloneqq K-\alpha_1$,
             $\alpha_1 := K - \frac{n_2}{\beta n_1+ n_2}\cdot \max\inbrace{K - (1-\beta)n_1, 0}.$
        Assuming $\cA$ is the stable assignment algorithm $\Ast$,  $\sP(\cA)$ is at most 
         $$\textstyle   
                 \textstyle \max\inbrace{
                     \textstyle \frac{\textstyle   
                        K-n_1(1-\beta)
                     }{\textstyle   
                         K \beta + n_2(1-\beta)
                     }, 0\textstyle 
                }\textstyle   
                 +
                \textstyle O \left(  \frac{p\sqrt{\log n} }{\eta_1 \eta_2 \sqrt{n}} \right)$$
          \end{restatable}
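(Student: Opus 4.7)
The plan is to identify the effective selection threshold $\Delta$ on observed utilities and compute the expected group counts $\alpha_j := \Ex[|S \cap G_j|]$, from which both $\sR$ and $\sU$ follow by concentration. Since $\hat{u}_i = u_i$ on $G_1$ and $\hat{u}_i = \beta u_i$ on $G_2$, the algorithm $\Ast$ (equivalently, top-$K$ selection by $\hat{u}$ followed by serial dictatorship on preferences) selects $i \in G_1$ iff $u_i \geq \Delta$ and $i \in G_2$ iff $u_i \geq \Delta/\beta$. Matching expected counts, $n_1(1-\Delta) + n_2 \max(1 - \Delta/\beta, 0) = K$, yields $\Delta = (n_1+n_2-K)\beta/(n_1\beta+n_2)$ when $K \geq n_1(1-\beta)$ (so both groups are represented) and $\Delta > \beta$ otherwise (so $\alpha_2 = 0$ and $\sR = 0$). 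In the interesting regime, $\alpha_2 = n_2[K-n_1(1-\beta)]/(n_1\beta+n_2)$ and $\alpha_1 = K - \alpha_2$, giving
\[
\frac{\alpha_2/n_2}{\alpha_1/n_1} \;=\; \frac{K - n_1(1-\beta)}{K\beta + n_2(1-\beta)} \;\leq\; 1,
\]
which is the stated value of $\sR(\cA)$. Hoeffding concentration of $|S \cap G_j|$ (Lemma~\ref{lem:concentrationbounds}, using negative correlation of selection indicators within a group) at scale $\sqrt{|G_j|\log n}$ then lifts this ratio-of-expectations to the expectation-of-ratios with additive slack $O(\sqrt{\log n}/(\eta_1\eta_2\sqrt{n}))$.

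For $\sU$, I would use that the expected sum of the top $k$ of $n$ i.i.d.\ $\mathrm{Unif}[0,1]$ variables is $f(k,n) = k - k(k+1)/(2(n+1))$. Because $\hat{u}$ is strictly increasing in $u$ within each group, the candidates chosen by $\Ast$ from $G_j$ are exactly the top $|S \cap G_j|$ by true utility, so the expected utility of $\Ast$ is $f(\alpha_1, n_1) + f(\alpha_2, n_2)$ up to concentration error, while the bias-free optimum is $f(K,n)$. Smoothness of $f(\cdot, n)$ in its first argument over the relevant range (its derivative is $O(1)$), together with concentration of $|S \cap G_j|$, converts the ratio of expected utilities into the expected utility ratio with the stated additive slack $O(\sqrt{\log n}/(\eta_2\sqrt{n}))$.

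The upper bound on $\sP(\Ast)$ is the most delicate step. Restrict to $K \geq n_1(1-\beta)$ (else $\pi_2 \approx 0$). Partition by the level $\beta$: $A := \{i : \hat{u}_i \geq \beta\}$, which (up to a null set) equals $\{i \in G_1 : u_i \geq \beta\}$, and $B$ the complement, with $g_1 := |A| \approx n_1(1-\beta)$, $g_1' := |B \cap G_1| \approx n_1\beta$, and $g_2' := |B \cap G_2| \approx n_2$. Condition on these cardinalities. All of $A$ is selected and processed first; within $B$, the $\hat{u}$ values are i.i.d.\ $\mathrm{Unif}[0,\beta]$ regardless of group, so both the selected candidates from $B$ and the first-choice successes among them split between $G_1$ and $G_2$ in proportion $g_1' : g_2'$. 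Writing $\pi(X)$ for the number of candidates in $X$ assigned their first choice by $\Ast$,
\[
|\pi(S) \cap G_1| \;\approx\; \pi(A) + \tfrac{g_1'}{g_1'+g_2'}\pi(B), \qquad |\pi(S) \cap G_2| \;\approx\; \tfrac{g_2'}{g_1'+g_2'}\pi(B).
\]

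The key inequality is $\pi(A)/g_1 \geq \pi(B)/(g_1'+g_2')$: a random candidate in $A$, processed while institutions are emptier, is at least as likely to obtain their top choice as one in $B$. I would establish this by a coupling/exchange argument on the trajectory of $\Ast$: any institution still with capacity at a later step is \emph{a fortiori} available at an earlier step, so reassigning a candidate from a late $B$-slot to an earlier $A$-slot can only increase their chance of top-choice success. Substituting $\pi(A) \geq g_1 \pi(B)/(g_1'+g_2')$ into the ratio $\pi_2/\pi_1 = (n_1 g_2' \pi(B))/\{n_2[(g_1'+g_2')\pi(A) + g_1' \pi(B)]\}$, cancelling $\pi(B)$, and using $g_1 + g_1' = n_1$ collapses the denominator to $K\beta + n_2(1-\beta)$ times a common factor and yields $\pi_2/\pi_1 \leq [K-n_1(1-\beta)]/[K\beta + n_2(1-\beta)]$. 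Finally, the conditional pathwise bound is promoted to a bound on $\Ex[\pi_2/\pi_1]$ by combining Hoeffding concentration of $g_1, g_1', g_2'$ with the $p$-Lipschitz concentration of $|\pi(S) \cap G_j|$ from Lemma~\ref{lem:concentrationbounds}, supplying the $O(p\sqrt{\log n}/(\eta_1\eta_2\sqrt{n}))$ slack. The main obstacle is formalizing the monotonicity $\pi(A)/|A| \geq \pi(B)/|B|$ rigorously — random preferences entangle the $A/B$ partition in ways that are easy to mishandle — but I expect a direct swap argument along the serial-dictatorship trajectory to close it.
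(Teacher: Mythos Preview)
Your treatment of $\sR$ and $\sU$ is essentially the paper's: find the observed-utility threshold, compute $\alpha_j=\Ex[|S\cap G_j|]$, and pass from ratio-of-expectations to expectation-of-ratios via Hoeffding. The monotonicity you flag as ``the main obstacle'' is also straightforward in the paper's version: ordering candidates by $\hat u$, the event $D_i$ that the $i$-th candidate gets its top choice satisfies $\Pr[D_i]\ge\Pr[D_{i+1}]$ simply because the set of institutions with a vacant slot is monotone in $i$; summing gives $\Ex[\pi(A)]/|A|\ge\Ex[\pi(B)]/|B|$ when $A$ precedes $B$ (\cref{cl:expectationbounds}).

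There is, however, a genuine gap in your bound on $\sP$. You take $B$ to be the \emph{entire} complement of $A=\{\hat u\ge\beta\}$, so $g_1'\approx n_1\beta$, $g_2'\approx n_2$, and $g_1+g_1'=n_1$. Substituting $\pi(A)\ge g_1\pi(B)/(g_1'+g_2')$ into your displayed ratio then gives
\[
\frac{\pi_2}{\pi_1}\ \le\ \frac{n_1\,g_2'}{n_2\,(g_1+g_1')}\ =\ \frac{n_1\cdot n_2}{n_2\cdot n_1}\ =\ 1,
\]
which is trivial, not $[K-n_1(1-\beta)]/[K\beta+n_2(1-\beta)]$ as you claim. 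The problem is that your $g_1'+g_2'$ counts all of $B$, including the many \emph{unselected} candidates, so the inequality $\pi(A)/g_1\ge\pi(B)/(g_1'+g_2')$ is too slack: its right-hand side is the fraction of \emph{all} sub-$\beta$ candidates who get their first choice, which is dragged down by the unselected ones.

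The fix, and what the paper does, is to take $B$ to be the \emph{selected} sub-$\beta$ candidates, i.e.\ those with $\hat u$ in the interval from the selection threshold up to $\beta$. Then $g_1'\approx \alpha_1-n_1(1-\beta)$ and $g_2'\approx\alpha_2$, so $g_1+g_1'\approx\alpha_1$, and the same substitution yields
\[
\frac{\pi_2}{\pi_1}\ \le\ \frac{n_1\,g_2'}{n_2\,(g_1+g_1')}\ \approx\ \frac{n_1\,\alpha_2}{n_2\,\alpha_1}\ =\ \frac{K-n_1(1-\beta)}{K\beta+n_2(1-\beta)},
\]
as desired. Your exchangeability observation (that $\hat u$ is uniform on $[0,\beta]$ within $B$ regardless of group, so successes split in proportion $g_1':g_2'$) remains valid on this restricted $B$; only the sizes $g_1',g_2'$ need to change.
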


        \noindent We divide the proof into three parts.
        In the three parts,  we bound $\sR$, $\sU$, and $\sP$ respectively.
        In the first two parts, the algorithm $\cA$ is any utility maximizing algorithm, but in the third part, $\cA$ is the same as $\Ast$ (observe that $\Ast$ is also utility maximizing and hence, the first two parts apply to $\Ast$ as well).

        \newcommand{\usigma}{\ensuremath{\mb{u},\mb{\sigma}}}

        \subsubsection{Step 1: Bounding \texorpdfstring{$\sR$}{R}}
        \label{sec:BoundingR}
            For any $\mb{u}=(u_1,u_2,\dots,u_n)$ and $\mb{\sigma}=(\sigma_1,\sigma_2,\dots,\sigma_n)$, let $M_{\mb{\hu},\mb{\sigma}}$ be the assignment output by $\cA$ given $\mb{\hu}$ and $\mb{\sigma}$.
            Recall that $\cA$ outputs an assignment maximizing the estimated utilities and hence, selects the top $K$ candidates according to their $\hu_i$ values.
            Let $S_{\husigma}$ be the subset of candidates that are assigned in $M_{\husigma},$ i.e., $S_{\husigma} = M_{\husigma}^{-1}([p])$.
            When $\husigma$ is clear from the context, we drop the subscripts in $M_{\husigma{}}$ and $S_{\husigma{}}$.

            The main step in bounding $\sR$ (and $\sU$) is to compute bounds on $\sabs{S\cap G_j}$, $j \in \{1,2\}$, that hold with high probability.
            These bounds are stated in the following lemma. Recall that $n_j$ denotes $|G_j|$. 
            \begin{lemma}\label{lem:step2:noConst}
                The following events happen with probability at least $1-O(1/n^2)$: for each $j \in \{1,2\}$, 
                \[
                    \sabs{S\cap G_j} \in \alpha_j \pm O(\sqrt{n \log n}),
                \]
                where $\alpha_2\coloneqq K-\alpha_1$ and $\alpha_1$ is defined as follows
                \begin{align*}
                    \alpha_1 :=
                    K - \frac{n_2}{\beta n_1+ n_2}\cdot \max\inbrace{K - (1-\beta)n_1, 0}.
            \yesnum\label{def:alphaA}
                \end{align*}
            \end{lemma}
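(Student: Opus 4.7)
The plan is to convert the top-$K$ selection into a threshold condition on the estimated utilities, and then apply Hoeffding-type concentration (via \Cref{thm:Hoeffding}) to the number of candidates in each group whose estimated utility lies above that threshold.

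First, I would identify a deterministic threshold $\Delta^\star \in [0,1]$ such that, in expectation, roughly $K$ candidates have $\hat u_i \geq \Delta^\star$. For $\Delta \in [0,1]$, let $N_j(\Delta) := |\{i \in G_j : \hat u_i \geq \Delta\}|$. Since the $u_i$ are uniform on $[0,1]$, $\Ex[N_1(\Delta)] = n_1(1-\Delta)$ and $\Ex[N_2(\Delta)] = n_2\,\max\{0, 1 - \Delta/\beta\}$. Solving $\Ex[N_1(\Delta^\star)+N_2(\Delta^\star)]=K$ splits into two cases depending on whether $\Delta^\star \leq \beta$ or not, which is controlled by the sign of $K-(1-\beta)n_1$. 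A direct calculation shows $\Ex[N_1(\Delta^\star)] = \alpha_1$ and $\Ex[N_2(\Delta^\star)] = \alpha_2$, where $\alpha_1, \alpha_2$ are exactly the quantities in the lemma; in particular the $\max\{\cdot, 0\}$ arises from the boundary case $K < (1-\beta)n_1$, in which $\Delta^\star > \beta$ and no $G_2$ candidate can ever meet the threshold.

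Next, I would apply Hoeffding's inequality (\Cref{thm:Hoeffding}) to $N_j(\Delta)$, which is a sum of $n_j$ independent Bernoullis, to conclude that with probability $1-O(1/n^2)$,
\[
|N_j(\Delta) - \Ex[N_j(\Delta)]| \leq C\sqrt{n \log n}
\]
for a suitable constant $C$ and a \emph{fixed} threshold $\Delta$. Then I would sandwich $|S\cap G_j|$ between threshold counts at two nearby values $\Delta^\pm := \Delta^\star \pm c\sqrt{(\log n)/n}$, chosen so that
\[
\Ex[N_1(\Delta^+)+N_2(\Delta^+)] \;<\; K - C\sqrt{n\log n} \;<\; K + C\sqrt{n\log n} \;<\; \Ex[N_1(\Delta^-)+N_2(\Delta^-)].
\]
Applying the concentration bound at $\Delta^+$ and $\Delta^-$ via a union bound shows that the $K$-th largest $\hat u_i$ lies in $[\Delta^-, \Delta^+]$ with probability $1-O(1/n^2)$, and hence $N_j(\Delta^+) \leq |S\cap G_j| \leq N_j(\Delta^-)$. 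Because $\Ex[N_j(\Delta)]$ is Lipschitz in $\Delta$, both $\Ex[N_j(\Delta^\pm)]$ are within $O(\sqrt{n\log n})$ of $\alpha_j$, giving the claimed bound $|S\cap G_j| = \alpha_j \pm O(\sqrt{n\log n})$.

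The main obstacle is the step linking the top-$K$ selection to threshold-based counts: one has to ensure the gap $\delta = c\sqrt{(\log n)/n}$ is large enough that the expectations separate by more than the concentration radius, while simultaneously small enough that $|\Ex[N_j(\Delta^\pm)]-\alpha_j|$ is absorbed into the $O(\sqrt{n\log n})$ error. A secondary subtlety is handling the degenerate regime $K < (1-\beta)n_1$, where $\alpha_2 = 0$ and $\Delta^\star > \beta$: here the argument must still produce an \emph{upper} bound $|S\cap G_2| = O(\sqrt{n\log n})$, which follows because the $G_2$ candidates all have $\hat u_i \leq \beta < \Delta^-$ when $n$ is large, so $N_2(\Delta^-)=0$ deterministically.
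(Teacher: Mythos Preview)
Your approach is essentially the same as the paper's: identify the deterministic threshold $\Delta^\star$ at which the expected number of candidates above it equals $K$, then use Hoeffding concentration to pin down $|S\cap G_j|$. The paper executes this with a \emph{single} threshold and a short contradiction argument (if too many $G_1$ candidates were selected, some selected $G_1$ candidate would have $\hat u_i$ below the threshold while some unselected $G_2$ candidate has $\hat u_i$ above it, contradicting that $S$ is the top-$K$ set), whereas you use a two-threshold sandwich $\Delta^\pm$. Both are valid.

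One point to tighten in your sandwich version: the Lipschitz step ``$\Ex[N_j(\Delta^\pm)]$ is within $O(\sqrt{n\log n})$ of $\alpha_j$'' hides a factor of $1/\beta$ for $j=2$, since $\partial_\Delta \Ex[N_2(\Delta)] = -n_2/\beta$. With your choice $\delta = c\sqrt{(\log n)/n}$ this gives $|\Ex[N_2(\Delta^\pm)]-\alpha_2| \le (n_2/\beta)\,c\sqrt{(\log n)/n}$, which is not $O(\sqrt{n\log n})$ uniformly in $\beta$. The easiest fix is to run your sandwich only for $j=1$ (where the Lipschitz constant is $n_1$) and then read off $|S\cap G_2| = K - |S\cap G_1|$; alternatively, the paper's single-threshold contradiction argument sidesteps the issue entirely because it never perturbs the threshold. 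Your handling of the boundary case $K<(1-\beta)n_1$ has the same flavor: rather than arguing $\Delta^- > \beta$ (which can fail when $K$ is within $O(\sqrt{n\log n})$ of $(1-\beta)n_1$), it is cleaner to note directly, as the paper does, that $|G_1(\beta,1)| \ge K - O(\sqrt{n\log n})$ with high probability, so at most $O(\sqrt{n\log n})$ slots can go to $G_2$.
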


        \begin{proof}[Proof of \cref{lem:step2:noConst}]
                With probability 1, $S$ is unique as the observed utilities $\inparen{\hu_i:i\in G_j}, j \in \{1,2\}$, are drawn i.i.d. from continuous distributions (namely, the uniform distributions on $[0,1]$ for $G_1$ and on $[0,\beta]$ for $G_2$).
            Since all candidates in $G_2$ have observed utility at most $\beta$, any candidate in $G_1$ with an observed utility larger than $\beta$ would get preference over the candidates in $G_2$. 
            Motivated by this, we give the following definition:
       \begin{definition}
           Given values $v_1, v_2 \in [0,1],$ and $j \in \{1,2\}$, let $G_j(v_1, v_2)$ denote the subset $\{i \in G_j: v_1 \leq \hu_j \leq v_2\}$ of $G_j$.
       \end{definition}
            \noindent
Now, we define several desirable events and show that each of them holds with high probability. 
            \begin{definition}
                [\bf Event $\cE_1$] 
                \label{def:event1}
                Define $\cE_1$ as the event that $$|G_1(\beta,1)| \in [(1-\beta)n_1 - 8 \sqrt{n \log n}, (1-\beta)n_1 + 8 \sqrt{n \log n}]. $$
            \end{definition}
            
    \begin{claim}
        \label{cl:eventE1}
        The event $\cE_1$ occurs with probability at least $1-2/n^4$. 
    \end{claim}
    \begin{proof}
        We apply~\Cref{thm:Hoeffding} with $X_i = 1$ iff candidate $i \in G_1$ belongs to $A_{\beta, 1}$. Note that $\hu_i = u_i$ for each $i \in G_1$. Clearly, $\E[X_i] = (1-\beta)$. Hence, it follows from ~\Cref{thm:Hoeffding} that 
        $$\Pr\left[\Big||G_1(\beta, 1)| - (1-\beta)n_1\Big| \geq 8 \sqrt{n \log n} \right] \leq 2/n^4.$$ This proves the desired result. 
    \end{proof}
    \noindent
             The above result suffices to estimate $|S \cap G_j|$ when $K \leq (1-\beta)n_1$:
\begin{claim}
\label{cl:upperboundcRcase1}
    Suppose $K \leq (1-\beta)n_1$. Then $|S \cap G_1| \geq K - 8\sqrt{n \log n}$ with probability at least $1-1/n^4$. Therefore, $|S \cap G_2| \leq 8\sqrt{n \log n}$ with probability at least $1-1/n^4$. 
\end{claim}
\begin{proof}
 Assume that the event $\cE_1$ occurs. Then 
 $$|G_1(\beta,1)| \geq n_1(1-\beta)-8 \sqrt{n \log n} \geq K-8 \sqrt{n \log n}.$$
 Since any candidate in $G_1(\beta,1)$ gets preference over any candidate in $G_2$, the claim follows from~\Cref{cl:eventE1}.
\end{proof}
\noindent
When $K \leq (1-\beta)n_1$, the quantity $\alpha_1 = K$ and $\alpha_2 = 0$. Thus,~\Cref{cl:boundcR} follows from~\Cref{cl:upperboundcRcase1} in this special case. Therefore, assume that $K > (1-\beta)n_1$ for the rest of the proof. Define $K' := K-(1-\beta)n_1$ and let $\Delta$ denote $\frac{K'}{n_1+n_2/\beta}$. The parameter $\Delta$ has been chosen such that the expected number of candidates whose observed utility lies in the range $[\beta-\Delta, 1]$ is exactly $K$. Indeed, $\E[|G_1(\beta-\Delta,1)|]$ is $(1-\beta+\Delta)n_1$ and $\E[|G_2(\beta-\Delta,1)|]$ is $\Delta n_2/\beta$ (note that the observed utility of a candidate in $G_2$ is uniformly distributed in the range $[0, \beta]$). Therefore, the expected number of candidates whose estimated utility lies in the range $[\beta-\Delta,1]$ is 
    \begin{align}
        \label{eq:Ksum}
        (1-\beta+\Delta) n_1 + \Delta n_2/\beta = (1-\beta)n_1 + \Delta \left( n_1 + n_2/\beta \right) =  K.
    \end{align}
    The rest of the proof shows that the above-mentioned events occur with high probability.

    \begin{definition}[\bf Events $\cE_2$ and $\cE_3$]
    \label{def:event23}
    Define $\cE_2$ as the event that $$|G_1(\beta-\Delta,1)| \in  n_1(1-\beta+\Delta) \pm 8 \sqrt{n \log n},$$ and $\cE_3$ as the event that $$|G_2(\beta-\Delta,1)| \in n_2 \Delta/\beta \pm 8 \sqrt{n \log n}.$$
    \end{definition}

    \noindent
    The following result follows in the same manner as~\Cref{cl:eventE1}.
    \begin{claim}
\label{cl:event2and3}
        Each of the events $\cE_2$ and $\cE_3$ occur with probability at least $1-2/n^4$. 
    \end{claim}
    \noindent
     Using union bound,~\Cref{cl:event2and3} shows that both the events $\cE_2, \cE_3$ occur with probability at least $1-4/n^4$. 
     \begin{claim}
         \label{cl:boundcR}
         Assume that the events $\cE_2, \cE_3$ occur. Then the number of candidates from $G_1$ and from $G_2$ that get selected by the assignment $M$ lie in the range $n_1(1-\beta+\Delta) \pm 8 \sqrt{n \log n}$ and $\Delta n_2/\beta \pm 8 \sqrt{n \log n}$ respectively.
     \end{claim}
     \begin{proof}
         Suppose, for the sake of contradiction, that the assignment $M$ selects more than $n_1(1-\beta+\Delta) + 8 \sqrt{n \log n}$ candidates from $G_1$. Since $K = n_1(1-\beta+\Delta) + \Delta n_2/\beta$ (using~\eqref{eq:Ksum}), the assignment $M$ selects less than $\Delta n_2/\beta-8 \sqrt{n \log n}$ candidates from $G_2$. Since the event $\cE_2$ occurs, the assignment $M$ selects a candidate $i \in G_1$ with $\hu_i < \Delta$. Since the event $\cE_3$ occurs, there must be a candidate $i \in G_2$ with $\hu_i > \Delta$ such that $i$ is not assigned by $M$. But this is a contradiction. Thus, $M$ selects at most $n_1(1-\beta+\Delta) + 8 \sqrt{n \log n}$ candidates from $G_1$.

         The argument that the assignment $M$ selects more than $n_1(1-\beta+\Delta) - 8 \sqrt{n \log n}$ candidates from $G_1$ follows similarly.
     \end{proof}
     \noindent
  It is easy to check that the quantities $\alpha_1$ and $\alpha_2$ as defined in the statement of~\Cref{lem:step2:noConst} are equal to  $n_1(1-\beta+\Delta)$ and $\Delta n_2/\beta$ respectively. Thus,~\Cref{lem:step2:noConst} follows from~\Cref{cl:event2and3} and~\Cref{cl:boundcR}. 
\end{proof}
\noindent
Using~\Cref{lem:step2:noConst}, we can easily bound $\cR(\cA)$. First, consider the case when $K \leq (1-\beta)n_1$. In this case the quantities $\alpha_1$ and $\alpha_2$ are $K$ and $0$ respectively.~\Cref{lem:step2:noConst} shows that with probability at least $1-1/n^2$ (recall that $q_j$ denotes the fraction of candidates from $G_j$ that get assigned by $M$), 
$$\frac{q_2}{q_1} \leq \frac{n_1}{n_2} \cdot \frac{8 \sqrt{n \log n}}{K - 8 \sqrt{n \log n}} \leq \frac{8 \sqrt{n \log n}}{\eta_1 \eta_2 n/2} = O \left( \frac{\sqrt{\log n}}{\eta_1 \eta_2 \sqrt{n}} \right),  $$
where the second last inequality follows from the fact that $K \geq \eta_2 n \geq 16 \sqrt{n \log n}$ (for large enough $n$) and $n_2 \geq \eta_1 n \geq \eta_1 n_1$. Thus, with probability at least $1-1/n^2$, 
$\frac{\min(q_1, q_2)}{\max(q_1, q_2)} = O \left( \frac{\sqrt{\log n}}{\eta_1 \eta_2 \sqrt{n}} \right).$
Since the l.h.s. above can never exceed $1$, we see that 
$$ \cR(\cA) = \E \left[ \frac{\min(q_1, q_2)}{\max(q_1, q_2)}\right] \leq \left(1 - 1/n^2 \right) O \left( \frac{\sqrt{\log n}}{\eta_1 \eta_2 \sqrt{n}} \right) + \frac{1}{n^2} = O \left( \frac{\sqrt{\log n}}{\eta_1 \eta_2 \sqrt{n}} \right). $$
This proves the desired bound on $\cR(\cA)$ when $K \leq (1-\beta)n_1$. 

Now consider the case when $K > (1-\beta)n_1$.~\Cref{cl:boundcR} implies that, with probability at least $1-1/n^2$, 
\begin{align}
\label{eq:repupperbound}
\frac{q_2}{q_1} \leq \frac{n_1}{n_2} \cdot \frac{ \Delta n_2/\beta+ 8 \sqrt{n \log n}}{n_1(1-\beta+\Delta) - 8 \sqrt{n \log n}} = \frac{\Delta/\beta}{(1-\beta+\Delta)} \cdot \left( 1 - \frac{8 \sqrt{n \log n}}{n_1(1-\beta+\Delta)} \right)^{-1} \cdot \left( 1+ \frac{8 \sqrt{n \log n}}{\Delta n_2/\beta} \right). \end{align}
Recall that $\Delta = \frac{K'}{n_1 + n_2/\beta},$ where $K' = K - (1-\beta)n_1$. Using this, we see that 
\begin{align}
    \label{eq:Deltaeta2}
    1-\beta + \Delta \geq \frac{K}{n_1 + n_2 } = \frac{K}{n} \geq \eta_2.
\end{align}

\noindent
Combining the above observations with the fact that $n_1, n_2 \geq \eta_1 n$, the r.h.s. of~\eqref{eq:repupperbound}  can be expressed as 
\begin{align}
    \label{eq:q_1q_2bound1}
    \frac{q_1}{q_2} \leq \frac{\Delta/\beta}{1-\beta+\Delta} + O \left(\frac{\sqrt{\log n}}{\eta_1 \eta_2 \sqrt{n}} \right). 
\end{align}
A similar argument shows that with probability at least $1-1/n^2$, 
\begin{align}
    \label{eq:q_1q_2bound2}
    \frac{q_1}{q_2} \geq \frac{\Delta/\beta}{1-\beta+\Delta} - O \left(\frac{\sqrt{\log n}}{\eta_1 \eta_2 \sqrt{n}} \right). 
\end{align}
 A direct calculation shows that $1-\beta + \Delta \geq \Delta/\beta$. Indeed,  this is the same as showing $\Delta \leq \beta$, which holds when $n \geq K$. Thus, assuming~\eqref{eq:q_1q_2bound1} holds, 
$\frac{\min(q_1,q_2)}{\max(q_1, q_2)} = \frac{q_1}{q_2} - \left(\frac{\sqrt{\log n}}{\eta_1 \sqrt{n}} \right). $
Therefore, $$ \cR(\cA) = \E \left[ \frac{\min(q_1, q_2)}{\max(q_1, q_2)}\right] \stackrel{\eqref{eq:q_1q_2bound1},\eqref{eq:q_1q_2bound2}}{=}  \frac{\Delta/\beta}{1-\beta+\Delta} \pm O \left(\frac{\sqrt{\log n}}{\eta_1 \sqrt{n}} \right). $$
Using the definition of $\Delta$, one can check that  
$$ \frac{\Delta/\beta}{1-\beta + \Delta} = \frac{K-(1-\beta)n_1}{\beta K + (1-\beta)n_2}. $$
This proves the desired bound on $\cR(\cA)$ as stated in~\Cref{thm:effectOfBias}.

     \subsubsection{Step 2: Bounding \texorpdfstring{$\sU$}{U}}
     \label{sec:boundingsU}
      We now bound $\sU(\cA)$; we use a similar strategy as the one used for bounding $\cR(\cA)$. We first derive concentration bounds on the sum of random values drawn from a certain range.     We begin with a standard result on $k$-th order statistic for $n$ independent draws from the uniform distribution on $[0,1]$:
      \begin{fact}[\cite{ahsanullah2013introduction}]\label{fact:os}
              Let $X_1, \ldots, X_n$ be $n$ independent random variables, each drawn from the uniform distribution on $[0,1]$. Let $Y_k$ denote the $k$-th largest value in $\{X_1, \ldots, X_n\}$. 
              For all $k\in [n]$, $\E[Y_k] = \frac{n+1-k}{n+1}.$
            \end{fact}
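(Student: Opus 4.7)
The plan is to identify $Y_k$ as the $(n+1-k)$-th smallest order statistic of $n$ i.i.d.\ Uniform$[0,1]$ samples. Writing $U_{(j)}$ for the $j$-th smallest value of $X_1,\ldots,X_n$, we have $Y_k = U_{(n+1-k)}$, so the claim $\E[Y_k]=\frac{n+1-k}{n+1}$ reduces to showing $\E[U_{(j)}] = \frac{j}{n+1}$ for every $j\in[n]$.

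The cleanest route I would take is the spacings argument. Set $U_{(0)}:=0$ and $U_{(n+1)}:=1$, and define the spacings $D_j := U_{(j)} - U_{(j-1)}$ for $j=1,\ldots,n+1$. A standard exchangeability computation (change of variables from $(X_1,\ldots,X_n)$ to the sorted sequence and then to the increments) shows that $(D_1,\ldots,D_{n+1})$ is uniformly distributed on the simplex $\{(d_1,\ldots,d_{n+1}) : d_i\ge 0,\ \sum_i d_i = 1\}$, and hence the $D_j$'s are exchangeable. Since $\sum_{j=1}^{n+1} D_j = 1$ deterministically, linearity plus exchangeability gives $\E[D_j] = \frac{1}{n+1}$ for every $j$. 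Telescoping then yields $\E[U_{(j)}] = \sum_{i=1}^{j}\E[D_i] = \frac{j}{n+1}$, which is exactly what is needed.

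A direct alternative, if a reviewer prefers, is to compute the density of $U_{(j)}$: conditioning on $U_{(j)}=x$, exactly $j-1$ of the remaining observations must fall below $x$ and $n-j$ above, giving the Beta$(j,n-j+1)$ density $\frac{n!}{(j-1)!(n-j)!}x^{j-1}(1-x)^{n-j}$ on $[0,1]$, whose mean is $\frac{j}{n+1}$ by the standard Beta integral. There is no real obstacle here; the only thing worth being careful about is the index reversal between "$k$-th largest" and "$(n+1-k)$-th smallest", which is where a sign-style error could silently creep in.
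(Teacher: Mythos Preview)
Your proposal is correct; both the spacings/exchangeability argument and the Beta$(j,n-j+1)$ density computation are valid and standard derivations of $\E[U_{(j)}]=\frac{j}{n+1}$, and the index translation $Y_k=U_{(n+1-k)}$ is handled correctly. Note, however, that the paper does not actually prove this statement: it is recorded as a \texttt{Fact} with a citation to \cite{ahsanullah2013introduction} and used without argument, so there is no ``paper's own proof'' to compare against --- your write-up simply supplies what the paper takes as background.
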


\noindent   
    The above result implies the following concentration bound for the sum of top-$k$ among $n$ values drawn uniformly at random from $[0,1]$:
    \begin{claim}
        \label{cl:orderstatisticbound}
        Let $X_1, \ldots, X_n$ be $n$ i.i.d. random variables, each drawn uniformly at random from $[0,1]$. Given a parameter $k \in [n]$, define the function $T_k:[0,1]^n \rightarrow \Re$ as the sum of the 
  top $k$ values among $x_1, \ldots, x_n$. Then 
        $$\Pr \left[ |T_k(X_1, \ldots, X_n) - f(k,n)| \geq t \right] \leq 2 e^{-2t^2/n}.$$
        Here $f(k,n) := k - \frac{k(k+1)}{2(n+1)}$. 
    \end{claim}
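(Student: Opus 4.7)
The plan is to reduce the claim to a direct application of the bounded-differences inequality (Theorem \ref{thm:lipschitz}), after first pinning down the expectation of $T_k$ exactly via order statistics.

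First, I would compute $\E[T_k]$. Writing $T_k = \sum_{i=1}^{k} Y_i$, where $Y_i$ denotes the $i$-th largest value among $X_1, \ldots, X_n$, linearity of expectation together with Fact \ref{fact:os} gives
\[
\E[T_k] \;=\; \sum_{i=1}^{k} \frac{n+1-i}{n+1} \;=\; \frac{k(n+1) - \tfrac{k(k+1)}{2}}{n+1} \;=\; k - \frac{k(k+1)}{2(n+1)} \;=\; f(k,n).
\]
Thus it suffices to show concentration of $T_k$ around its expectation.

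Next, I would verify that $T_k : [0,1]^n \to \R$ satisfies the bounded-differences property with constant $c = 1$. Fix two input vectors $\mb{x}, \mb{x}'$ agreeing on all coordinates except the $i$-th, and let $S(\mb{x}), S(\mb{x}') \subseteq [n]$ denote the (index sets of the) top $k$ values, with ties broken in any fixed deterministic way. If $i \notin S(\mb{x}) \cup S(\mb{x}')$, the two sums coincide. If $i$ lies in both, then $|T_k(\mb{x}) - T_k(\mb{x}')| = |x_i - x_i'| \leq 1$. Otherwise, $i$ belongs to exactly one of the two sets; assume WLOG $i \in S(\mb{x}) \setminus S(\mb{x}')$ and let $j$ be the unique element of $S(\mb{x}') \setminus S(\mb{x})$. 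The selection rule forces $x_j \leq x_i$ and $x_i' \leq x_j$, so $T_k(\mb{x}) - T_k(\mb{x}') = x_i - x_j \in [0,1]$. In all cases the change is at most $1$.

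Applying Theorem \ref{thm:lipschitz} to the function $T_k$ with $c = 1$ and the $n$ independent coordinates $X_1, \ldots, X_n$ then yields
\[
\Pr\!\left[\,|T_k(X_1, \ldots, X_n) - f(k,n)| \geq t\,\right] \;\leq\; 2 \exp\!\left(-\tfrac{2t^2}{n}\right),
\]
exactly the claimed bound. I do not anticipate a real obstacle here; the only subtle step is the short case analysis above establishing that modifying a single coordinate alters the top-$k$ sum by at most $1$, which is what makes the Lipschitz constant small enough to give the desired $e^{-2t^2/n}$ rate rather than a weaker bound.
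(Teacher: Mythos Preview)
Your proposal is correct and follows exactly the paper's approach: compute $\E[T_k]=f(k,n)$ via \cref{fact:os} and then apply the bounded-differences inequality (\cref{thm:lipschitz}) with Lipschitz constant $c=1$. The paper simply asserts that ``it is easy to check that the function $T_k$ satisfies the conditions in~\cref{thm:lipschitz}'', whereas you spell out the short case analysis establishing the $1$-Lipschitz property; otherwise the arguments are identical.
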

    \begin{proof}
        \Cref{fact:os} implies that $\E[T_k] = \sum_{i=1}^k \frac{n+1-i}{n+1} = k - \frac{k(k+1)}{2(n+1)}=f(k,n).$
        It is easy to check that the function $T_k$ satisfies the conditions in~\Cref{thm:lipschitz}, and hence the desired result follows from~\Cref{thm:lipschitz}. 
    \end{proof}
    \noindent
     Now we define several events: 
     \begin{definition}
        Define $\cE_4$ as the following event: for each $j \in \{1,2\}$, the top-$\alpha_j$ utilities among candidates in the group $G_j$ lies in the range $f(\alpha_j, n_j) \pm 8 \sqrt{n \log n}$. Define $\cE_5$ as the event that the top-$K$ utilities among all the candidates $[n]$ lies in the range 
        $f(K,n) \pm 8 \sqrt{n \log n}$. Finally, define $\cE_6$ as the event that for each $j \in \{1,2\}$, $|S \cap G_j|$ lies in the range $\alpha_j \pm 8 \sqrt{n \log n}$. 
     \end{definition}
     Recall that $S = M^{-1}([p])$ is the set of candidates that are assigned an institution by $M$.~\Cref{cl:orderstatisticbound} shows that each of the events $\cE_4$ and $\cE_5$ happen with probability at least $1 - 2/n^4$, and~\Cref{lem:step2:noConst} shows that $\cE_6$ happens with probability at least $1 - O(1/n^2)$. Applying union bound, all these three events happen with probability at least $1 - O(1/n^2)$. 
     \begin{claim}
         \label{cl:utilitybound}
         Assume that the events $\cE_4, \cE_5, \cE_6$ happen. Then 
         $$ \frac{\mathsf{U}(M_{\mb{\hat{u}},\mb{\sigma}})}{\mathsf{U}^\star(\mb{u})} = \frac{\sum_{j=1}^2 f(\alpha_j, n_j)}{f(K,n)} \pm O\left(\frac{\sqrt{\log n}}{\eta_2\sqrt{n}} \right).$$
     \end{claim}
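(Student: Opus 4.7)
\textbf{Proof proposal for \Cref{cl:utilitybound}.} The plan is to estimate the numerator and denominator of the utility ratio separately. For the denominator, by definition $\mathsf{U}^\star(\mb{u})$ is the sum of the top $K$ true utilities among the $n$ candidates, so the event $\cE_5$ immediately gives $\mathsf{U}^\star(\mb{u}) = f(K,n) \pm 8\sqrt{n\log n}$. For the numerator, I would first observe the following structural fact: because $\hu_i = u_i$ for $i \in G_1$ and $\hu_i = \beta \cdot u_i$ for $i \in G_2$, the ordering of the $\hu_i$'s restricted to a single group $G_j$ coincides with the ordering of the true utilities $u_i$ restricted to $G_j$. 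Since the algorithm $\cA$ maximizes the estimated utility (it picks the top $K$ candidates according to $\mb{\hu}$), it follows that $S \cap G_j$ consists of exactly the top $|S \cap G_j|$ candidates in $G_j$ ranked by true utility $u_i$.

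Combining this with $\cE_6$, the sum $\sum_{i \in S \cap G_j} u_i$ equals the sum of the top $s_j := |S \cap G_j|$ true utilities in $G_j$, where $s_j = \alpha_j \pm 8\sqrt{n\log n}$. Since every utility lies in $[0,1]$, changing the number of terms from $s_j$ to $\alpha_j$ alters the sum by at most $|s_j - \alpha_j| \leq 8\sqrt{n\log n}$. Combining this with $\cE_4$ (which handles the sum of the top $\alpha_j$ true utilities in $G_j$), I get
\[
\sum_{i \in S \cap G_j} u_i = f(\alpha_j, n_j) \pm 16 \sqrt{n \log n},
\]
and summing over $j \in \{1,2\}$ yields $\mathsf{U}(M_{\mb{\hu},\mb{\sigma}}) = \sum_{j=1}^2 f(\alpha_j, n_j) \pm 32 \sqrt{n \log n}$.

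Now I would take the ratio. Using $f(K,n) = K - \frac{K(K+1)}{2(n+1)} \geq K/2 \geq \eta_2 n / 2$, the denominator is bounded below by $\Omega(\eta_2 n)$, while both numerator and denominator have additive error $O(\sqrt{n \log n})$. A standard first-order expansion of the form $\frac{A \pm \eps}{B \pm \eps} = \frac{A}{B} \pm O(\eps/B)$ (valid since $A \leq B$ and $\eps \ll B$) then gives
\[
\frac{\mathsf{U}(M_{\mb{\hu},\mb{\sigma}})}{\mathsf{U}^\star(\mb{u})} = \frac{\sum_{j=1}^2 f(\alpha_j, n_j)}{f(K,n)} \pm O\!\left(\frac{\sqrt{\log n}}{\eta_2 \sqrt{n}}\right),
\]
which is the claimed bound.

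The only subtle step is the structural observation that $S \cap G_j$ is the top-$s_j$ slice of $G_j$ by true utility; this relies on the fact that $\beta$ is a uniform scaling within a group and therefore preserves the intra-group ranking. Everything else is routine arithmetic manipulation of the concentration bounds provided by $\cE_4, \cE_5, \cE_6$.
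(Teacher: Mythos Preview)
Your proposal is correct and follows essentially the same route as the paper's proof: both estimate the numerator group-by-group via $\cE_4$ and $\cE_6$ (using that utilities lie in $[0,1]$ to absorb the $\pm 8\sqrt{n\log n}$ discrepancy in $|S\cap G_j|$), estimate the denominator via $\cE_5$, and then convert the additive errors into the stated multiplicative bound using $f(K,n)\geq \eta_2 n/2$ together with $\sum_j f(\alpha_j,n_j)\leq f(K,n)$. Your explicit articulation of why $S\cap G_j$ coincides with the top-$s_j$ slice of $G_j$ by true utility is a helpful clarification that the paper leaves implicit.
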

     \begin{proof}
         Assume that the events $\cE_4, \cE_5, \cE_6$ occur. Since $M$ assigns at most $\alpha_j + 8 \sqrt{n \log n}$ candidates from group $G_j$ (event $\cE_6$), and the top-$\alpha_j$ candidates from $G_j$ have total utility at most $f(\alpha_j, n_j) + 8 \sqrt{n \log n}$,  the total utility of the candidates  in $S \cap G_j$ is at most 
         $$f(\alpha_j, n_j) + 16 \sqrt{n \log n},$$
         because the utility of any particular candidate can be at most 1. Thus, the total utility of the candidates selected by $M$ is at most 
         $$\sum_{j=1}^2 f(\alpha_j, n_j) + 32 \sqrt{n \log n}.$$
         Event $\cE_5$ implies that $\mathsf{U}^\star(\mb{u})\geq f(K,n)-8 \sqrt{n \log n}.$ Thus, 
         $$ \frac{\mathsf{U}(M_{\mb{\hat{u}},\mb{\sigma}})}{\mathsf{U}^\star(\mb{u})} \leq \frac{\sum_{j=1}^2 f(\alpha_j, n_j) + 32 \sqrt{n \log n}}{f(K,n)-8 \sqrt{n \log n}}.$$
         Now, $f(K,n) = K-\frac{K(K+1)}{2(n+1)} \geq K/2 \geq \eta_2 n/2$. Similarly $f(\alpha_1,n_1) + f(\alpha_2,n_2) \geq \alpha_1/2 + \alpha_2/2 = K/2 \geq \eta_2 n/2$. Therefore, the r.h.s. above is at most   
         $$\frac{\sum_{j=1}^2 f(\alpha_j, n_j)}{f(K,n)} \left(1+ O\left(\frac{\sqrt{\log n}}{\eta_2 \sqrt{n}} \right) \right) \leq \frac{\sum_{j=1}^2 f(\alpha_j, n_j)}{f(K,n)} + O\left(\frac{\sqrt{\log n}}{\eta_2 \sqrt{n}} \right),$$
         where the last inequality follows from the fact that $\sum_{j=1}^2 f(\alpha_j, n_j) \leq f(K,n)$. 
         A similar argument shows that 
         $$ \frac{\mathsf{U}(M_{\mb{\hat{u}},\mb{\sigma}})}{\mathsf{U}^\star(\mb{u})} \geq \frac{\sum_{j=1}^2 f(\alpha_j, n_j)}{f(K,n)} -O\left(\frac{\sqrt{\log n}}{\eta_2 \sqrt{n}} \right).$$
         This proves the desired result. 
     \end{proof}
     \noindent
     Since the ratio $ \frac{\mathsf{U}(M_{\mb{\hat{u}},\mb{\sigma}})}{\mathsf{U}^\star(\mb{u})}$ always lies in the range $[0,1]$, and the event $\cE_4 \cap \cE_5 \cap \cE_6$ happens with probability at least $1-O(1/n^2)$, it follows that 
     $$\sU(\cA) = \Ex \left[\frac{\mathsf{U}(M_{\mb{\hat{u}},\mb{\sigma}})}{\mathsf{U}^\star(mb{u}} \right] = \frac{\sum_{j=1}^2 f(\alpha_j, n_j)}{f(K,n)} \pm O\left(\frac{\sqrt{\log n}}{\eta_2 \sqrt{n}} \right) + O(1/n^2) = \frac{\sum_{j=1}^2 f(\alpha_j, n_j)}{f(K,n)} \pm O\left(\frac{\sqrt{\log n}}{\eta_2 \sqrt{n}} \right).$$
     This proves the desired bound on $\sU(\cA)$ as stated in~\Cref{thm:effectOfBias}.

         \subsubsection{Step 3: Bounding \texorpdfstring{$\sP$}{P}}
         \label{sec:boundingsp}
           In this section, we bound $\sP(\cA)$. We use the notation developed in~\Cref{sec:BoundingR}. 
           Recall that $G_j(v_1, v_2)$ denotes the set of  candidates $i \in G_j$ for which $\hu_{i} \in [v_1, v_2]$. 
           We assume that $K \leq (1-\beta)n_1$ (the other case is only easier). We also assume that the events $\cE_1, \cE_2, \cE_3$ occur (as defined in~\Cref{def:event1} and~\Cref{def:event23}). More specifically, we condition on the sizes of the sets $G_1(\beta, 1), $ $ G_1(\beta-\Delta, \beta)$ and
           $ G_2(\beta-\Delta, \beta)$.  Call these values $g_1, g_1', g_2'$ respectively. Note that the events $\cE_1, \cE_2, \cE_3$ require
           that
           \begin{align}
               \label{eq:gbound}
               g_1 \in (1-\beta)n_1 \pm 8 \sqrt{n \log n}, \ g_1+g_1' \in n_1(1-\beta+\Delta) \pm 8 \sqrt{n \log n}, \ g_2' \in n_2 \Delta/\beta \pm 8 \sqrt{n \log n}. 
           \end{align}
          
\noindent
    One manner in which we can condition on the above events is as follows:
      \begin{itemize}
          \item[(i)] Select $g_1$  values uniformly at random in the range $[\beta,1]$ and assign these (observed) utility values to the first $g_1$ candidates. 
          \item[(ii)] Select $g_1'+g_2'$ values uniformly at random in the range $[\beta-\Delta,\beta]$ and assign these estimated utility values to the next $g_1'+g_2'$ candidates. 
          \item[(iii)] Select $n-(g_1+g_1'+g_2')$ values uniformly at random in the range $[0,\beta-\Delta]$ and assign these estimated utility values to the remaining candidates. 
          \item[(iv)] Assign the first $g_1$ candidates (who were assigned utility values in step~(i)) to $G_1$. Consider the next $g_1'+g_2'$ candidates who were assigned values in step~(ii) above. Select $g_1'$ candidates (without replacement) uniformly at random from this subset of candidates and assign them to $G_1$, and the remaining $g_2'$ candidates to $G_2$. Finally, randomly partition the remaining $n-(g_1+g_1'+g_2')$ candidates (used in step~(iii) above) between $G_1$ and $G_2$ in a similar manner. 
      \end{itemize}  
      Finally, we assign each candidate $i$ a random ordering $\sigma_i$ chosen independently from the distribution $\cal D$. Now we run $\Ast$ on these candidates. Note that $\Ast$ does not depend on the group to which a candidate belongs. Hence, $\Ast$ only depends on the first three steps mentioned above and does not depend on step~(iv). 

For the sake of brevity of notation, let $A$ denote the first $g_1$ candidates in $[n]$ (i.e., the candidates that are assigned values in step~(i) above) and $B$ denote the next $g_1'+g_2'$ candidates (i.e, the ones assigned values in step~(ii) above). Let $\pi(A)$ and $\pi(B)$ denote the set of candidates in $A$ and $B$ respectively that are assigned their most preferred institutions by the algorithm $\Ast$. Using~\Cref{lem:concentrationbounds}, we see that 

        \begin{align}
        \label{eq:bounded-diffA}
        \Pr[|\pi(A)- \E[\pi(A)]| \geq p\sqrt{8n \log n}] \leq 1/n^2,
        \end{align}
        and 
        
        \begin{align}
        \label{eq:bounded-diffB}
            \Pr[|\pi(B)- \E[\pi(B)]| \geq p\sqrt{8n \log n}] \leq 2/n^2.
        \end{align}
      \noindent 
    We now relate $\E[\pi(A)]$ and $\E[\pi(B)]$. 
    \begin{claim}
        \label{cl:expectationbounds}
        $ \E[\pi(A)]/g_1 \geq  \E[\pi(B)]/(g_1'+g_2').$
    \end{claim}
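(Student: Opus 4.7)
The plan is to view $\Ast$ as a sequential process that, by~\Cref{cl:stableassignment}, processes candidates in positions $1, 2, \ldots, n$ in decreasing order of observed utility, assigning each candidate to the top choice on their preference list that is still available. Because we have conditioned on the sizes $g_1, g_1', g_2'$, and because candidates in $A$ have utilities in $[\beta, 1]$ while candidates in $B$ have utilities in $[\beta-\Delta, \beta]$, the set $A$ occupies positions $1, \ldots, g_1$ and $B$ occupies positions $g_1+1, \ldots, g_1+g_1'+g_2'$. Since the preferences $\sigma_i$ are drawn i.i.d.\ from $\cL$ and are independent of the observed utilities, we may equivalently think of the algorithm as drawing a fresh i.i.d.\ preference list at each position; in particular, the probability that the candidate at position $i$ receives their top choice depends only on $i$ (and on $\cL$ and the capacities), not on the specific identity of the candidate at that position.

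Next, for each position $i$, let $T_i \subseteq [p]$ denote the (random) set of institutions that still have vacant slots immediately before position $i$ is processed, and let $q(T) := \Pr_{\sigma \sim \cL}[\sigma(1) \in T]$ denote the probability that a random preference list from $\cL$ has its top choice lying in $T$. Since $\sigma_i$ is independent of $T_i$ (the latter being determined by $\sigma_1, \ldots, \sigma_{i-1}$), the candidate at position $i$ obtains their top choice with probability exactly $q(T_i)$ conditional on $T_i$. Define $p_i := \E[q(T_i)]$; then $p_i$ is precisely the probability that the candidate at position $i$ is assigned their first preference, and hence
\[
\E[|\pi(A)|] \;=\; \sum_{i=1}^{g_1} p_i, \qquad \E[|\pi(B)|] \;=\; \sum_{i=g_1+1}^{g_1+g_1'+g_2'} p_i.
\]

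The crux is a monotonicity statement: along every sample path, institutions only fill up, never open back up, so $T_1 \supseteq T_2 \supseteq \cdots \supseteq T_n$. Because $q$ is set-monotone ($T' \subseteq T \Rightarrow q(T') \leq q(T)$), the sequence $q(T_i)$ is non-increasing pointwise, hence $p_i = \E[q(T_i)]$ is non-increasing in $i$. A basic averaging fact for non-increasing sequences (the average over a prefix dominates the average over any later contiguous segment) then yields
\[
\frac{\E[|\pi(A)|]}{g_1} \;=\; \frac{1}{g_1}\sum_{i=1}^{g_1} p_i \;\geq\; \frac{1}{g_1'+g_2'}\sum_{i=g_1+1}^{g_1+g_1'+g_2'} p_i \;=\; \frac{\E[|\pi(B)|]}{g_1'+g_2'},
\]
which is the desired inequality.

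The proof is short and the argument is essentially soft; the only subtlety is justifying that one may reason about ``the candidate at position $i$'' as a generic entity rather than tracking specific candidates in $A$ and $B$. This step relies on the i.i.d.\ assumption on preferences and the fact that $\Ast$ depends on utilities only through the induced ordering, both of which are built into the setup. Once that reformulation is in place, the monotonicity of $T_i$ and of $q$ deliver the claim immediately.
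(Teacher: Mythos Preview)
Your proof is correct and follows essentially the same approach as the paper. Both arguments define the position-wise probability $p_i$ that the $i$-th candidate (in decreasing utility order) receives its top choice, establish that $p_i$ is non-increasing via the observation that the set of available institutions only shrinks, and then compare the average of $p_1,\ldots,p_{g_1}$ with the average of $p_{g_1+1},\ldots,p_{g_1+g_1'+g_2'}$; the paper does this last step by pivoting through $p_{g_1}$, which is exactly what underlies your ``prefix-average dominates later-segment-average'' fact.
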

    \begin{proof}
        For a candidate $i$, let $D_i$ be the event that it is assigned the top preferred institution. We now argue that for any $i \in [n]$, $\Pr[D_i] \geq \Pr[D_{i+1}]$. We condition on the assignment of institutions to the first $(i-1)$ candidates. After this assignment, let $S$ be the set of institutions that have a vacant slot. Now let $\sigma \sim \cD$ be the preference list for candidate $i$ and $j^\star$ be the first institution in this list. Then $\Pr[D_i]$ is equal to the probability that $j^\star \in S$. Now after assigning an institution to $i$, the set of institutions with vacant slots will be a subset of $S$. Therefore, $\Pr[D_{i+1}]$is at most the probability that $j^\star \in S$. 
        Thus, we see that $\Pr[D_i] \geq \Pr[D_{i+1}]$ for all $i$. Now, 
        \begin{align*}
            \E[\pi(A)] = \sum_{i=1}^{g_1} \Pr[D_i] \geq g_1 \Pr[D_{g_1}],
        \end{align*}
        and 
        $$ \E[\pi(B)] = \sum_{i=g_1+1}^{g_1+g_1'+g_2'} \Pr[D_i] \leq (g_1'+ g_2') \Pr[D_{g_1}].$$
        The desired result now follows from the above two inequalities. 
    \end{proof}
\noindent
    We can now estimate the number of candidates from each of the groups $G_j$ who have been assigned to their most preferred institution.  
    \begin{claim}
        \label{cl:AB}
        With probability at least $1 - O(1/n^2)$, the number of candidates from $G_1$ and $G_2$ that are assigned their most preferred institution  choices are $\E[\pi(A)] +  \frac{g_1'}{g_1' + g_2'} \cdot \E[\pi(B)] \pm O(p\sqrt{n \log n})$  and $ \frac{g_2'}{g_1' + g_2'} \cdot \E[\pi(B)] \pm O(p\sqrt{n \log n})$ respectively. 
    \end{claim}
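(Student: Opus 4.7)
The plan is to leverage the two-stage sampling description preceding the claim: steps~(i)–(iii) determine the observed utility vector $\mb{\hu}$ (and hence the entire run of $\Ast$, which is oblivious to group labels), while step~(iv) assigns group membership \emph{after} $\Ast$ has already been executed. Thus, conditional on $(\mb{\hu},\mb{\sigma})$ from steps~(i)–(iii), both the set $\pi(A)$ and the set $\pi(B)$ are deterministic, and the only remaining randomness affecting $|\pi(S)\cap G_j|$ comes from the random partition of $B$ in step~(iv). Since every candidate in $A$ is automatically assigned to $G_1$ (because their observed utility exceeds $\beta$, the max value achievable by any $G_2$ candidate), and since the very low-utility candidates (those in step~(iii)) are almost certainly not selected under the events $\cE_1,\cE_2,\cE_3$, the only nontrivial contribution to the group-wise counts of $\pi(S)$ comes from $\pi(A)$ (entirely in $G_1$) and from the randomly partitioned subset $\pi(B)$.

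Next, I would show that $S$ and $A\cup B$ differ by only $O(\sqrt{n\log n})$ candidates with high probability. Indeed, under $\cE_1\cap\cE_2\cap\cE_3$ together with \eqref{eq:Ksum} and the bounds in \eqref{eq:gbound}, we get $\bigl||A|+|B|-K\bigr|=O(\sqrt{n\log n})$; since $S$ consists of the $K$ candidates of highest observed utility and $A\cup B$ is precisely the set of candidates with observed utility above $\beta-\Delta$, the symmetric difference $S\triangle (A\cup B)$ has size $O(\sqrt{n\log n})$. Consequently
\[
|\pi(S)\cap G_1| \;=\; |\pi(A)| \;+\; |\pi(B)\cap G_1| \;\pm\; O(\sqrt{n\log n}),\qquad
|\pi(S)\cap G_2| \;=\; |\pi(B)\cap G_2| \;\pm\; O(\sqrt{n\log n}),
\]
where the error also absorbs any contribution from selected candidates outside $A\cup B$ (which get their top choice or not, but number at most $O(\sqrt{n\log n})$ in total).

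The final step is to combine two concentration statements. First, by \eqref{eq:bounded-diffA} and \eqref{eq:bounded-diffB}, $|\pi(A)|$ and $|\pi(B)|$ are within $O(p\sqrt{n\log n})$ of $\Ex[\pi(A)]$ and $\Ex[\pi(B)]$ respectively, with probability at least $1-O(1/n^2)$. Second, conditional on the (deterministic, given steps~(i)–(iii)) set $\pi(B)$, the random variable $|\pi(B)\cap G_1|$ is hypergeometric with parameters $(|B|,|\pi(B)|,g_1')$: it counts the number of elements of $\pi(B)$ landing in a uniformly random subset of $B$ of size $g_1'$. Its mean is $|\pi(B)|\cdot g_1'/(g_1'+g_2')$, and the indicator variables involved are negatively correlated, so the Hoeffding bound stated in \Cref{thm:Hoeffding} (which extends to negatively correlated variables, as remarked after that theorem) gives concentration within $O(\sqrt{n\log n})$ around this mean. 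Chaining the two concentrations,
\[
|\pi(B)\cap G_1| \;=\; \Ex[\pi(B)]\cdot \frac{g_1'}{g_1'+g_2'} \;\pm\; O(p\sqrt{n\log n}),
\]
and analogously for $G_2$; substituting into the display above and taking a union bound over the $O(1)$ concentration events yields both asserted estimates with the advertised failure probability.

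The main obstacle I anticipate is the clean separation between the randomness used by $\Ast$ and the randomness of the group assignment in step~(iv); once this decoupling is made rigorous, the hypergeometric concentration and the symmetric-difference bound on $S\triangle(A\cup B)$ are routine. A secondary technicality is handling the small number of selected candidates below the threshold $\beta-\Delta$ and verifying that the contribution of such candidates to $|\pi(S)\cap G_j|$ is dominated by the $O(\sqrt{n\log n})$ error term — this is immediate from $|S\triangle(A\cup B)|=O(\sqrt{n\log n})$, since each such candidate contributes at most one to either count.
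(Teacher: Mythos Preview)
Your proposal is correct and follows essentially the same approach as the paper: both arguments exploit that $\Ast$ is oblivious to group labels, note that $A\subseteq G_1$ deterministically, use \eqref{eq:bounded-diffA} and \eqref{eq:bounded-diffB} to replace $|\pi(A)|,|\pi(B)|$ by their expectations, apply a hypergeometric/negatively-correlated Hoeffding bound to $|\pi(B)\cap G_1|$ conditional on $\pi(B)$, and absorb the $O(\sqrt{n\log n})$ discrepancy between $S$ and $A\cup B$ via \Cref{cl:boundcR} (equivalently, your symmetric-difference bound). One small slip: you attribute $\mb{\sigma}$ to ``steps~(i)--(iii)'', whereas in the paper the preferences are drawn separately---but this does not affect the argument, since the key point (which you state correctly) is that step~(iv) is independent of the run of $\Ast$ once one conditions on $(\mb{\hu},\mb{\sigma})$.
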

    \begin{proof}
    All the candidates in $A$ are assigned to $G_1$. By~\eqref{eq:bounded-diffA}, the number of candidates in $A$ that are assigned their top choices lies in the range 
        \begin{align}
        \label{eq:topA}
        \Ex[\pi(A)] \pm p\sqrt{8n \log n}
        \end{align}
        with probability at least $1-1/n^2$. 

        Now, we condition on the subset $\pi(B)$ of candidates in $B$ that get their first choice -- let this be $B'$. 
        Recall that we randomly select $g_1'$ candidates from $B$ and assign them to $G_1$. For a candidate $i \in B'$, let $X_i$ be the probability that $i$ is assigned to $G_1$ and $i \in B'$. It is easy to see that 
        $\E[X_i] = \frac{g_1'}{g_1'+g_2'}$. The random variables $X_i$ are not independent but are negatively correlated. Therefore, applying Hoeffding's bound to $\sum_i X_i$, we see that with probability at least $1-1/n^2$, the number of candidates in $B' \cap G_1$ is $$\frac{g_1'|B'|}{g_1'+g_2'} \pm p\sqrt{8 n \log n}.$$ Further,~\eqref{eq:bounded-diffB} shows that $|B'| \in \E[\pi(B)] \pm p\sqrt{8 n \log n}$ with probability at least $1-2/n^2$. Combining the above two observations, we see that with probability at least $1-3/n^2$, the number of candidates from $G_1 \cap B$ that are assigned their top preferences lies in the range 
        \begin{align}
        \label{eq:topB}
        \frac{g_1' \E[\pi(B)]}{g_1'+g_2'} \pm p\sqrt{16 n \log n}. \end{align}
        Combining~\eqref{eq:topA} and~\eqref{eq:topB}, we see that the number of candidates in $G_1 \cap (A \cup B)$ that are assigned their top choices lies in the range 
        $$ \Ex[\pi(A)] + \frac{g_1' \E[\pi(B)]}{g_1'+g_2'} \pm O(p\sqrt{ n \log n}).$$
        with probability at least $1-4/n^2$.~\Cref{cl:boundcR} shows that all but $O(\sqrt{n \log n})$ selected candidates lie in the set $A \cup B$. This proves the desired bound on the number of candidates in $G_1$ that are assigned their first choice. The corresponding bound for $G_2$ follows similarly. 
    \end{proof}

\noindent
    Recall that $\pi_j$ denotes the {\em fraction} of candidates in $G_j$ that are assigned their top choice. ~\Cref{cl:AB} and~\Cref{cl:expectationbounds} show that 
    \begin{align}
        \label{eq:pi1pi2bound}
    \frac{\pi_2}{\pi_1} \leq \frac{n_1}{n_2} \cdot \frac{g_2' \pm O(p\sqrt{n \log n})}{g_1 + g_1' \pm O(p\sqrt{n \log n})} = \frac{n_1}{n_2} \frac{g_2'}{g_1 + g_1'} + O \left( \frac{p\sqrt{\log n}}{\eta_1 \eta_2 \sqrt{ n}} \right),
    \end{align}
    where the last inequality follows from the fact that $n_1/n_2 \geq 1/\eta_1$ and $g_1+g_1'$ is $\Omega(\eta_2 n) $ (using~\eqref{eq:Deltaeta2} and~\eqref{eq:gbound}). 
    Using~\eqref{eq:gbound} again, we get 
$$\frac{\pi_2}{\pi_1} \leq \frac{\Delta/\beta}{(1-\beta)+\Delta} + O \left( \frac{p\sqrt{\log n}}{\eta_1 \eta_2 \sqrt{ n}} \right) =  \frac{K-n_1(1-\beta)}{\beta K + (1-\beta) n_2} +   O \left( \frac{p\sqrt{\log n}}{\eta_1 \eta_2 \sqrt{ n}} \right),$$
where the last equality follows from the definition of $\Delta$. The above expression does not involve $g_1, g_1', g_2'$ and hence, holds even after removing the conditioning on these parameters. In other words, the above bound holds if the events $\cE_1, \cE_2, \cE_3$ occur.  

     Since $\cE_1, \cE_2, \cE_3$ occur with probability at least $1 - O(1/n^4)$, and the ratio $\frac{\min(\pi_1,\pi_2)}{\max(\pi_1, \pi_2)} \leq 1,$ we see that 
    $$\E \left[ \frac{\min(\pi_1,\pi_2)}{\max(\pi_1, \pi_2)} \right] \leq \frac{K-n_1(1-\beta)}{\beta K + (1-\beta) n_2} +   O \left( \frac{p\sqrt{\log n}}{\eta_1 \eta_2 \sqrt{ n}} \right) + O(1/n^4) = \frac{K-n_1(1-\beta)}{\beta K + (1-\beta) n_2} +   O \left( \frac{p\sqrt{\log n}}{\eta_1 \eta_2 \sqrt{ n}} \right) .$$ This proves the desired bound on $\sP(\cA)$. Thus we completed the proof of~\Cref{thm:effectOfBias}. The special setting of~\Cref{thm:specialcase} now follows immediately. We restate this result below:
    \effectofbiasspecial*
    \begin{proof}
        The use notation in the statement of~\Cref{thm:effectOfBias}. The parameters $\eta_1$ and $\eta_2$ can be set to 1. Since $n_1 = n_2 = K$, $K-n_1(1-\beta) = n_1 \beta$ and $K\beta + n_2(1-\beta) = n_1$. Therefore, $\sR(\Ast) = \beta \pm O \left(\frac{\sqrt{\log n}}{\sqrt{n}} \right),$
        and 
        $\sP(\Ast) \leq \beta + O \left(p\frac{\sqrt{\log n}}{\sqrt{n}} \right).$
        We now compute $\sU(\Ast).$ Observe that $\alpha_1 = K - \frac{K}{\beta+1} = \frac{\beta \cdot K}{\beta+1}$ and $\alpha_2  = \frac{K}{\beta+1}.$ Using these, we see that (up to an error of $O(1/n)$)
        $$f(\alpha_1, n_1) = \frac{2\beta+\beta^2}{2(\beta+1)^2}, \ f(\alpha_2, n_2) = \frac{2\beta+1}{2(\beta+1)^2}, \ f(K,n) = \frac{3}{8}.$$
        Substituting these values in~\Cref{thm:effectOfBias} yields the desired bound on $\sU(\Ast)$. 
    \end{proof}
\noindent
     The following result shows that when $\beta=1$, the algorithm has near-optimal fairness and utility also follows directly from the proof of~\Cref{thm:effectOfBias}. 
     \begin{corollary}
         \label{cor:betaone}
         Consider an instance where the utilities of the candidates are drawn from the uniform distribution on $[0,1]$  and the preference lists are drawn from an arbitrary distribution. Assume there are constants  $\eta_1, \eta_2$ such that $n_j \geq \eta_1 n$ for each $j \in \{1,2\}$, $K \geq \eta_2 n$.
         When the bias parameter $\beta=1$, 
             \begin{align*}
                 \textstyle \sR(\Ast)& \geq   
                    \textstyle 1
                 -
                \textstyle O \left( \frac{\sqrt{\log n}}{\eta_1 \eta_2 \sqrt{n}}\right)\\ 
                 \textstyle   \sU(\Ast) & \geq  
                 \textstyle 
                     1 - \textstyle O\left(\frac{\sqrt{\log n}}{\eta_2\sqrt{n}} \right) \\
                \sP(\Ast) & \geq 1 -
                \textstyle O \left(  \frac{p\sqrt{\log n} }{\eta_1 \eta_2\sqrt{n}} \right).
             \end{align*}
     \end{corollary}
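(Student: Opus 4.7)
The plan is to deduce this corollary from~\Cref{thm:effectOfBias} by specializing its proof to $\beta=1$, where the bias model degenerates so $\hu_i = u_i$ for every candidate and the algorithm $\Ast$ is oblivious to group membership. The bounds for $\sR$ and $\sU$ follow by direct substitution. For $\sR$, the quantity $\max\{(K-n_1(1-\beta))/(K\beta + n_2(1-\beta)),0\}$ equals $K/K = 1$ when $\beta = 1$, so~\Cref{thm:effectOfBias} immediately yields $\sR(\Ast) \geq 1 - O(\sqrt{\log n}/(\eta_1 \eta_2 \sqrt{n}))$. For $\sU$, substituting $\beta = 1$ into the definitions gives $\alpha_1 = n_1 K/n$ and $\alpha_2 = n_2 K/n$; plugging these into $f(x,y) = x - x(x+1)/(2(y+1))$ and simplifying shows $f(\alpha_1,n_1) + f(\alpha_2,n_2) = f(K,n) \pm O(1)$, so the ratio $\sum_j f(\alpha_j,n_j)/f(K,n)$ lies within the claimed error of $1$.

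The only step that is not immediate is the lower bound on $\sP(\Ast)$, because~\Cref{thm:effectOfBias} gives only an upper bound on $\sP$ (which is trivial when $\beta = 1$). The key observation I would use is that when $\beta = 1$, $\Ast$ depends only on the pairs $(u_i, \sigma_i)$, which are drawn i.i.d.\ across candidates independently of group membership. Hence, the probability $q := \Pr[i \text{ is assigned its first choice}]$ is the same for every $i \in [n]$, giving $\Ex[|\pi(S) \cap G_j|] = n_j q$ for each $j \in \{1,2\}$. Applying the Lipschitz-based concentration bound of~\Cref{lem:concentrationbounds} then yields $|\pi(S) \cap G_j| = n_j q \pm O(p \sqrt{K \log K})$ with probability $1 - O(1/n^2)$, so $\pi_j = q \pm O(p \sqrt{\log n}/(\eta_1 \sqrt{n}))$ for both groups.

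The main obstacle is converting this additive concentration around a common mean $q$ into a multiplicative bound on $\min(\pi_1,\pi_2)/\max(\pi_1,\pi_2)$. When $q$ exceeds a large constant multiple of the additive error, $\max(\pi_1,\pi_2) = \Omega(q)$ and the bound on $|\pi_1 - \pi_2|$ directly gives $\min/\max \geq 1 - O(p \sqrt{\log n}/(\eta_1 \eta_2 \sqrt{n}))$. When $q$ is smaller than the error, both $\pi_j$ are simultaneously tiny; one handles the case $\pi_1 = \pi_2 = 0$ by convention (ratio equal to $1$) and bounds the remaining contribution to $\Ex[\min/\max]$ by the low probability of the rare event that exactly one of the integer counts $|\pi(S) \cap G_j|$ is positive. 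Combining the three pieces yields the stated error on $\sP$ and completes the corollary.
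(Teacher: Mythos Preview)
Your treatment of $\sR$ and $\sU$ is correct and matches the paper: both follow by direct substitution of $\beta=1$ into \Cref{thm:effectOfBias}.

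For $\sP$ you take a genuinely different route from the paper. The paper does not introduce a fresh symmetry argument; instead it re-enters the proof of \Cref{thm:effectOfBias} at \Cref{cl:AB} and observes that when $\beta=1$ the set $A$ of candidates with observed utility in $[\beta,1]$ is empty. Hence both groups' first-choice counts come solely from $B$ and equal $\frac{g_j'}{g_1'+g_2'}\,\E[\pi(B)] \pm O(p\sqrt{n\log n})$, so the one-sided estimate~\eqref{eq:pi1pi2bound} becomes two-sided:
\[
\frac{\pi_2}{\pi_1}=\frac{n_1}{n_2}\cdot\frac{g_2'\pm O(p\sqrt{n\log n})}{g_1'\pm O(p\sqrt{n\log n})},
\]
and since $g_j'\approx n_jK/n=\Theta(\eta_1\eta_2 n)$ the ratio is $1\pm O\!\left(p\sqrt{\log n}/(\eta_1\eta_2\sqrt n)\right)$. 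This sidesteps your case-split on small $q$ entirely, because the denominator is governed by $g_1'$ (a count of selected candidates, which is large) rather than by $q$ (the first-choice probability, which could be tiny). Your exchangeability observation is elegant, but it buys you less here than reusing the existing machinery.

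There is also a technical slip in your argument: \Cref{lem:concentrationbounds} gives concentration of $|\pi(S)\cap G_j|$ around its \emph{conditional} mean given the utilities, not around the unconditional mean $n_j q$. After conditioning on utilities, that conditional expectation depends on which ranks in the top $K$ fall in $G_j$, so it need not equal $n_jq$. To get what you want you would need an extra step---either extend the Lipschitz bound to the utility coordinates and apply McDiarmid over all $2n$ inputs, or (closer to the paper's step~(iv) in Section~\ref{sec:boundingsp}) condition on $|S\cap G_j|$ and use that the group labels of the selected candidates form a uniformly random subset independent of the preference draws.
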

     \begin{proof}
         The bounds of $\cR(\Ast)$ and $\sU(\Ast)$ follow directly from~\Cref{thm:effectOfBias} by substituting $\beta=1$. The bound on $\sP(\Ast)$ requires an explanation because \Cref{thm:effectOfBias} only gives a lower bound on $\sP(\Ast)$. The main observation is that the set $A$ defined in~\Cref{cl:AB} is empty. Hence, as in~\eqref{eq:pi1pi2bound}, $\sP(\Ast)$ can be expressed as
         $$\frac{n_1}{n_2} \cdot \frac{g_2' \pm O(p\sqrt{n \log n})}{g_1' \pm O(p\sqrt{n \log n})}.$$
         The rest of the argument now follows as in the proof of~\Cref{thm:effectOfBias}.
     \end{proof}

     \subsubsection{Extension of~\Cref{thm:specialcase} to log-concave densities}
     \label{sec:logconcave}
     In this section, we show that the ~\Cref{thm:specialcase} extends to the setting where the utilities of candidates are drawn from an arbitrary log-concave density. We show that both the representational and the preference-based fairness metrics deteriorate with $\beta$. Finally, we give an example (\Cref{ex:logconcave}) showing that such deterioration may not happen for an arbitrary distribution over utilities. 

      \begin{theorem}
            \label{thm:specialcaselog}
            Let $\cD$ be a distribution on $[0,1]$  with log-concave density $f$ and $n_1 = n_2=K$. Then, for any utility maximizing algorithm $\cA$, 
            $\sR(\cA) \leq 2\beta \ln(\sfrac{1}{\beta}) \pm O \left( \frac{\sqrt{\log n}}{{\sqrt{n}}} \right).$
             and 
            $\sP(\Ast) \leq 2\beta \ln(1/\beta) + O \left( \frac{p\sqrt{\log n}}{{\sqrt{n}}} \right).$
            Further, as $\beta$ approaches 0, 
            $\sU(\Ast)$ approaches a value that is at most $\Ex[\sU(\Ast)]$ is at most $\frac{U(K,K)+U(O(\sqrt{\log K},K)}{U(K,2K)}.$
            Here $U(k,m)$ denotes the expected sum of the top $k$-values out of $m$ values chosen independently from the distribution given by $f$. 
        \end{theorem}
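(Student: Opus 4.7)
The plan is to lift the proof of~\Cref{thm:effectOfBias} from the uniform setting to an arbitrary log-concave density $f$ by replacing the closed-form computations with a threshold argument, combined with the log-concavity of the survival function $1 - F$. I would first define the threshold $\tau \in [0,1]$ by the condition $F(\tau) + F(\tau/\beta) = 1$; this is the natural analog of the quantity $\Delta$ used in the uniform proof and ensures that the expected number of candidates with observed utility at least $\tau$ equals $K$. Writing $q_1 := 1 - F(\tau)$ and $q_2 := 1 - F(\tau/\beta)$ for the expected fractions of $G_1$ and $G_2$ with observed utility exceeding $\tau$, the hypothesis $n_1 = n_2 = K$ yields $q_1 + q_2 = 1$.

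Next, I would mirror~\Cref{cl:eventE1}--\Cref{cl:boundcR} of the uniform proof. Hoeffding's inequality applied to the indicator variables $\ind[\hu_i \geq \tau]$ for $i \in G_j$ shows that, with probability $1 - O(1/n^2)$, the cardinalities $|\{i \in G_j : \hu_i \geq \tau\}|$ lie within $O(\sqrt{n \log n})$ of $n_j q_j$. An exchange argument identical to~\Cref{cl:boundcR} then forces any utility-maximizing $\cA$ to select essentially these candidates, giving $|S \cap G_j|/n_j = q_j \pm O(\sqrt{\log n / n})$, and hence $\sR(\cA) = q_2/q_1 \pm O(\sqrt{\log n / n})$ with high probability.

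The key new ingredient is a log-concavity bound on $q_2/q_1$. Since $f$ is log-concave on $[0,1]$, Prekopa's theorem gives that $1 - F$ is also log-concave, so $\log(1-F)$ is concave on $[0,1]$. Writing $\tau = (1-\beta) \cdot 0 + \beta \cdot (\tau/\beta)$ and using $1 - F(0) = 1$, concavity yields
\[ \log(1 - F(\tau)) \;\geq\; \beta \log\bigl(1 - F(\tau/\beta)\bigr), \qquad \text{i.e.,} \qquad q_1 \;\geq\; q_2^{\beta}. \]
Combined with $q_1 + q_2 = 1$, the extremal case is $q_2^\beta + q_2 = 1$, equivalently $\beta \ln(1/q_2) = -\ln(1-q_2)$. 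Using the elementary bound $-\ln(1-q_2) \geq q_2$ gives $q_2 \leq \beta \ln(1/q_2)$; a short bootstrap from the trivial bound $q_2 \leq 1$ then yields $q_2 \leq \beta \ln(1/\beta)$ to leading order, so $\sR(\cA) \leq q_2/(1-q_2) \leq 2 \beta \ln(1/\beta) \pm O(\sqrt{\log n / n})$ for $\beta$ bounded away from $1$. The $\sP(\Ast)$ bound follows by re-running the argument of~\Cref{sec:boundingsp} with $\Delta$ adapted to the log-concave threshold, applying the Lipschitz-based concentration of~\Cref{lem:concentrationbounds} to $\pi$, and invoking the monotonicity in~\Cref{cl:expectationbounds} to reduce $\pi_2/\pi_1$ to $q_2/q_1$ plus an $O(p\sqrt{\log n / n})$ lower-order term. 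For $\sU$ as $\beta \to 0$, the same concentration forces $|S \cap G_2| \leq O(\sqrt{K \log K})$ with high probability, so the true utility of $S$ is at most the top $K$ values from $G_1$ plus the top $O(\sqrt{K \log K})$ values from $G_2$, giving an expected utility of $U(K,K) + U(O(\sqrt{K \log K}), K)$, while the optimum is $U(K, 2K)$.

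The main obstacle will be the third paragraph. First, solving $q_2^\beta + q_2 = 1$ to obtain $q_2 \lesssim \beta \ln(1/\beta)$ requires careful bootstrapping and breaks down when $\beta$ is close to $1$ (where the trivial bound $\sR \leq 1$ is sharper), so the stated bound must be understood as the interesting regime $\beta \ll 1$. Second, to carry the concentration argument through uniformly in $f$, one must verify that the threshold $\tau$ stays in a regime where the density $f(\tau)$ is bounded below, so that an $O(\sqrt{n \log n})$ error in the count translates to an $O(\sqrt{\log n / n})$ error in $q_j$. Log-concavity of $f$ together with the normalization $\int f = 1$ provides just enough regularity to control these terms, but the bookkeeping around the density's mode and tails must be done carefully, especially to rule out pathologies that \Cref{ex:logconcave} shows can arise once log-concavity is dropped.
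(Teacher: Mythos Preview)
Your proposal is correct, and the overall architecture (threshold $\tau$, concentration via Hoeffding, exchange argument, then the $\sP$ bound by re-running \cref{sec:boundingsp} with \cref{lem:concentrationbounds} and \cref{cl:expectationbounds}) matches the paper exactly. However, your core step bounding $q_2/q_1$ is genuinely different from the paper's. The paper works directly with log-concavity of the \emph{density} $f$: it draws the secant line $L(x)$ to $\log f$ through the points $\Delta$ and $\Delta/\beta$, so that the exponential comparison density $\hat f(x) = Ce^{-\lambda x}$ dominates $f$ outside $[\Delta,\Delta/\beta]$ and is dominated inside; this monotonically pushes the area ratios $A_1/(A_1+A_2)$ and $A_3/(A_2+A_3)$ upward, reducing to an explicit computation for $\hat f$ followed by a case split on the slope $\lambda$. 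You instead invoke Pr\'ekopa's theorem to pass to log-concavity of the \emph{survival function} $1-F$, then apply the concavity inequality at the single convex combination $\tau = (1-\beta)\cdot 0 + \beta\cdot(\tau/\beta)$ to get $q_1 \geq q_2^{\beta}$, and finish with a short algebraic bootstrap. Your route is shorter and avoids both the exponential comparison and the case analysis, at the cost of quoting Pr\'ekopa; the paper's route is fully self-contained but more computational. One small remark: the worry you flag in the last paragraph about needing a density lower bound at $\tau$ is not actually needed---the concentration is applied directly to the \emph{counts} $|\{i \in G_j : \hu_i \geq \tau\}|$ via Hoeffding, and the exchange argument then pins $|S \cap G_j|$ to within $O(\sqrt{n\log n})$ of $n_j q_j$ without ever perturbing $\tau$ or touching $f(\tau)$.
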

        \begin{remark}
             (i) As in the statement of~\Cref{thm:specialcase}, we can also derive an upper bound on the utility metric $\sU(\cA)$. But this bound shall depend on the specific density; (ii) We can also generalize the above statement to the more general setting of~\Cref{thm:effectOfBias}. 
        \end{remark}

        \begin{proof}
            We give a sketch of the main ideas in the proof. These can be formalized as in the proof of~\Cref{thm:effectOfBias}. We assume that $\beta \ln(1/\beta) \leq 0.5$, otherwise the desired bounds hold trivially. 
            
            \paragraph{Bounding $\sR(\Ast)$.}
            The proof proceeds in two steps. In the first step, we reduce the problem of bounding $\sR(\Ast)$ to the special case when $f(x) \propto e^{-\lambda x}$ for some parameter $\lambda.$ In the second step, we carry out explicit calculations for this special case. 
            
            \begin{figure}
                \begin{center}
                    \includegraphics[width=5in]{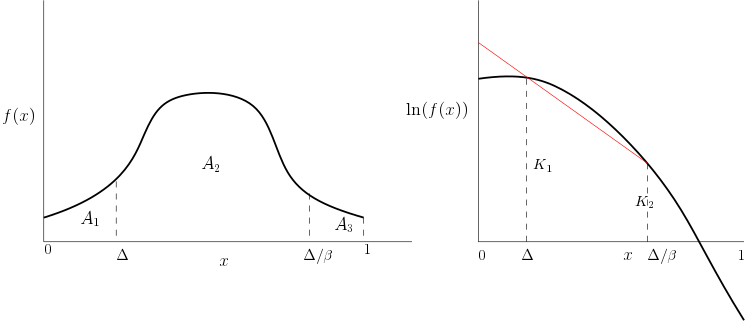}
                \end{center}
                \caption{Illustration of the proof of~\Cref{thm:specialcaselog}: The figure on the left plots the density $f(x)$. The quantities $A_1, A_2, A_3$ are the areas under the curve for the intervals $[0, \Delta], [\Delta, \Delta/\beta], [\Delta/\beta,1]$. The figure on the right plots $g(x):=\ln(f(x))$, which is a concave function. The line $L(x)$ is shown in red.  }
                \label{fig:logconc}
            \end{figure}

            \medskip
\noindent
            We now give details of the first step of the proof. 
            Given a value $v \in [0,1]$, let $X_v$ denote the number of candidates whose estimated utility lies in the range $[v,1]$. 
                The key idea, as in the proof of~\Cref{thm:effectOfBias}, is to find the value $\Delta$ such that $\Ex[X_\Delta] = K$ (recall that the total available capacity is $K$). One can then show using concentration bounds (as in~\Cref{thm:effectOfBias}) that the number of selected candidates is close (up to an error bound of $O(\sqrt{K})$) to $\Ex[X_\Delta]$. Now let $F(x)$ denote the c.d.f. of $f$, i.e., for a value $x \in [0,1]$, 
                $F(x) = \int_0^x f(x) dx.$
                Then $\Ex[X_\Delta \cap G_1] = K(1 - F(\Delta)$. Now, $X_\Delta \cap G_2$ consists of those candidates in $G_2$ whose true utility is at least $\Delta/\beta$. Therefore, 
                $\Ex[X_\Delta \cap G_2] = K(1-F(\Delta/\beta))$ (recall that $n_1 = K$). Thus, the parameter $\Delta$ satisfies the condition that 
                \begin{eqnarray}
                    K(1-F(\Delta)) + K(1-F(\Delta/\beta)) & = K \notag \\
                    i.e., \quad 1-F(\Delta/\beta) = F(\Delta). \label{eq:conc}
                \end{eqnarray}
\noindent
                This condition can be visualized as follows. We give some notation first.  Given a non-negative function $h:[0,1] \rightarrow \Re,$ values $v_1 \leq v_2$ lying in $[0,1]$, let $\Area_h(v_1,v_2)$ denote the area under $h$ from $v_1$ to $v_2$, i.e., $\int_{v_1}^{v_2} h(x) dx.$
                Let $A_1,A_2$ and $A_3$ denote $\Area_f(0,\Delta), \Area_f(\Delta, \Delta/\beta)$ and $\Area_f(\Delta/\beta,1)$ respectively (see~\Cref{fig:logconc}). Then the condition~\eqref{eq:conc} is the same as $A_3=A_1.$ Further, $\Ex[X_\Delta \cap G_1] = K(1-F(\Delta)) = K(A_2 + A_3)$ and $\Ex[X_\Delta \cap G_2] = K(1-F(\Delta/\beta)) = K \cdot A_3$. Since we are interested in the ratio of these two expectations, we define $$\gamma := \frac{A_3}{A_2+A_3} = \frac{A_1}{A_1+A_2},$$
                where the equality follows from the fact that $A_1=A_3$. Now, let $g(x)$ denote $\ln(f(x))$. Since $f$ is log-concave, $g$ is a concave function. Let $L(x)$ be the line that passes through the points $(\Delta, g(\Delta)$ and $(\Delta/\beta,g(\Delta/\beta))$ (see~\Cref{fig:logconc}). We assume that $L(x)$ is of the form $-\lambda x + b$, for a non-negative parameter $\lambda$ (the case when $\lambda > 0$ is similar). Let $\hf(x)$ denote $e^{L(x)}= Ce^{-\lambda x}$ for some parameter $C$. Since $g$ is concave,   $L(x) \geq g(x)$ for $x \in [0, \Delta] \cup [\Delta/\beta,1]$, and $L(x) \leq g(x)$ for $x \in [\Delta,\Delta/\beta]$.  Therefore, if $A_1', A_2', A_3'$ denote $\Area_{\hf}(0,\Delta), \Area_{\hf}(\Delta, \Delta/\beta)$ and $\Area_{\hf}(\Delta/\beta,1)$, then $A_1' \geq A_1, A_2' \leq A_2, A_3' \geq A_3$. Since the ratio $\frac{a}{a+b}$ is an increasing function of $a$ and decreasing function of $b$ for any non-negative real $a,b$, it follows that 
                 \begin{align}
                     \label{eq:gammacondition1}
                     \gamma \leq \min \left( \frac{A_1'}{A_1'+A_2'}, \frac{A_3'}{A_2'+A_3'} \right).
                 \end{align}
\noindent
                 Now, the above expression only involves the function $\hf$. Now we carry out the second step in the proof. There are two cases: 
                 \begin{itemize}
                     \item $\lambda \leq \frac{\beta}{\Delta} \ln(\sfrac{1}{\beta})$: We know  that $\hf(x)$ is of the form $C e^{-\lambda x}$. Hence, 
                     $$\frac{A_1'}{A_1'+A_2'} = \frac{\Area_{\hf}(0,\Delta)}{\Area_{\hf}(0,\Delta/\beta)} \leq 2 \lambda \Delta \leq  2\beta \ln(\sfrac{1}{\beta}),$$
                     where the second last inequality follows from the fact that $\Area_{\hf}(0,\Delta) \leq \max_{x \in [0,1]} \hf(X) \cdot \Delta = C \Delta$ and $\Area_{\hf}(0, \Delta/\beta) = \frac{C-C e^{-\lambda \Delta/\beta}}{\lambda} \geq \frac{C-C\beta}{\lambda} \geq \frac{C}{2\lambda},$ where we have used the fact that  $\beta \leq 1/2$ (since $\beta \ln(1/\beta)$ has been assumed to be at most $0.5$). 
                     \item $\lambda > \frac{\beta}{\Delta} \ln(\sfrac{1}{\beta}):$ In this case, we see that 
                     $$ \frac{A_3'}{A_2'+A_3'} = \frac{e^{-\lambda \Delta/\beta}-e^{-\lambda}}{e^{-\lambda \Delta}-e^{-\lambda}} \leq \frac{e^{-\lambda \Delta/\beta}}{e^{-\lambda \Delta}}, $$
                     where the last hand inequality follows from the fact that if $0 \leq a \leq b$, then $\frac{a}{b} \geq \frac{a-t}{b-t}$ for any $t \geq 0.$ Since the r.h.s. above is a decreasing function of $\lambda$ (because $\beta < 1$), we see that it is at most $\frac{\beta}{e^{-\beta \ln(1/\beta)}} \leq 2\beta,$
                     because $\beta \ln (1/\beta)$ has been assumed to be at most $0.5$. 
                 \end{itemize}
                 Thus, we see that in either of the two cases, $\gamma \leq 2 \beta \ln(1/\beta)$. Now, let $S$ be the set of candidates selected by $\Ast$. Arguing as in the proof of~\Cref{thm:effectOfBias}, we can show that 
                 $$\Ex \left[\frac{|S \cap G_2|}{|S \cap G_2|} \right] = \frac{A_3 \pm O(\sqrt{K \log k)}}{A_2+A_3 \pm O(\sqrt{K \log K}} + O \left( \frac{1}{n} \right).$$
                 Since $A_2 + A_3 \geq 0.5$ (as $A_1 + A_2 + A_3=1$ and $A_1=A_3$), the above can be written as 
                 $$\gamma \pm O \left( \frac{\sqrt{\log n}}{\sqrt{n}} \right).$$
                 Since $\gamma \leq 2\beta \ln(1/\beta)$, this completes the required bound on $\sR(\Ast)$. 
                \paragraph{Bounding $\sU(\Ast)$.} Given positive integers $k,m$, let $U(k,m)$ be the expectation of the top $k$ values from $m$ values chosen  independently from the density $f$.  As $\beta$ goes to 0, we have shown that the number of selected agents from $G_1$ goes to $O(\sqrt{\log K})$ with high probability. Therefore, as $\beta$ goes to 0, $\Ex[\sU(\Ast)]$ is at most $$\frac{U(K,K)+U(O(\sqrt{\log K},K)}{U(K,2K)},$$
                where the denominator gives the optimal utility of the top-$K$ candidates from the set of $2K$ available candidates, and the numerator gives an upper bound on the utilities of top $O(\sqrt{K \log K})$ agents from $G_2$ and the agents from $G_1$. This completes the desired bound on $\sU(\Ast)$. 
                \paragraph{Bounding $\sP(\Ast)$.} Finally, we bound $\sU(\Ast)$. The proof proceeds as in the proof in~\Cref{thm:effectOfBias} with the following changes: let $\Delta$ be defined as above by~\eqref{eq:conc}. As in Section~\ref{sec:BoundingR}, we define the events $\cE_1, \cE_2, \cE_3$ as follows:  $\cE_1$ is the event that $G_1(\beta,1) \in (1-F(\beta))n_1 \pm 8 \sqrt{n \log n}$, $\cE_2$ is the event that $G_1(\beta-\Delta,1) \in (1-F(\beta-\Delta))n_1 \pm 8 \sqrt{n \log n}$
                and $\cE_3$ is the event that $G_2(\beta-\Delta,1) \in (1-F(1-\Delta/\beta))n_2 \pm 8 \sqrt{n \log n}$. The rest of the arguments proceed as in Section~\ref{sec:boundingsU}. This completes the proof of the desired result. 
        \end{proof}
\noindent
        We now show an example illustrating that the result in~\Cref{thm:specialcaselog} does not extend to arbitrary distributions. 
        \begin{example}
        \label{ex:logconcave}
            Consider a distribution that places $0.5$ probability mass at $0$ and $1$ each (although this is a discrete distribution, the example can be easily extended to continuous distributions as well). Assume $\beta \in (0,1)$, $|G_1|=|G_2|=K$ and $p=1$. Among the $K$ candidates from each group $G_1$, roughly $K/2$ candidates have a utility of $1$, and the rest have $0$ utility. Similarly, about $K/2$ candidates from $G_2$ have utility $\beta$, and the remaining have utility $0$. Thus, among the top $K$ candidates, both groups have roughly $K/2$ candidates each. Thus, the representational fairness remains close to 1 even when $\beta$ goes to 0. Contrast this with the case when the distribution $f$ is log-concave. As~\Cref{thm:specialcaselog} shows, in such a setting the representational fairness goes to 0 as $\beta$ goes to 0. 
        \end{example}
    \subsection{Proof of Theorem~\ref{thm:institutionWiseConstraints}}\label{sec:proofof:thm:institutionWiseConstraints}
        In this section, we prove \cref{thm:institutionWiseConstraints}.
        For ease of readability, we restate \cref{thm:institutionWiseConstraints} below.

        \thminstitutionWiseConstraints*

       \noindent 
        The algorithm $\Ainst$ proceeds by running $\Ast$ on two independent instances, $\cI_j, j \in \{1,2\}$. 
        In $\cI_j$, there are $n_j$ candidates belonging to group $G_j$ and institution $\ell \in [p]$ has $k_{\ell,j} := k_\ell \cdot n_j/n$ seats. Let $M_j$ be the assignment obtained by running $\Ast$ on $\cI_j$; and $M$ denote the resulting assignment for $G_1 \cup G_2$, i.e., all the candidates in $[n]$. 
        The proof is divided into three steps.
        In the first step, we prove the representational fairness guarantee of $M$, i.e., $\sR(\Ainst)=1$, and discuss how it implies the bound on the utility ratio $\sU(\Ainst)$ in the second step. %
        In the third step, we analyze \ouralgo{} to conclude the proof. The bounds on representational fairness and utility follow from the institution-wise constraints. The bulk of the technical non-triviality in showing the desired bound on $\sU(\Ainst)$ lies in proving~\Cref{lem:gap}.

        \subsubsection{Step 1: Bounding $\sR$} 
        \label{sec:boundingRinst}

            Let $S$ be the set of candidates matched in $M$, i.e., $S := M^{-1}([p]).$
            Let $S_j := S \cap G_j$ be the candidates in $G_j$ that are assigned by $M_j$. Let $K_j$ denote $K \cdot n_j/n$.  Observe that $K_j$ is the total available capacity in the instance $\cI_j$. 
            Since $n \geq K$, $n_j \geq K_j$ and hence $|S_j| = K_j$. Recall that $q_j$ denotes the fraction of the candidates from $G_j$ that are assigned by $M$. Therefore, 
            $$\frac{q_1}{q_2} = \frac{K_1/n_1}{K_2/n_2} = \frac{K/n}{K/n} = 1.$$
            Similarly, $\frac{q_2}{q_1} = 1$. This shows that $\sR(\Ainst) = 1.$

        \subsubsection{Step 2: Bounding $\sU$}
        \label{sec:boundingUinst}
        We now bound~$\sU(\Ainst)$. 
        A random input for the instance can be generated as follows: (i) sample utilities and preferences for the $n$ candidates independently from the distribution $\cD$ and $\cL$ respectively; (ii) select a subset of $n_1$ candidates uniformly at random from the set of $n$ candidates, and assign them to $G_1$, and assign the remaining candidates to $G_2$. 

        Now, we condition on the utility values of the $n$ candidates obtained in step~(i) above. We show that conditioned on these values, the expected utility ratio is close to 1, and hence, $\sU(\Ainst)$ is close to 1 after removing this conditioning as well. 

        Let $S$ be the set of candidates with utility among the top $K$ values. We use~\Cref{thm:Hoeffding} for negatively correlated random variables as follows. For a candidate $i \in S$, let $X_i$ be the indicator random variable that is 1 if and only if $i$ is assigned to group $G_1$. Then the variables $X_i$'s are negatively correlated, and $\Ex[X_i] = \frac{n_1}{n}$.
        Now observe that $|S \cap G_1| = \sum_{i=1}^K X_i$, and hence, $\Ex[|S \cap G_1| = \frac{n_1 K}{n}$. Applying~\Cref{thm:Hoeffding}, we see that 
        $$\Pr \left[|S \cap G_1| \leq \frac{n_1 K}{n} - 2 \sqrt{K \log K}\right] \leq \frac{1}{2K^2}.$$
        Similarly, 
         $$\Pr \left[|S \cap G_2| \leq \frac{n_2 K}{n} - 2 \sqrt{K \log K} \right] \leq \frac{1}{2K^2}.$$
         The above two inequalities show that with probability at least $1-\frac{1}{K^2}$, $\Ainst$ selects  the top $K-4 \sqrt{K \log K}$ candidates from $S$. Let $U$ denote the total utility of the candidates in $S$. Then total utility of the top $K-4 \sqrt{K \log K}$ candidates in $S$ is at least $U \left(1-\frac{4 \sqrt{K \log K}}{K} \right).$
        Thus, 
        $\Ex[sU(\Ainst)] \geq \left( 1 - \frac{1}{K^2} \right) \cdot \left(1-\frac{4\sqrt{K \log K}}{K} \right) \geq 1 - O \left( \frac{\sqrt{\log K}}{\sqrt{K}} \right) = 1 - O \left( \frac{\sqrt{\log n}}{\sqrt{\eta_2 n}} \right).$     
This gives the desired bound on $\sU(\Ainst).$

            \subsubsection{Step 3: Bounding $\sP$}
            \label{sec:boundingPinst}
             Recall that $\cI_j$ denotes the instance corresponding to candidates in the group $G_j$, and $M_j$ denotes the assignment obtained by running $\Ast$ on $G_j$. Let $\pi(M_j)$ denote the number of candidates that are assigned the most preferred choice by $M_j$. Using~\Cref{lem:concentrationbounds}, we see  
             for each $j \in \{1,2\}$:
             $$\Pr[|\pi(M_j)- \E[\pi(M_j)]| \geq 8p\sqrt{K \log K}] \leq 1/K^2.$$
             Applying~\Cref{lem:gap} with $a=n/n_j$, we get 
             $$\left|\frac{n_j}{n} \cdot \Ex[\pi(M)] - \Ex[\pi(M_j)] \right| = O(p \sqrt{K \log K}),$$
             where $\pi(M)$ denotes the expected number of candidates that are assigned their first choice in the assignment $M$ obtained when we run $\Ast$ on the entire input $\cI$. Combining the above two inequalities, we see that with probability at least $1-1/K^2$, the following event happens for each $j \in \{1,2\}$: 
             \begin{align}
                 \label{eq:fairnessevent}
                 \pi(M_j) = \frac{n_j}{n} \cdot \Ex[\pi(M)] \pm O(p \sqrt{K \log K}). 
             \end{align}
        Recall that $\pi_j$ denotes $\pi(M_j)/n_j$. Hence, 
        $$\frac{\pi_2}{\pi_1} = \frac{\Ex[\pi(M)] \pm O(np \sqrt{K \log K})/n_2}{\Ex[\pi(M)] \pm O(np \sqrt{K \log K})/n_1}.$$
        Now, $\Ex[\pi(M)] \geq \min_\ell k_\ell = \eta_3 K$ and $n_1, n_2 \geq \eta_1 n$. Therefore, we can bound the above expression as follows
        $$\frac{\pi_2}{\pi_1} \geq 1 -  O\left(\frac{p \sqrt{ \log K}}{\eta_1 \eta_3 \sqrt{K}} \right).$$
        The same bound applies for $\frac{\pi_1}{\pi_2}$. Since the above-mentioned events happen with a probability of at least $1-1/K^2$, we see that 
        $$ \Ex \left[ \frac{\min(\pi_1, \pi_2)}{\max(\pi_1, \pi_2)} \right] \geq \left( 1 - 1/K^2 \right) \cdot \left(1 -  O\left(\frac{p \sqrt{ \log K}}{\eta_1 \eta_3 \sqrt{K}} \right)\right) = 1 -  O\left(\frac{p \sqrt{ \log K}}{\eta_1 \eta_3 \sqrt{K}} \right). $$
        This proves the desired bound on $\sP(\Ainst)$, and completes the proof of~\Cref{thm:institutionWiseConstraints}.

\section{Conclusion,  Limitations, and Future Work}
  This paper formalizes and studies the impact of biases against socio-economic groups for a centralized selection problem.
Our first result shows that biases in utilities can result in assignments that not only lower utility and decrease representation but also fail to conform to preference-based notions of fairness 
 (\cref{thm:effectOfBias}). 
    We present a family of institution-wise representational constraints and show that the algorithm \ouralgo{} achieves both preference-based fairness and near-optimal utility (\cref{thm:institutionWiseConstraints}).
   We extensively evaluate the performance of \ouralgo{} on real-world and synthetic datasets, along with several deviations from the assumptions of \cref{thm:institutionWiseConstraints}. We observe that it achieves high utility and near-optimal preference-based fairness compared to baselines. 
    \cref{thm:institutionWiseConstraints} relies on the assumption that the utilities of individuals are independently and identically distributed. This i.i.d. assumption is not specific to our work and also applies to the results of the prior works \cite{KleinbergR18, celis2020interventions} for the special case of subset selection (where there is only one institution and hence, the preferences of the candidates are vacuous). Extending these prior works in the non-i.i.d. setting is already an interesting direction.
The proof of \cref{thm:institutionWiseConstraints} can be generalized to more than two groups by reserving capacity at each institution for each of the groups in proportion to their sizes and running $\Ast$ on several independent instances, one for each group.

   Future research could explore settings where different institutions value candidates differently, such as the hospital-resident matching problem, particularly in the presence of biases.  If institutions' utilities are uncorrelated, adapting our results might need completely new approaches. This could involve modeling how institutions' rankings of candidates correlate. A starting point might be the model proposed by ~\cite{remi2022statistical}, which assumes that different institutions' utilities for a candidate are correlated.

    Finally, we emphasize that discrimination is a systemic problem and our work only studies a part of it. 
    To holistically counter bias and avoid unintended negative impacts, it is also important to study the long-term consequences of interventions, e.g., investments in skill development \cite{heidari2021longterm} and the impact on biases in evaluations \cite{celis2021longterm}, and evaluate the efficacy of interventions in specific contexts both pre- and post-deployment.

\section*{Acknowledgments} 
This work was funded in part by grants from Tata Sons Private Limited, Tata Consultancy Services Limited, and Titan, and NSF Awards CCF-2112665 and IIS-2045951.
We acknowledge Anay Mehrotra for his help in executing some components of the empirical work and proofs at an earlier stage of the project.
We acknowledge Vineet Goyal and Bhaskar Ray
Chaudhari for initial discussions regarding theoretical results.

\newpage

\bibliographystyle{plain}
\bibliography{references}

\begin{thebibliography}{10}

\bibitem{agarwal2020sway}
Pragya Agarwal.
\newblock {\em {Sway: Unravelling Unconscious Bias}}.
\newblock Bloomsbury Publishing, 2020.

\bibitem{ahsanullah2013introduction}
Mohammad Ahsanullah, Valery~B. Nevzorov, and Mohammad Shakil.
\newblock {\em {An Introduction to Order Statistics}}, volume~8.
\newblock Springer, 2013.

\bibitem{alderman1998gender}
Harold Alderman and Elizabeth~M King.
\newblock Gender differences in parental investment in education.
\newblock {\em {Structural Change and Economic Dynamics}}, 9(4):453--468, 1998.

\bibitem{aziz2016stable}
Haris Aziz, P{\'e}ter Bir{\'o}, Serge Gaspers, Ronald de~Haan, Nicholas Mattei, and Baharak Rastegari.
\newblock Stable matching with uncertain linear preferences.
\newblock In Martin Gairing and Rahul Savani, editors, {\em Algorithmic Game Theory}, pages 195--206, Berlin, Heidelberg, 2016. Springer Berlin Heidelberg.

\bibitem{fairmlbook}
Solon Barocas, Moritz Hardt, and Arvind Narayanan.
\newblock {\em Fairness and Machine Learning}.
\newblock fairmlbook.org, 2019.

\bibitem{baswana2015joint}
Surender Baswana, Partha~P Chakrabarti, V~Kamakoti, Yash Kanoria, Ashok Kumar, Utkarsh Patange, and Sharat Chandran.
\newblock Joint seat allocation: An algorithmic perspective, 2015.

\bibitem{baswana2019centralized}
Surender Baswana, Partha~Pratim Chakrabarti, Sharat Chandran, Yashodhan Kanoria, and Utkarsh Patange.
\newblock {Centralized Admissions for Engineering Colleges in India}.
\newblock In {\em Proceedings of the 2019 ACM Conference on Economics and Computation}, EC '19, page 323–324, New York, NY, USA, 2019. Association for Computing Machinery.

\bibitem{biswas2021toward}
Arpita Biswas, Gourab~K. Patro, Niloy Ganguly, Krishna~P. Gummadi, and Abhijnan Chakraborty.
\newblock Toward fair recommendation in two-sided platforms.
\newblock {\em ACM Trans. Web}, 16(2), dec 2021.

\bibitem{blum2020recovering}
Avrim Blum and Kevin Stangl.
\newblock Recovering from biased data: Can fairness constraints improve accuracy?
\newblock In {\em {FORC}}, volume 156 of {\em LIPIcs}, pages 3:1--3:20. Schloss Dagstuhl - Leibniz-Zentrum f{\"{u}}r Informatik, 2020.

\bibitem{boehmer2023multiwinner}
Niclas Boehmer, L.~Elisa Celis, Lingxiao Huang, Anay Mehrotra, and Nisheeth~K. Vishnoi.
\newblock Subset selection based on multiple rankings in the presence of bias: Effectiveness of fairness constraints for multiwinner voting score functions.
\newblock In {\em {ICML}}, 2023.

\bibitem{upturn_help_wanted}
Miranda Bogen and Aaron Rieke.
\newblock {Help Wanted: An Examination of Hiring Algorithms, Equity, and Bias}, December 2018.
\newblock \url{https://www.upturn.org/reports/2018/hiring-algorithms/}.

\bibitem{Bohnet_2016}
Iris Bohnet.
\newblock {\em What Works: Gender Equality by Design}.
\newblock Harvard University Press, 2016.

\bibitem{remi2022statistical}
R\'{e}mi Castera, Patrick Loiseau, and Bary~S.R. Pradelski.
\newblock Statistical discrimination in stable matchings.
\newblock In {\em Proceedings of the 23rd ACM Conference on Economics and Computation}, EC '22, page 373–374, New York, NY, USA, 2022. Association for Computing Machinery.

\bibitem{celis2021longterm}
L.~Elisa Celis, Chris Hays, Anay Mehrotra, and Nisheeth~K. Vishnoi.
\newblock {The Effect of the Rooney Rule on Implicit Bias in the Long Term}.
\newblock In {\em Proceedings of the 2021 ACM Conference on Fairness, Accountability, and Transparency}, FAccT '21, page 678–689, New York, NY, USA, 2021. Association for Computing Machinery.

\bibitem{celis2019classification}
L.~Elisa Celis, Lingxiao Huang, Vijay Keswani, and Nisheeth~K. Vishnoi.
\newblock {Classification with Fairness Constraints: A Meta-Algorithm with Provable Guarantees}.
\newblock In {\em {FAT}}, pages 319--328. {ACM}, 2019.

\bibitem{CelisKMV23}
L.~Elisa Celis, Amit Kumar, Anay Mehrotra, and Nisheeth~K. Vishnoi.
\newblock Bias in evaluation processes: An optimization-based model.
\newblock In Alice Oh, Tristan Naumann, Amir Globerson, Kate Saenko, Moritz Hardt, and Sergey Levine, editors, {\em Advances in Neural Information Processing Systems 36: Annual Conference on Neural Information Processing Systems 2023, NeurIPS 2023, New Orleans, LA, USA, December 10 - 16, 2023}, 2023.

\bibitem{celis2020interventions}
L.~Elisa Celis, Anay Mehrotra, and Nisheeth~K. Vishnoi.
\newblock Interventions for ranking in the presence of implicit bias.
\newblock In {\em FAT*}, pages 369--380. {ACM}, 2020.

\bibitem{celis2018ranking}
L.~Elisa Celis, Damian Straszak, and Nisheeth~K. Vishnoi.
\newblock {Ranking with Fairness Constraints}.
\newblock In {\em {ICALP}}, volume 107 of {\em LIPIcs}, pages 28:1--28:15. Schloss Dagstuhl - Leibniz-Zentrum fuer Informatik, 2018.

\bibitem{censusinfo}
{CensusInfo India 2011: Final Population Totals}, 2011.
\newblock \url{http://censusindia.gov.in/2011census/censusinfodashboard/index.html}.

\bibitem{doi:10.1086/689773}
Yan Chen and Onur Kesten.
\newblock Chinese college admissions and school choice reforms: A theoretical analysis.
\newblock {\em Journal of Political Economy}, 125(1):99--139, 2017.

\bibitem{Chierichetti0LV19}
Flavio Chierichetti, Ravi Kumar, Silvio Lattanzi, and Sergei Vassilvitskii.
\newblock Matroids, matchings, and fairness.
\newblock In {\em {AISTATS}}, volume~89 of {\em Proceedings of Machine Learning Research}, pages 2212--2220. {PMLR}, 2019.

\bibitem{clauset2009power}
Aaron Clauset, Cosma~Rohilla Shalizi, and Mark~EJ Newman.
\newblock Power-law distributions in empirical data.
\newblock {\em SIAM review}, 51(4):661--703, 2009.

\bibitem{das2005two}
Sanmay Das and Emir Kamenica.
\newblock Two-sided bandits and the dating market.
\newblock In {\em IJCAI}, volume~5, page~19. Citeseer, 2005.

\bibitem{dorans2002recentering}
Neil~J Dorans.
\newblock {The recentering of SAT{\textregistered} scales and its effects on score distributions and score interpretations}.
\newblock {\em ETS Research Report Series}, 2002(1):i--21, 2002.

\bibitem{dubhashibook}
Devdatt~P. Dubhashi and Alessandro Panconesi.
\newblock {\em Concentration of Measure for the Analysis of Randomized Algorithms}.
\newblock Cambridge University Press, 2009.

\bibitem{elsesser2019lawsuitSATACT}
Kim Elsesser.
\newblock {Lawsuit Claims SAT And ACT Are Biased—Here’s What Research Says}, 2019.

\bibitem{EmelianovGGL20}
Vitalii Emelianov, Nicolas Gast, Krishna~P. Gummadi, and Patrick Loiseau.
\newblock On fair selection in the presence of implicit variance.
\newblock In {\em {EC}}, pages 649--675. {ACM}, 2020.

\bibitem{freeman2021two}
Rupert Freeman, Evi Micha, and Nisarg Shah.
\newblock Two-sided matching meets fair division.
\newblock {\em arXiv preprint arXiv:2107.07404}, 2021.

\bibitem{galeshapley1962}
D.~Gale and L.~S. Shapley.
\newblock College admissions and the stability of marriage.
\newblock {\em The American Mathematical Monthly}, 69(1):9--15, 1962.

\bibitem{garg2021standardized}
Nikhil Garg, Hannah Li, and Faidra Monachou.
\newblock Standardized tests and affirmative action: The role of bias and variance.
\newblock In {\em Proceedings of the 2021 ACM Conference on Fairness, Accountability, and Transparency}, FAccT '21, page 261, New York, NY, USA, 2021. Association for Computing Machinery.

\bibitem{gawande2010checklist}
Atul Gawande.
\newblock {\em The Checklist Manifesto: How to Get Things Right}.
\newblock Henry Holt and Company, 2010.

\bibitem{BlindAuditions2000}
Claudia Goldin and Cecilia Rouse.
\newblock {Orchestrating Impartiality: The Impact of ``Blind'' Auditions on Female Musicians}.
\newblock {\em American Economic Review}, 90(4):715--741, September 2000.

\bibitem{Goto}
Masahiro Goto, Fuhito Kojima, Ryoji Kurata, Akihisa Tamura, and Makoto Yokoo.
\newblock Designing matching mechanisms under general distributional constraints.
\newblock {\em American Economic Journal: Microeconomics}, 9(2):226--62, May 2017.

\bibitem{hannak2017freelance}
Anik\'{o} Hann\'{a}k, Claudia Wagner, David Garcia, Alan Mislove, Markus Strohmaier, and Christo Wilson.
\newblock {Bias in Online Freelance Marketplaces: Evidence from TaskRabbit and Fiverr}.
\newblock In {\em Proceedings of the 2017 ACM Conference on Computer Supported Cooperative Work and Social Computing}, CSCW '17, page 1914–1933, New York, NY, USA, 2017. Association for Computing Machinery.

\bibitem{heidari2021longterm}
Hoda Heidari and Jon Kleinberg.
\newblock {Allocating Opportunities in a Dynamic Model of Intergenerational Mobility}.
\newblock In {\em Proceedings of the 2021 ACM Conference on Fairness, Accountability, and Transparency}, FAccT '21, page 15–25, New York, NY, USA, 2021. Association for Computing Machinery.

\bibitem{IRVING1985577}
Robert~W Irving.
\newblock An efficient algorithm for the ``stable roommates''' problem.
\newblock {\em Journal of Algorithms}, 6(4):577--595, 1985.

\bibitem{IITJEEBrochure}
{JOSAA}.
\newblock {IIT-JEE Brochure}.
\newblock \url{https://mba.ind.in/forum/attachments/f2/2103d1348557572-iit-jee-brochure-downloads-iit-jee-brochure.pdf}, 2009.

\bibitem{seatMatrixJOSSA2022}
JOSAA.
\newblock {Seat Information for IITs, NITs, IIEST, IIITs and Other-GFTIs for the Academic Year 2022-23}, 2022.
\newblock \url{https://josaa.admissions.nic.in/applicant/seatmatrix/seatmatrixinfo.aspx}.

\bibitem{Kojima1}
Yuichiro Kamada and Fuhito Kojima.
\newblock Efficient matching under distributional constraints: Theory and applications.
\newblock {\em American Economic Review}, 105(1):67--99, January 2015.

\bibitem{KAMADA2017107}
Yuichiro Kamada and Fuhito Kojima.
\newblock Stability concepts in matching under distributional constraints.
\newblock {\em Journal of Economic Theory}, 168:107--142, 2017.

\bibitem{Karni2022fairmatching}
Gili Karni, Guy~N. Rothblum, and Gal Yona.
\newblock {On Fairness and Stability in Two-Sided Matchings}.
\newblock In Mark Braverman, editor, {\em 13th Innovations in Theoretical Computer Science Conference (ITCS 2022)}, volume 215 of {\em Leibniz International Proceedings in Informatics (LIPIcs)}, pages 92:1--92:17, Dagstuhl, Germany, 2022. Schloss Dagstuhl -- Leibniz-Zentrum f{\"u}r Informatik.

\bibitem{kite2016psychology}
M.E. Kite and B.E. Whitley.
\newblock {\em Psychology of Prejudice and Discrimination: 3rd Edition}.
\newblock Taylor \& Francis, 2016.

\bibitem{KleinbergR18}
Jon~M. Kleinberg and Manish Raghavan.
\newblock Selection problems in the presence of implicit bias.
\newblock In {\em {ITCS}}, volume~94 of {\em LIPIcs}, pages 33:1--33:17. Schloss Dagstuhl - Leibniz-Zentrum f{\"{u}}r Informatik, 2018.

\bibitem{rti_against_jee}
Mr.~Rajeev Kumar.
\newblock {RTI Complaint}, 2009.
\newblock Decision No. CIC/SG/C/2009/001088/5392, Complaint No. CIC/SG/C/2009/001088.

\bibitem{kumar2021castenexus}
Rahul Kumar.
\newblock {SC, ST, OBC representation in Indian education is dismal, upper-caste nexus persists}.
\newblock {\em The Print}, March 2021.
\newblock \url{https://theprint.in/campus-voice/sc-st-obc-representation-in-indian-education-is-dismal-upper-caste-nexus-persists/627217/}.

\bibitem{laverde2022distance}
Mariana Laverde.
\newblock Distance to schools and equal access in school choice systems.
\newblock Technical report, Boston College, 2022.
\newblock \url{https://www.bc.edu/content/dam/bc1/schools/mcas/economics/pdf/working-papers/1000/wp1046.pdf}.

\bibitem{lecher2021nycschoolbias}
Colin Lecher and Maddy Varner.
\newblock {NYC's School Algorithms Cement Segregation. This Data Shows How}.
\newblock {\em The Markup}, May 2021.
\newblock \url{https://themarkup.org/machine-learning/2021/05/26/nycs-school-algorithms-cement-segregation-this-data-shows-how}.

\bibitem{liu2020matchinbandit}
Lydia~T. Liu, Horia Mania, and Michael Jordan.
\newblock Competing bandits in matching markets.
\newblock In Silvia Chiappa and Roberto Calandra, editors, {\em Proceedings of the Twenty Third International Conference on Artificial Intelligence and Statistics}, volume 108 of {\em Proceedings of Machine Learning Research}, pages 1618--1628. PMLR, 26--28 Aug 2020.

\bibitem{liu2021matching}
Lydia~T. Liu, Feng Ruan, Horia Mania, and Michael~I. Jordan.
\newblock Bandit learning in decentralized matching markets.
\newblock {\em J. Mach. Learn. Res.}, 22(1), jan 2021.

\bibitem{liu2014stable}
Qingmin Liu, George~J Mailath, Andrew Postlewaite, and Larry Samuelson.
\newblock Stable matching with incomplete information.
\newblock {\em Econometrica}, 82(2):541--587, 2014.

\bibitem{mind2022IITallotment}
Livemint.
\newblock {IIT seat allotment: IIT Bombay still the top choice for JEE rank holders}, September 2022.
\newblock \url{https://www.livemint.com/news/india/iit-seat-allotment-iit-bombay-still-the-top-choice-for-jee-rank-holders-11663932899833.html}.

\bibitem{lyness2006fit}
Karen~S Lyness and Madeline~E Heilman.
\newblock When fit is fundamental: performance evaluations and promotions of upper-level female and male managers.
\newblock {\em Journal of Applied Psychology}, 91(4):777, 2006.

\bibitem{Mallows}
Colin~L Mallows.
\newblock Non-null ranking models. i.
\newblock {\em Biometrika}, 44(1/2):114--130, 1957.

\bibitem{manlove2013algorithmics}
D.~Manlove.
\newblock {\em Algorithmics Of Matching Under Preferences}.
\newblock Series On Theoretical Computer Science. World Scientific Publishing Company, 2013.

\bibitem{manning2010introduction}
Christopher Manning, Prabhakar Raghavan, and Hinrich Sch{\"u}tze.
\newblock {Introduction to Information Retrieval}.
\newblock {\em Natural Language Engineering}, 16(1):100--103, 2010.

\bibitem{mehrotra2022intersectional}
Anay Mehrotra, Bary S.~R. Pradelski, and Nisheeth~K. Vishnoi.
\newblock Selection in the presence of implicit bias: The advantage of intersectional constraints.
\newblock In {\em Proceedings of the 2022 ACM Conference on Fairness, Accountability, and Transparency}, FAccT '22, page 599–609, New York, NY, USA, 2022. Association for Computing Machinery.

\bibitem{mehrotra2023submodular}
Anay Mehrotra and Nisheeth~K. Vishnoi.
\newblock Maximizing submodular functions for recommendation in the presence of biases.
\newblock In {\em Proceedings of the ACM Web Conference 2023}, WWW '23, page 3625–3636, New York, NY, USA, 2023. Association for Computing Machinery.

\bibitem{MehrotraV23}
Anay Mehrotra and Nisheeth~K. Vishnoi.
\newblock Maximizing submodular functions for recommendation in the presence of biases.
\newblock In Ying Ding, Jie Tang, Juan~F. Sequeda, Lora Aroyo, Carlos Castillo, and Geert{-}Jan Houben, editors, {\em Proceedings of the {ACM} Web Conference 2023, {WWW} 2023, Austin, TX, USA, 30 April 2023 - 4 May 2023}, pages 3625--3636. {ACM}, 2023.

\bibitem{yifei2022matching}
Yifei Min, Tianhao Wang, Ruitu Xu, Zhaoran Wang, Michael Jordan, and Zhuoran Yang.
\newblock Learn to match with no regret: Reinforcement learning in markov matching markets.
\newblock In S.~Koyejo, S.~Mohamed, A.~Agarwal, D.~Belgrave, K.~Cho, and A.~Oh, editors, {\em Advances in Neural Information Processing Systems}, volume~35, pages 19956--19970. Curran Associates, Inc., 2022.

\bibitem{corinne2012science}
Corinne~A. Moss-Racusin, John~F. Dovidio, Victoria~L. Brescoll, Mark~J. Graham, and Jo~Handelsman.
\newblock Science faculty’s subtle gender biases favor male students.
\newblock {\em Proceedings of the National Academy of Sciences}, 109(41):16474--16479, 2012.

\bibitem{muthukrishnan2009ad}
S~Muthukrishnan.
\newblock Ad exchanges: Research issues.
\newblock In {\em International Workshop on Internet and Network Economics}, pages 1--12. Springer, 2009.

\bibitem{JEE_Main_FAQ}
{National Testing Agency}.
\newblock {Frequently Asked Questions - Joint Entrance Examination (Main)}, 2024.

\bibitem{passariello-implicit-rooney-wsj}
Christina Passariello.
\newblock {Tech Firms Borrow Football Play to Increase Hiring of Women}, September 2016.

\bibitem{Rastegari2013MatchingPartialInformation}
Baharak Rastegari, Anne Condon, Nicole Immorlica, and Kevin Leyton-Brown.
\newblock Two-sided matching with partial information.
\newblock In {\em Proceedings of the Fourteenth ACM Conference on Electronic Commerce}, EC '13, page 733–750, New York, NY, USA, 2013. Association for Computing Machinery.

\bibitem{regner2019committees}
Isabelle R{\'e}gner, Catherine Thinus-Blanc, Agn{\`e}s Netter, Toni Schmader, and Pascal Huguet.
\newblock Committees with implicit biases promote fewer women when they do not believe gender bias exists.
\newblock {\em Nature human behaviour}, 3(11):1171--1179, 2019.

\bibitem{BarbaraAscription2000}
Barbara~F. Reskin and Debra~Branch McBrier.
\newblock {Why Not Ascription? Organizations' Employment of Male and Female Managers}.
\newblock {\em American Sociological Review}, 65(2):210--233, 2000.

\bibitem{rooth2010automatic}
Dan-Olof Rooth.
\newblock Automatic associations and discrimination in hiring: Real world evidence.
\newblock {\em Labour Economics}, 17(3):523--534, 2010.

\bibitem{roth1985common}
Alvin~E. Roth.
\newblock Common and conflicting interests in two-sided matching markets.
\newblock {\em European Economic Review}, 27(1):75--96, 1985.
\newblock Special Issue on Market Competition, Conflict and Collusion.

\bibitem{roth1989incompleteinfo}
Alvin~E Roth.
\newblock Two-sided matching with incomplete information about others' preferences.
\newblock {\em Games and Economic Behavior}, 1(2):191--209, 1989.

\bibitem{roth1999redesign}
Alvin~E. Roth and Elliott Peranson.
\newblock {The Redesign of the Matching Market for American Physicians: Some Engineering Aspects of Economic Design}.
\newblock {\em American Economic Review}, 89(4):748--780, September 1999.

\bibitem{roth1992Handbook}
Alvin~E. Roth and Marilda Sotomayor.
\newblock Chapter 16: Two-sided matching.
\newblock In {\em Handbook of Game Theory with Economic Applications}, volume~1, pages 485--541. Elsevier, 1992.

\bibitem{Rothblum1992CharacterizationOS}
Uriel~G. Rothblum.
\newblock Characterization of stable matchings as extreme points of a polytope.
\newblock {\em Mathematical Programming}, 54:57--67, 1992.

\bibitem{salem2020closing}
Jad Salem and Swati Gupta.
\newblock Closing the gap: Mitigating bias in online r{\'e}sum{\'e}-filtering.
\newblock In {\em WINE}, page 471, 2020.

\bibitem{fairExposureAshudeep}
Ashudeep Singh and Thorsten Joachims.
\newblock {Fairness of Exposure in Rankings}.
\newblock In {\em {KDD}}, pages 2219--2228. {ACM}, 2018.

\bibitem{Sowell_2008}
Thomas Sowell.
\newblock {\em Affirmative Action Around the World: An Empirical Study}.
\newblock Yale University Press, 2008.

\bibitem{Svensson99}
Lars-Gunnar Svensson.
\newblock Strategy-proof allocation of indivisible goods.
\newblock {\em Social Choice and Welfare}, 16(4):557--567, 1999.

\bibitem{Taskrabbit-matching14}
TaskRabbit.
\newblock {Unveiling the New TaskRabbit}, 2014.
\newblock \url{https://www.taskrabbit.com/blog/unveiling-the-new-taskrabbit}.

\bibitem{VANDEVATE1989147}
John~H. {Vande Vate}.
\newblock Linear programming brings marital bliss.
\newblock {\em Operations Research Letters}, 8(3):147--153, 1989.

\bibitem{verma2022JOSAA}
Tushar Verma.
\newblock {JoSAA Counselling 2022: IIT Bombay, IIT Delhi and IIT Kanpur are most-preferred choices of top 1000 rankers in the past}.
\newblock {\em The Indian Express}, September 2022.
\newblock \url{https://indianexpress.com/article/education/jee-advanced-results-2022-date-iit-bombay-most-preferred-institute-among-top-1000-rankers-followed-by-iit-delhi-iit-kanpur-8140297/}.

\bibitem{wenneras2001nepotism}
Christine Wenner{\aa}s and Agnes Wold.
\newblock {Nepotism and Sexism in Peer-Review}.
\newblock {\em Nature}, 387(6631):341--343, May 1997.

\bibitem{grover2017brookingsSchoolBias}
Grover J.~Russ Whitehurst.
\newblock New evidence on school choice and racially segregated schools.
\newblock {\em Brookings}, December 2017.
\newblock \url{https://www.brookings.edu/research/new-evidence-on-school-choice-and-racially-segregated-schools/}.

\bibitem{wiki:Civil_Services_Exam}
{Wikipedia contributors}.
\newblock Civil services examination --- wikipedia{,} the free encyclopedia, 2024.

\bibitem{wiki:gaokao}
{Wikipedia contributors}.
\newblock Gaokao --- wikipedia{,} the free encyclopedia, 2024.

\bibitem{girl_drop_out}
Rachel Williams.
\newblock {Why girls in India are still missing out on the education they need}, March 2013.
\newblock \url{https://www.theguardian.com/education/2013/mar/11/indian-children-education-opportunities}.

\bibitem{zestcott2016examining}
Colin~A Zestcott, Irene~V Blair, and Jeff Stone.
\newblock Examining the presence, consequences, and reduction of implicit bias in health care: a narrative review.
\newblock {\em Group Processes \& Intergroup Relations}, 19(4):528--542, 2016.

\end{thebibliography}

\appendix

\newpage

\section{Additional Discussion on Theorem \ref{thm:institutionWiseConstraints}}
\label{sec:discussionAinst}
In this section, we give a detailed discussion on issues with potential approaches for deriving an algorithm with the guarantees mentioned in~\Cref{thm:institutionWiseConstraints}. We also discuss properties of the algorithm~$\Ainst.$

\paragraph{Properties of the algorithm $\Ainst$.}
We now state some additional properties of the algorithm $\Ainst$:
\begin{itemize}
    \item[(a)] {\bf Group-wise stability}: The algorithm $\Ainst$ has the property that the assignment it produces is stable when restricted to candidates from the same group. Indeed, by definition, $\Ainst$ executes $\Ast$ on each group (with certain institutional capacities). Since $\Ast$ outputs a stable assignment, it follows that $\Ainst$ outputs a group-wise stable assignment. 

    \item[(b)] {\bf Strategy proof property}: {Assuming candidates' socially salient groups are pre-determined, \ouralgo{} is strategy-proof for the candidates with respect to their respective groups.} 
Moreover, candidates have no incentive to misreport their preferences. Indeed, if a candidate that prefers institution $i$ as a top choice, reports another institution $i'$ as its top choice, then the likelihood that \ouralgo{} assigns it $i$ only decreases. 
    
    \item[(c)] {\bf Pareto efficiency}: The algorithm $\Ainst$ is Pareto-efficient. Consider two candidates $i$ and $i'$ with identical preference lists and $\hu_i > \hu_{i'}$. Then the institution assigned to $i$ is preferable (to $i$ and $i'$) to the institution assigned to $i'$. 
\end{itemize}

\paragraph{Algorithms that learn the bias parameter $\beta$.} One potential approach for proving~\Cref{thm:institutionWiseConstraints} would be learn the parameter $\beta$ from the given data. Once $\beta$ is known, the utilities of the disadvantaged group can be corrected.  In some contexts, if the scenario consisted of repeated interactions, and we had access to historical data, one may learn the most likely value of $\beta$. {Exploring various interventions in situations involving repeated centralized matching is a crucial direction for future research.}
However, we follow prior works (see e.g. \cite{KleinbergR18, celis2020interventions, EmelianovGGL20, mehrotra2022intersectional}) that focus on one round-settings. In such scenarios,  obtaining an estimate of $\beta$ can be difficult. 
{We also note that algorithms dependent on estimates of $\beta$ may degrade due to inaccuracies in these estimates and are not resilient to explicit biases, such as those from human evaluators. For instance, consider a setting where an interviewer estimates the utilities of the candidates. One potential approach could be to re-scale the estimated utilities reported by the interviewer. 
 However, if the interviewer is explicitly biased, they can assign disproportionately low scores to a particular group of candidates; this would make any attempt to learn $\beta$ and re-scale utilities insufficient. In contrast, the algorithms that are typically used in the real-world settings do not rely on estimates of $\beta$ and hence, are more resilient to explicit biases.}

\paragraph{Algorithms with group-wise representational constraints.}
We now consider algorithms that add 
  proportional representation constraints to $\Ast$.  More concretely, we select the top $|G_j|/n$ fraction of candidates from each group $G_j$. This ensures that the representational fairness of such an algorithm is 1. To maintain group-wise stability property, we now run $\Ast$ on these candidates. See~\Cref{alg:groupwiseconstraints} for a formal description of this algorithm $\Agroup$. While this algorithm can be shown to enhance representational fairness while obtaining near-optimal utility, preference-based fairness may be low. Indeed, consider an example consisting of two institutions, each with capacity $K/2$, and $|G_1| = |G_2| = K/2.$ Assume that each candidate prefers the first institution to the second one. Since $n=K$, the algorithm selects all the candidates. Assuming $\beta$ is close to 0, the observed utilities of candidates from $G_2$ would be higher than those from $G_1$. Therefore, the algorithm $\Agroup$ would assign the first institutions to candidates from the first group only. Thus, the preference-based fairness metric of $\Agroup$ will be close to 0. We validate this fact empirically as well (see~\Cref{fig:simulation:synthetic_data}), where $\Ainst$ performs much better than $\Agroup$ in the preference-based fairness metric. 

\vspace{-2mm}
\paragraph{Algorithms that explicitly enforce preference-based fairness.} We consider algorithms that explicitly enforce preference-based fairness and maximize utility subject to such a constraint. More concretely, consider the following algorithm (assume for the sake of simplicity that $|G_1|=|G_2|$): each institution $\ell$ considers the candidates that prefer $\ell$ as their first choice. We assign the highest utility candidates from this subset to institution $\ell$ while ensuring that no more than half the capacity at this institution is allocated to a single group (in case some slots are vacant at the end, we can fill them with the highest utility unassigned candidates). Such an algorithm $\cA$ would indeed achieve high preference-based fairness, but may not obtain near-optimal utility even when there is no bias. The following example illustrates the issue: suppose there are two institutions  $A$ and $B$, each with capacity $K/2$, and $K$ candidates in each of the two groups. The distribution over preferences is such that $0.75K$ candidates from each of the groups prefer institution $A$ to $B$. Each candidate's utility is drawn uniformly from $[0,1]$. 

\smallskip
We now show that the above-mentioned algorithm has sub-optimal utility in this example. We recall the following fact (see~\Cref{cl:orderstatisticbound}): suppose we choose $m$ values independently and uniformly at random from the interval $[0,1]$, and let $Y_{k,m}$ denote the sum of the highest $k$ values among these $m$ values. Then $\Ex[Y_{k,m}] = k - \sfrac{k(k+1)}{2(m+1)}$. 

\smallskip
Now, the expectation of optimal utility in the above-mentioned example, i.e., the expected utility of the top $K$ candidates is the same as $\Ex[Y_{K,2K}] = 1-\frac{K}{2K+1} \approx 0.75K$. Now, there are $0.75K$ candidates in each group preferring institution $A$. The algorithm selects the top $0.25K$ such candidates from each group. Therefore, the expected utility of the candidates assigned to institution $A$ is:
$2\Ex[Y_{0.25K,0.75K}] \approx 0.5K(1-1/6) \approx 0.42K.$
Now, there are $0.25K$ candidates from each group preferring institution $B$, and the algorithm selects all of them. Therefore, the expected utility of the candidates assigned to institution $B$ is 
$2 \Ex[Y_{0.25K,0.25K}] = 0.25K.$
Thus, the total expected utility of the candidates selected by this algorithm = $0.125K + 0.42K = 0.67K$, which is below the optimal value of $0.75K$. Intuitively, this gap occurs because the set of candidates with optimal utility is expected to have $0.75K$ candidates that prefer $A$; whereas the algorithm selects only $0.5K$ such candidates.  

\smallskip
This example can be extended to create an instance such that the expected optimal utility is almost twice the expected utility of candidates selected by the algorithm. Consider a setting with $n$ candidates, $p$ institutions, and total capacity $K$ with each institution having $K/p$ capacity. Assume that $n \gg K$ and the utility of each agent is drawn uniformly from $[0,1]$. Now, suppose the distribution over preference lists is such that  $n-K$ candidates prefer the first institution as their top choice, and for every other institution $\ell$, $K/p$ candidates prefer $\ell$ as their top choice. Now, the expected optimal utility of top $K$ candidates is about $K(1-\sfrac{K}{2n}) \approx K$ since $n \gg K$. For an institution $\ell$ other than the first institution, let $J_\ell$ denote the $K/p$ candidates that prefer $\ell$ as their first choice. The algorithm would assign $J_\ell$ to institution $\ell$. The total expected utility of candidates in $J_\ell$ is $\frac{K}{2p}.$ The total utility of the candidates assigned to the first institution can be at most $\frac{K}{p}$. Summing over all institutions, we see that the total expected utility of selected candidates is at most  $\frac{Kp}{2p} + K \approx \frac{K}{2},$ which is roughly half of the expected optimal utility.

\section{Additional Discussion and Figures for the Real-World Simulation With JEE 2009 Scores}\label{sec:additional_JEE}

        \subsection{Additional figures and implementation details}

        \paragraph{Distribution of scores.}
                The 23 IITs are among the most prestigious engineering schools in India.
                Each year, admissions into the IITs are decided through a centralized matching system, where applicants are ranked based on their scores in the Joint Entrance Exam (JEE). Their score is their estimated utility, which is then matched with preference across different major-institute pairs \cite{baswana2019centralized}. 
                In 2009, the JEE test scores were released in response to a Right to Information application \cite{rti_against_jee}. 
                The preferences of candidates were not released.
                The data contains the test scores of 384,977 students who took the IIT-JEE 2009.
                In addition to scores, the dataset has each student's self-reported binary gender and birth category.
                Birth category is officially assigned as a socioeconomic label, and the most privileged group is the general category \cite{baswana2015joint,Sowell_2008}.
                A previous analysis of the dataset shows the discrepancies between student birth category or gender as well as score distributions \cite{CelisKMV23}.
            Under the $\beta$-bias model, this would equate to a bias parameter of $\beta = 0.69$ when the protected attribute is gender and $\beta = 0.52$ when the protected attribute is the birth category.

        \begin{table}[ht!]
\centering
\begin{tabular}{|l|l|l|}
\hline
\textbf{Major-Institution Pair} & \textbf{Opening Rank} & \textbf{Closing Rank} \\ \hline
CSE (4Yr), IIT Bombay & 3 & \textbf{86} \\ \hline
EE (4Yr), IIT Bombay & 8 & \textbf{109} \\ \hline
CSE (4Yr), IIT Delhi & 1 & \textbf{154} \\ \hline
CSE (4Yr), IIT Kanpur & 2 & \textbf{181} \\ \hline
CSE (4Yr), IIT Madras & 5 & \textbf{215} \\ \hline
EE (4Yr), IIT Delhi & 108 & \textbf{241} \\ \hline
EE w/ Micro (5Yr), IIT Bombay & 117 & \textbf{245} \\ \hline
CSE (5Yr), IIT Delhi & 187 & \textbf{278} \\ \hline
EE (4Yr), IIT Madras & 32 & \textbf{310} \\ \hline
EE w/ Info. \& Comm. Tech. (5Yr), IIT Delhi & 284 & \textbf{369} \\ \hline
EE w/ Comm. \& SP (5Yr), IIT Bombay & 266 & \textbf{379} \\ \hline
EE (4Yr), IIT Kanpur & 39 & \textbf{416} \\ \hline
CSE (5Yr), IIT Kanpur & 216 & \textbf{422} \\ \hline
ME (4Yr), IIT Bombay & 72 & \textbf{494} \\ \hline
CSE (5Yr), IIT Madras & 333 & \textbf{502} \\ \hline
CSE (4Yr), IIT Kharag. & 276 & \textbf{527} \\ \hline
EE (5Yr), IIT Kanpur & 423 & \textbf{608} \\ \hline
ME (4Yr), IIT Delhi & 237 & \textbf{634} \\ \hline
ME w/ CAD \& Auto. (5Yr), IIT Bombay & 419 & \textbf{637} \\ \hline
EE w/ Micro \& VLSI (5Yr), IIT Madras & 339 & \textbf{716} \\ \hline
ME w/ CIM (5Yr), IIT Bombay & 556 & \textbf{757} \\ \hline
EE w/ Power (4Yr), IIT Delhi & 477 & \textbf{758} \\ \hline
EEC (4Yr), IIT Kharag. & 133 & \textbf{762} \\ \hline
EE w/ Comm. Eng. (5Yr), IIT Madras & 458 & \textbf{764} \\ \hline
Math. \& Comp. (5Yr), IIT Delhi & 348 & \textbf{789} \\ \hline
ME (4Yr), IIT Kanpur & 497 & \textbf{806} \\ \hline
ME (4Yr), IIT Madras & 275 & \textbf{820} \\ \hline
CSE (5Yr), IIT Kharag. & 431 & \textbf{877} \\ \hline
EE (4Yr), IIT Kharag. & 596 & \textbf{920} \\ \hline
Chem. Eng. (4Yr), IIT Bombay & 244 & \textbf{928} \\ \hline
EE w/ Power Sys. \& Elec. (5Yr), IIT Madras & 773 & \textbf{937} \\ \hline
CSE (4Yr), IIT Roorkee & 471 & \textbf{984} \\ \hline
ME (5Yr), IIT Kanpur & 808 & \textbf{992} \\ \hline
\end{tabular}
\caption{Top 33 major-institution pairs sorted by closing ranks in the 2009 IIT-JEE.
                See \cref{sec:additional_JEE} for a discussion of the dataset.}
\label{fig:jee_ordering_top31}
\vspace{-3mm}
\end{table}

        \paragraph{Utilities.}
            The notion of latent utility is not available through our real-world data, so we consider JEE scores as estimated utility. This is because if we consider candidates of equal ``latent utility'', the ones from disadvantaged groups perform poorer on standardized tests, our measure of estimated utility \cite{elsesser2019lawsuitSATACT}.
            In India, this could be because fewer girls attend primary school \cite{alderman1998gender,censusinfo}, and many girls are forced to drop out of school to get married or help with housework \cite{girl_drop_out}. 
            Because of these societal differences, we expect a female student to perform worse on the JEE than a male student of comparable ``latent utility''. %
            Students in birth categories other than the general category face similar disadvantages \cite{kumar2021castenexus}.

        \paragraph{Preferences.}
            The dataset does not include individual preferences, so we aggregated information from the opening and closing ranks of major-institution pairs, $(M, I)$.
            We ordered pairs by their closing ranks $C_{(M, I)}$ in the general category and considered the range $[O_{(M, I)}, C_{(M, I)}]$ of the ranks of students admitted to $(M, I)$ (Table \ref{fig:jee_ordering_top31}). %
            We observed that if $C_{(M_1, I_1)}$ is substantially less than $C_{(M_2, I_2)}$, then most applicants will strictly prefer the earlier ranked program, which is consistent with those of journalists \cite{mind2022IITallotment,verma2022JOSAA}.
            We model preferences over programs using the Mallows distribution (specified by $d_{\rm KT}$) defined by a dispersion parameter $0 \leq \phi \leq 1$ and the central ranking $\rho$ of program-institute pairs by their closing ranks $C_{(M, I)}$. The capacities of each program were publicly available online.
            We considered the closing All India Rank for the general category as opposed to across all categories. This is because the JOSSA has already instituted quotas for students from more disadvantaged groups to be admitted into certain programs. As such, a difference in individual preferences in the underprivileged groups would affect the overall closing ranks significantly and the general ranks give a better picture.

        \paragraph{Algorithm implementation.}
        While implementing our algorithm, we implemented limits on the number of candidates that were considered for the sake of lowering runtime. When considering all 384,977 candidates, there was a fixed bound: the candidate in the disadvantaged group that had the lowest JEE score possible to still be matched under group- or institution-wise constraints but not under the unconstrained algorithm. Any candidate, regardless of group, with a lower JEE score would never be considered to be matched. For the sake of runtime, we found the lowest AIR applicable with both birth and gender as the protected attribute and stopped generating preferences after that rank (5000 and 7000, respectively).

        \subsection{Simulation that varies the dispersion parameter of the preference distribution using synthetic data} \label{sec:additional:empirical:dispersionpref}
    This simulation considers the effect of dispersion of preferences on the three algorithms we present by altering the dispersion parameter, $\phi$, of the Mallows distribution. A small $\phi$ indicates that preferences are more closely aligned to the central ranking, and $\phi \to 1$ indicates that preferences are drawn more randomly than based on the central ranking. Since \cref{thm:institutionWiseConstraints} does not depend on the shape of the preference distribution, we expect the preference-based fairness of $\Ainst$ to be near optimal regardless of $\phi$, but we would like to see the effect of $\phi$ on $\Agroup$ and $\Ast$.

    \paragraph{Setup.} We fix $n=1000$, $p=5$, and $k_i=100$ for $i \in [p]$, For $\cD \in \{\cD_{\rm Gauss}, \cD_{\rm Pareto}\}$, we set $\beta = \frac{3}{4}$ and vary $\phi \in [0, 1]$. We report the average and standard error of $\sP^{(1)}, \sP^{(3)}$, and $\sU$ over 50 iterations.

    \paragraph{Results and observations.} The results are reported in \cref{fig:simulation:corrpref}. Our main observations are that $\phi$ does not seem to have a large impact on preference-based fairness for $\Ainst$,  $\Agroup$ and $\Ast$ generally increase with $\phi$, and $\phi$ does not affect the utility ratio.
    As expected by \cref{thm:institutionWiseConstraints}, $\Ainst$ holds a high preference-based fairness ($\sP^{(1)} \geq 0.9, \sP^{(3)} \geq 0.95$). 
    As $\phi$ increases, preference-based fairness for $\Agroup$ and $\Ast$ generally increase.
    The exception is with $\Ast$ when utilities are sampled from $\cD_{\rm Pareto}$, where the preference-based fairness of the unconstrained algorithm does not change with $\phi$. 
    To closely model real-world scenarios, we choose for $\phi \leq 0.5$ because the central ranking is often important. 
    We also choose $\phi \geq 0.1$ to be able to differentiate between $\Agroup$ and $\Ast$. When utilities are drawn from $\cD_{\rm Gauss}$, the difference between $\sP^{(1)}(\Agroup)$ and $\sP^{(1)}(\Ast)$ is within 0.05 when $\phi \leq 0.10$. 
    Based on these constraints, we set $\phi = 0.25$ as the default in our simulations.
    The utility ratio does not change with $\phi$ because it does not depend on preference distribution. This is further discussed in \cref{sec:additional:empirical:additional_plot_synthetic}.
    Additionally, we observe that when varying $\phi$ between 0 and 1, the group-wise algorithm increases at a much earlier $\phi$ in synthetic data (\cref{fig:simulation:corrpref}) compared to real-world data (Table \ref{fig:jee_ordering_top31}). This is likely because the real-world dataset had 33 institutions compared to the 5 institutions used in synthetic simulations. With that many institutions, it would be much more difficult to increase the preference-based fairness, especially measured by the $\sP^{(1)}$ metric. 
\begin{figure}[h!]
            \centering
            \includegraphics[width=0.8\linewidth]{figures_FINAL/legend.pdf}
            \subfigure[{$\cD_{\rm Gauss}$ and $\sP^{(1)}$}]{
                \includegraphics[width=0.31\linewidth, trim={0cm 0cm 0cm 0cm},clip]{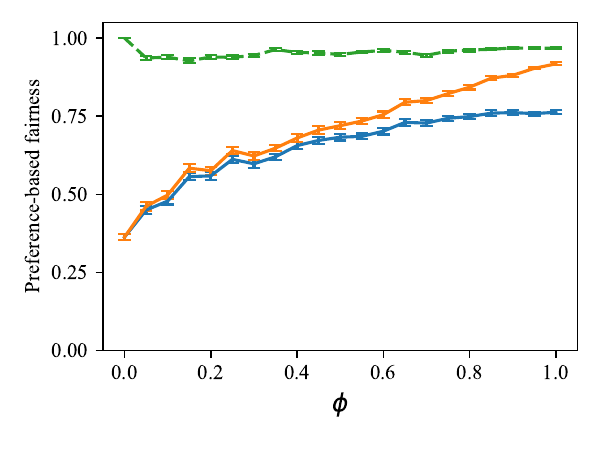}
            }
            \subfigure[{$\cD_{\rm Gauss}$ and $\sP^{(3)}$}]{
                \includegraphics[width=0.31\linewidth, trim={0cm 0cm 0cm 0cm},clip]{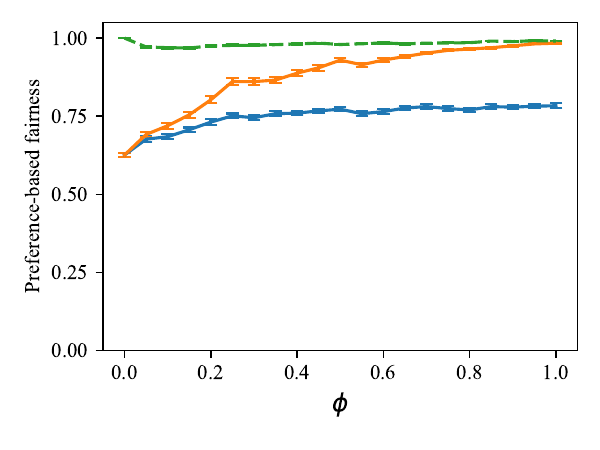}
            }
            \subfigure[{$\cD_{\rm Gauss}$ and $\sU$}]{
                \includegraphics[width=0.31\linewidth, trim={0cm 0cm 0cm 0cm},clip]{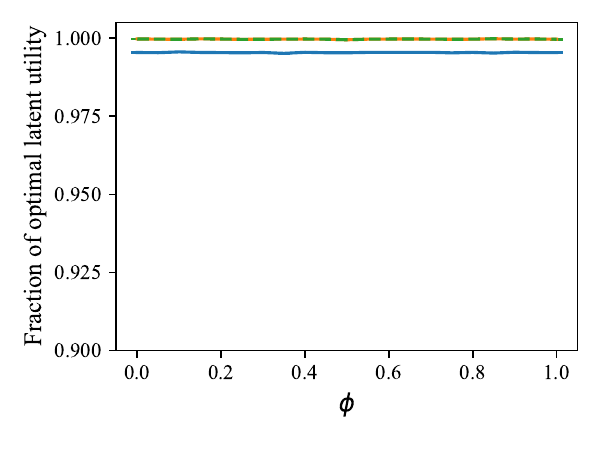}
            }
            \par 
            \subfigure[{$\cD_{\rm Pareto}$ and $\sP^{(1)}$}]{
                \includegraphics[width=0.31\linewidth, trim={0cm 0cm 0cm 0cm},clip]{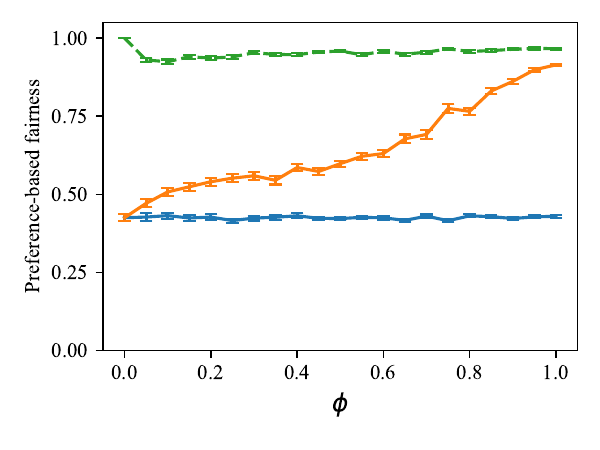}
            }
            \subfigure[{$\cD_{\rm Pareto}$ and $\sP^{(3)}$}]{
                \includegraphics[width=0.31\linewidth, trim={0cm 0cm 0cm 0cm},clip]{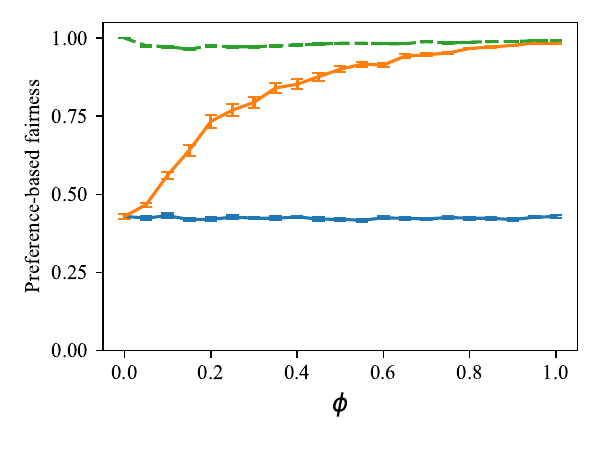}
            }
            \subfigure[{$\cD_{\rm Pareto}$ and $\sU$}]{
                \includegraphics[width=0.31\linewidth, trim={0cm 0cm 0cm 0cm},clip]{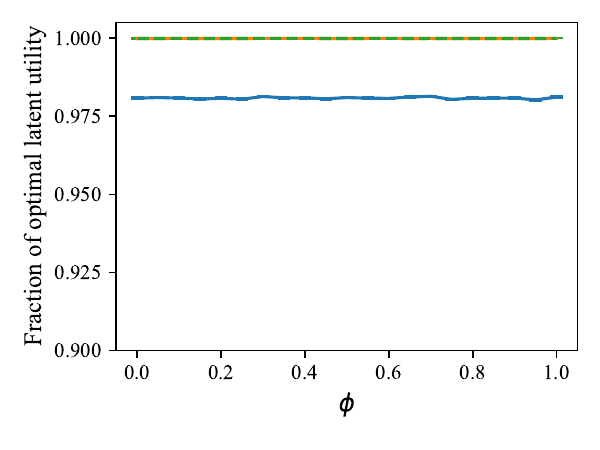}
            }
            \caption{
                $\sP^{(1)}$, $\sP^{(3)}$, and $\sU$ measured for synthetic data when the dispersion parameter for the Mallows distribution is varied over $\phi \in [0, 1]$. In (a), we see the preference-based fairness measured by $\sP^{(1)}$ when utilities are generated from $\cD_{\rm Gauss}$. In (b), we measure $\sP^{(3)}$ when utilities are generated from $\cD_{\rm Gauss}$. (c) shows the utility ratio when utilities are generated from $\cD_{\rm Gauss}$. (d), (e), and (f) show the results of (a), (b), and (c), respectively, when utilities are generated from $\cD_{\rm Pareto}$. 
                Our main observation is that $\phi$ does not have a large impact on preference-based fairness for $\Ainst$, while $\Agroup$ and $\Ast$ generally increase with $\phi$. See \cref{sec:additional:empirical:dispersionpref} for details and discussion.
                The $x$-axis denotes $\alpha$, the $y$-axis denotes $\sP^{(1)}$, $\sP^{(3)}$, or $\sU$, and the error bars denote the standard error of the mean over 50 iterations.
            }
            \label{fig:simulation:corrpref}
            \vspace{-1.5mm}
        \end{figure}

\section{Additional Discussion and Plots for Simulation with Synthetic Data from Section~\ref{sec:empirical}}\label{sec:additional:empirical:additional_plot_synthetic}
        In this section, we report the preference-based fairness, $\sP^{(1)}$, for additional values of $\beta$ in the simulation from \cref{sec:empirical} (\cref{fig:simulation:synthetic_data:other_beta}). We also discuss the utility ratio in the simulation.
        \subsection{Discussion of utility ratio}
        We do not include results about utility ratio for synthetic data because they do not vary with respect to preferences for any of our algorithms. Specifically, $\Ast$ always selects the $K$ candidates with the highest observed utilities, where $K$ is the total capacity of all institutions. Likewise, under the assumption that $|G_1| = |G_2|$, $\Agroup$ and $\Ainst$ both assign the top $K/2$ candidates in terms of utility from each group to institutions. This guarantees that $\sU(\Agroup) = \sU(\Ainst) \approx 1$ for all $\gamma$. Intuitively, $\sU(\Ast)$ increases with respect to $\beta$: this is detailed in \cref{thm:specialcase} and visualized in \cref{fig:specialcase_nonoise}(a).

        \subsection{Additional plots}
        \paragraph{Results and observations.} 
        The results of $\sP^{(1)}$ for additional values of $\beta$ support the discussion made in \cref{sec:empirical}. One additional observation is that as $\beta$ increases, $\sP^{(1)}(\Ainst)$ is not affected but $\sP^{(1)}(\Agroup)$ and $\sP^{(1)}(\Ast)$ increase at comparable levels of $\beta$. This is because the estimated utilities of the disadvantaged candidates increase as $\beta$ increases.

        \begin{figure}[h]
            \centering
            \vspace{0.2in}
            \subfigure[{$\cD_{\rm Gauss}$ and $\beta=\frac{2}{4}$}]{
                \includegraphics[width=0.31\linewidth, trim={0cm 0cm 0cm 0cm},clip]{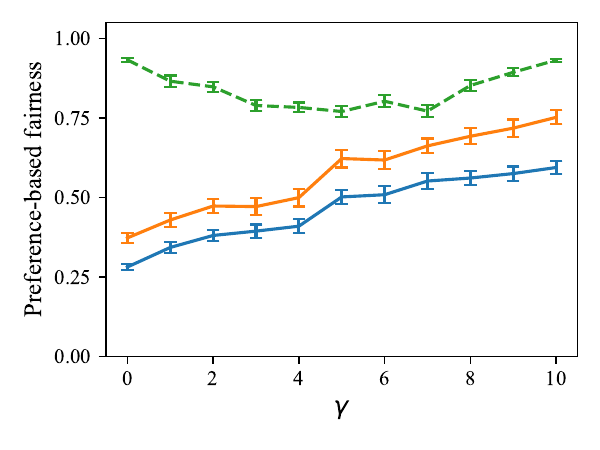}
            }
              \vspace{0.2in}
            \subfigure[{$\cD_{\rm Gauss}$ and $\beta=\frac{3}{4}$}]{
                \includegraphics[width=0.31\linewidth, trim={0cm 0cm 0cm 0cm},clip]{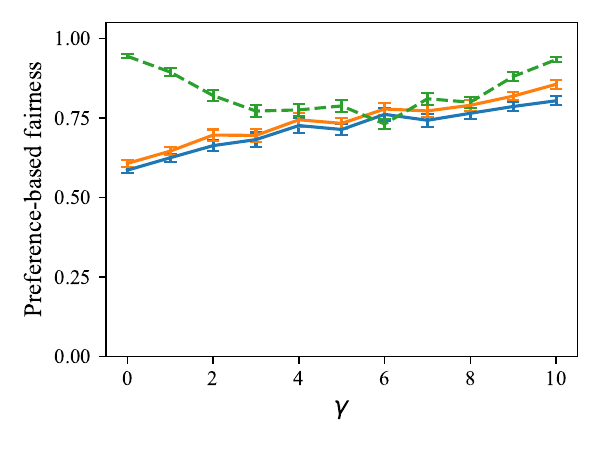}
            }
            \par 
            \subfigure[{$\cD_{\rm Pareto}$ and $\beta=\frac{2}{4}$}]{
                \includegraphics[width=0.31\linewidth, trim={0cm 0cm 0cm 0cm},clip]{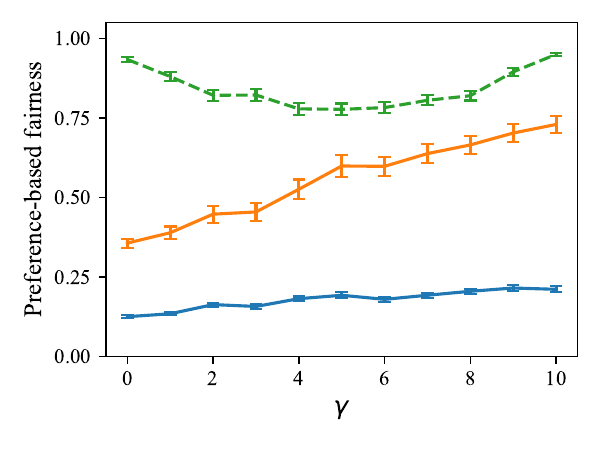}
            }
            \subfigure[{$\cD_{\rm Pareto}$ and $\beta=\frac{3}{4}$}]{
                \includegraphics[width=0.31\linewidth, trim={0cm 0cm 0cm 0cm},clip]{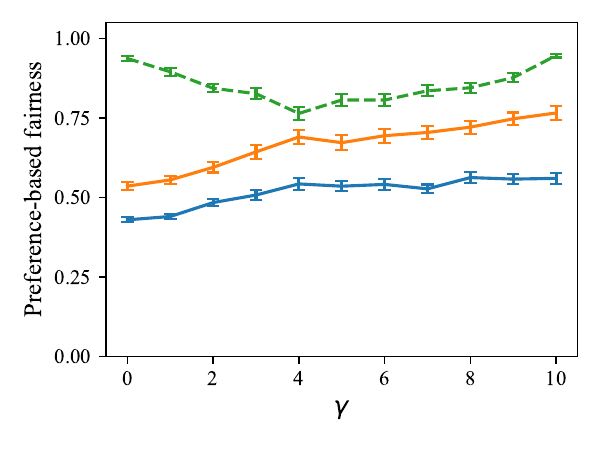}
            }
            \caption{
            Preference-based fairness as measured by the top-1 metric, $\sP^{(1)}$, with synthetic data where non-i.i.d. preferences are generated from Mallows distributions. The $x$-axis denotes $\gamma$, the Kendall-Tau distance between the central rankings, and the error bars denote the standard error of the mean over 50 iterations. (a) shows $\sP^{(1)}$ when utilities are generated from $\cD_{\rm Gauss}$ with $\beta = \frac{2}{4}$. (b) modifies $\beta$ to $\beta = \frac{3}{4}$. (c) and (d) are equivalent to (a) and (b), respectively, when utilities are drawn from $\cD_{\rm Pareto}$. See \cref{sec:simulation:synthetic_data} and \cref{sec:additional:empirical:additional_plot_synthetic} for details and discussion. We observe that institution-wise constraints achieve higher preference-based fairness than group-wise and unconstrained settings.}
            \label{fig:simulation:synthetic_data:other_beta}
        \end{figure}

\section{Empirically Studying Institution-Wise Constraints Under Relaxed Bounds} \label{sec:additional:empirical:relaxedbounds}

\begin{figure}[ht!]
            \centering
            \includegraphics[width=0.8\linewidth]{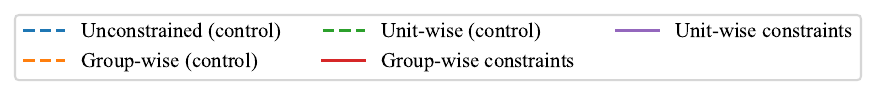}
            \par 
            \subfigure[{$\cD_{\rm Gauss}$ and $\sP^{(1)}$}]{
                \includegraphics[width=0.31\linewidth, trim={0cm 0cm 0cm 0cm},clip]{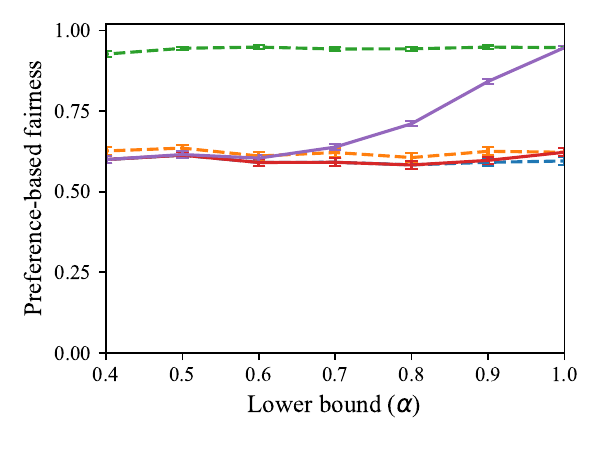}
            }
            \subfigure[{$\cD_{\rm Gauss}$ and $\sP^{(3)}$}]{
                \includegraphics[width=0.31\linewidth, trim={0cm 0cm 0cm 0cm},clip]{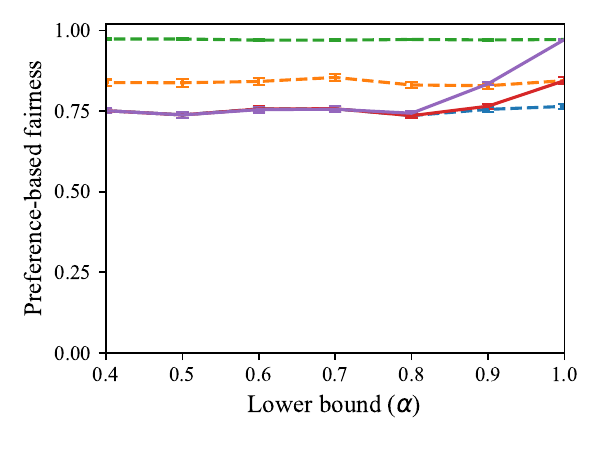}
            }
            \subfigure[{$\cD_{\rm Gauss}$ and $\sU$}]{
                \includegraphics[width=0.31\linewidth, trim={0cm 0cm 0cm 0cm},clip]{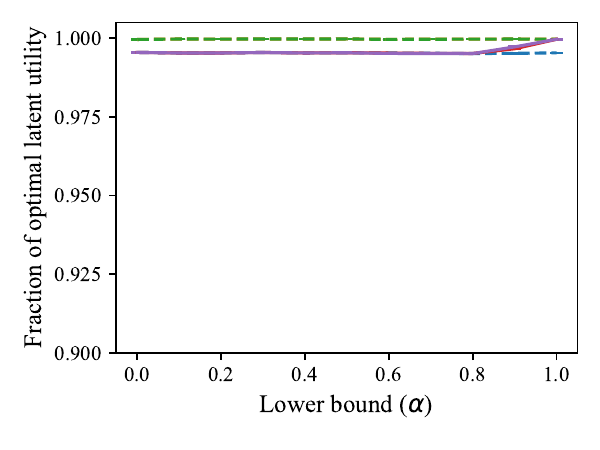}
            }
            \par
             \subfigure[{$\cD_{\rm Pareto}$ and $\sP^{(1)}$}]{
                \includegraphics[width=0.31\linewidth, trim={0cm 0cm 0cm 0cm},clip]{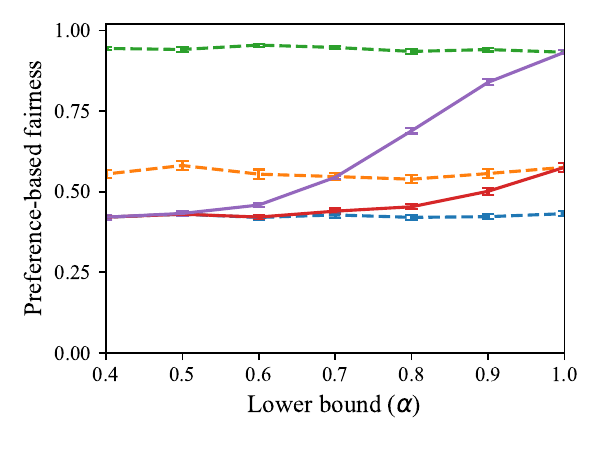}
            }
            \subfigure[{$\cD_{\rm Pareto}$ and $\sP^{(3)}$}]{
                \includegraphics[width=0.31\linewidth, trim={0cm 0cm 0cm 0cm},clip]{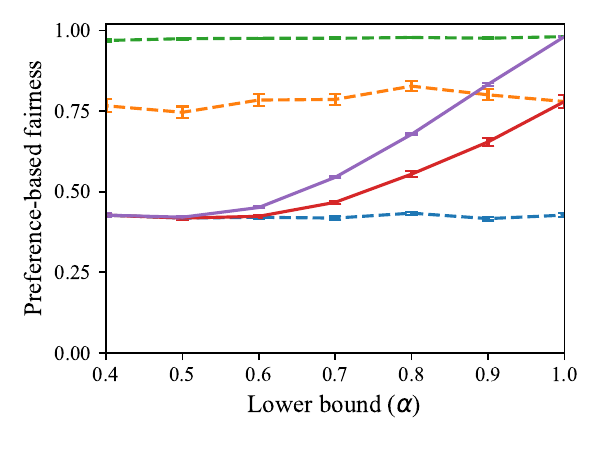}
            }
            \subfigure[{$\cD_{\rm Pareto}$ and $\sU$}]{
                \includegraphics[width=0.31\linewidth, trim={0cm 0cm 0cm 0cm},clip]{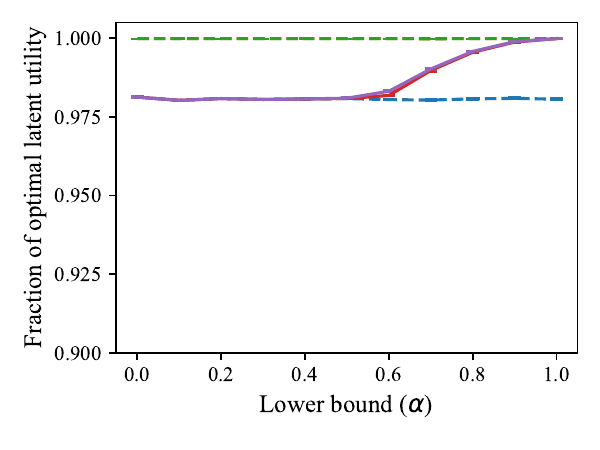}
            }
            \caption{
                Preference-based fairness is measured by $\sP^{(1)}$ and $\sP^{(3)}$, and the utility ratio measured by $\sU$ for relaxed institution-wise and group-wise bounds parameterized by strictness ratio $0 \leq \alpha \leq 1$. We compare results with relaxed bounds (solid lines) and results with strict bounds (dotted lines). In (a), we measure $\sP^{(1)}$ when utilities are drawn from $\cD_{\rm Gauss}$ In (b), we measure $\sP^{(3)}$ with utility distribution $\cD_{\rm Gauss}$. In (c), we measure the utility ratio for $0 \leq \alpha \leq 1$ with utilities from $\cD_{\rm Gauss}$. (d), (e), and (f) are equivalent to (a), (b), and (c), respectively, except that utilities are drawn from $\cD_{\rm Pareto}$ instead of $\cD_{\rm Gauss}$. Our main observation is that institution-wise constraints and group-wise constraints only begin to increase preference-based fairness compared to no constraints when the lower bound is high. See \cref{sec:additional:empirical:relaxedbounds} for details and discussion. The $x$-axis denotes $\alpha$, and the error bars denote the standard error of the mean over 50 iterations.
            }
            \label{fig:relaxedbounds}
        \end{figure}
    
    This section considers the effectiveness of $\Ainst$ under relaxed bounds. Specifically, instead of instituting a strict proportional quota for each institution, we consider the effect of a relaxed bound parameterized by $0 \leq \alpha \leq 1$. If $\alpha = 1$, the bound is strict and if $\alpha = 0$, there is no bound. In general, the quotas required by $\Ainst$ are $\alpha$ times the available capacity at each institution, with the remaining capacity available for candidates from either group. This is similar when considering relaxed group-wise constraints, except on an institution-wise basis rather than on a representational basis. Pseudocode for the algorithms are provided in \cref{sec:additional:pseudocode}. The purpose of this simulation is to analyze the practicality of more relaxed real-world constraints beyond the theoretical work done in \cref{thm:institutionWiseConstraints}. To do this, we measure the preference-based fairness and utility ratio of relaxed constraints compared to stringent ones (as required by $\Ainst$).
    
    \paragraph{Setup.} We fix $n=1000$, $p=5$, and $k_i=100$ for $i \in [p]$. For $\cD \in \{\cD_{\rm Gauss}, \cD_{\rm Pareto}\}$, we set $\beta = \frac{3}{4}$ because $\beta$ is often high in real-world situations. We generate preferences from a single central ranking using a Mallows distribution with $\phi = 0.25$ ($\gamma = 0$). We vary the relaxed bound parameter, $\alpha$, from $0$ (no bounds) to $1$ (strict bounds) with two groups of the same size, i.e., $|G_1|=|G_2|$. We report the average and standard error of $\sP^{(1)}, \sP^{(3)}$, and $\sU$ over 50 iterations.

    \paragraph{Results and observations.} The results are presented in \cref{fig:relaxedbounds}. 
    We know that the bounds instituted by $\Ainst$ and $\Agroup$ will not affect the preference-based fairness until the bound exceeds the proportion that $\Ast$ achieves. Based on the empirical results for $\beta = \frac{3}{4}$, we started the bounds at $\alpha = 0.4$ to better illustrate the changes.
    Our main observation is that preference-based fairness begins to increase for $\Ainst$ and $\Agroup$ as $\alpha > 0.5$, but the utility ratio does not increase until between $\alpha = 0.6$ ($\cD_{\rm Pareto}$) or $\alpha = 0.8$ ($\cD_{\rm Gauss}$).
    Under a Pareto distribution, $\sP^{(1)}$ for $\Ainst$ and $\Agroup$ begin to increase at $\alpha = 0.5$ from 0.45 to 0.90 ($\Ainst$) and 0.55 ($\Agroup$) as $\alpha \to 1$.
    Under a truncated Gaussian distribution, $\sP^{(1)}$ for $\Ainst$ and $\Agroup$ begin to increase at $\alpha = 0.6$ from 0.55 to 0.90 ($\Ainst$) and 0.60 ($\Agroup$) as the bound becomes 50\%.
    We see that institution-wise constraints are only effective when strictly enforced (for $\beta = \frac{3}{4}$, $\alpha > 0.5$). For situations in which $\beta$ is higher, the bound would have to be even higher. 
    We also see that preference-based fairness for the relaxed version of $\Ainst$ begins to increase at a lower $\alpha$ than the corresponding utility ratio does. Lower bounds may not be effective at all, particularly if the unconstrained algorithm already guarantees the low ratio.

\section{Empirical Results Considering Noise in Estimated Utilities}\label{sec:additional:empirical:noise}

   \subsection{Simulation considering noise in the bias parameter} \label{sec:additional:empirical:betanoise}
    In this section, we consider noise in the bias parameter of the $\beta$-bias model, which we refer to as the noisy $\beta$-bias model. Instead of a fixed bias parameter, we sample $\beta$ from a Gaussian distribution truncated to the interval $[0, 1]$ with a standard deviation of $0.1$. Examples of the noisy $\beta$-bias model centered around $\beta = 0.1, 0.2, 0.3$ are shown in \cref{fig:betaexamples}. This standard deviation ensures that for non-extreme values of $\beta$ ($0.1 \leq \beta \leq 0.9$), the expected value closely adheres to the center of the distribution (within $0.02$). Under these non-ideal conditions, we observe how closely empirical data adheres to the theoretical results of \cref{thm:institutionWiseConstraints}, as well as the effect of noise on $\Agroup$ and $\Ast$.

\begin{figure}[ht!]
            \centering
            \subfigure[{$\beta = 0.1$}]{
                \includegraphics[width=0.31\linewidth, trim={0cm 0cm 0cm 0cm},clip]{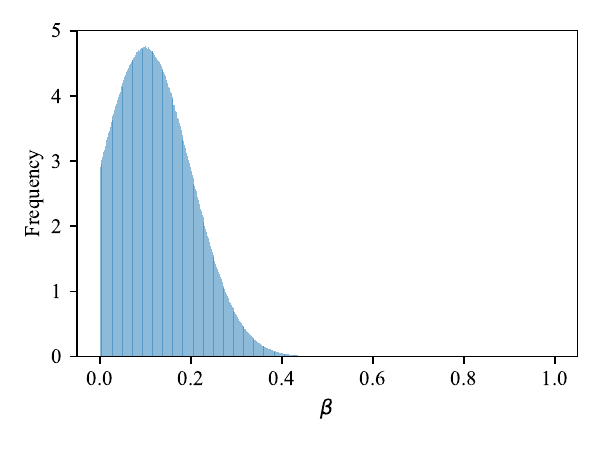}
            }
            \subfigure[{$\beta = 0.2$}]{
                \includegraphics[width=0.31\linewidth, trim={0cm 0cm 0cm 0cm},clip]{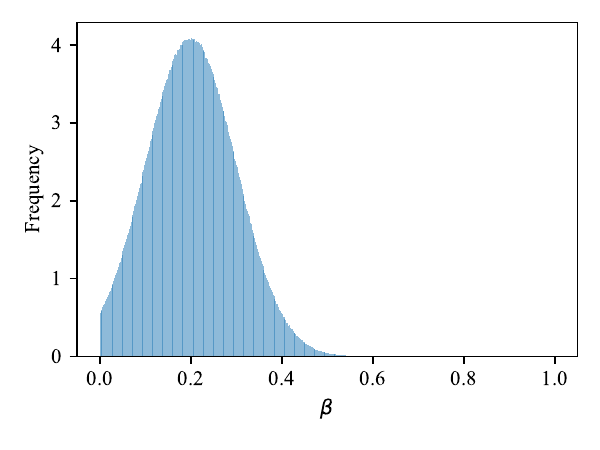}
            }
            \subfigure[{$\beta = 0.3$}]{
                \includegraphics[width=0.31\linewidth, trim={0cm 0cm 0cm 0cm},clip]{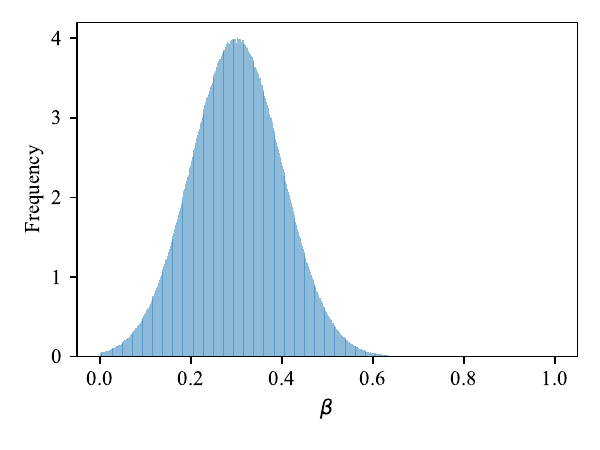}
            }
            \caption{
                Distributions of $\beta$ under the noisy $\beta$-bias model with standard deviation $0.1$. (a) shows the distribution when $\beta = 0.1$ with mean 0.129. (b) shows the distribution when $\beta = 0.2$ with mean $0.205$. (c) shows the distribution when $\beta = 0.3$ with mean $0.300$. The noisy $\beta$-bias model is detailed in \cref{sec:additional:empirical:betanoise}.
            }
            \label{fig:betaexamples}
        \end{figure}

    \paragraph{Setup.} We fix the parameters $n=1000$, $p=5$, and $k_\ell = 100$ for each $\ell \in [5]$. We vary $\beta \in [0, 1]$ and sample a bias parameter for each person in the disadvantaged group from the Gaussian distribution centered at $\beta$, truncated to the interval $[0, 1]$ with standard deviation 0.1. We draw latent utilities from $\cD \in \{\cD_{\rm Gauss}, \cD_{\rm Pareto}\}$, and draw preferences i.i.d. from a Mallows distribution with $\phi = 0.25$. We plot preference-based fairness $(\sP^{(1)}, \sP^{(3)})$ and utility ratio $(\sU)$ over 50 iterations.

    \paragraph{Results and observations.} Results are plotted in \cref{{fig:simulation:noisebias}}. 
    Our main observation is that even with noise in $\beta$, the preference-based fairness of $\Ainst$ is not affected (within 0.01) and is still high.  
    When looking at $\Agroup$ and $\Ast$, noise can skew preference-based fairness for extreme $\beta$ ($\beta \leq 0.15, \beta \geq 0.85$). 
    Specifically, when $\beta$ approaches 1, preference-based fairness in the presence of noise is less than preference-based fairness without noise. Under the condition of $\cD_{\rm Pareto}$ and $\sP^{(3)}$, the difference in $\sP^{(3)}$ was 0.1 for $\Agroup$ and 0.2 for $\Ast$ at $\beta = 1$.
    Noise also affected the utility ratio. At low $\beta,$ there was a difference between the utility ratio of the simulations with noise and without noise, but as $\beta$ approached 1, the utility ratios converged. Significantly, for $\beta \geq 0.6$, the difference between the utility ratio with and without noise was within 0.02.
    Overall, noise had minimal impact on preference-based fairness for institution-wise constraints, but had a notable impact on group-wise and unconstrained algorithms for $\beta \geq 0.85$ or $\beta \leq 0.15$. The noise did not have a significant effect on the utility ratio for large $\beta$.

     \begin{figure}[ht!]
            \centering
            \includegraphics[width=0.8\linewidth]{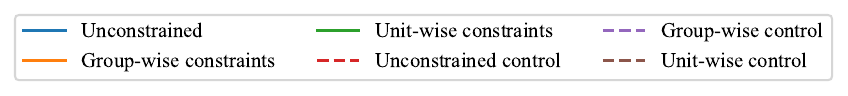}
            \subfigure[{$\cD_{\rm Gauss}$ and $\sP^{(1)}$}]{
                \includegraphics[width=0.31\linewidth, trim={0cm 0cm 0cm 0cm},clip]{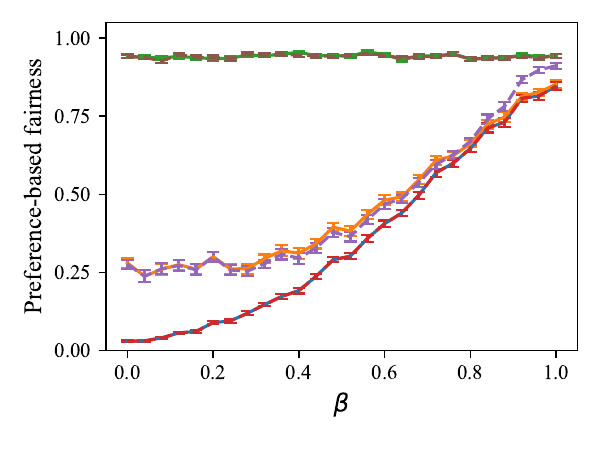}
            }
            \subfigure[{$\cD_{\rm Gauss}$ and $\sP^{(3)}$}]{
                \includegraphics[width=0.31\linewidth, trim={0cm 0cm 0cm 0cm},clip]{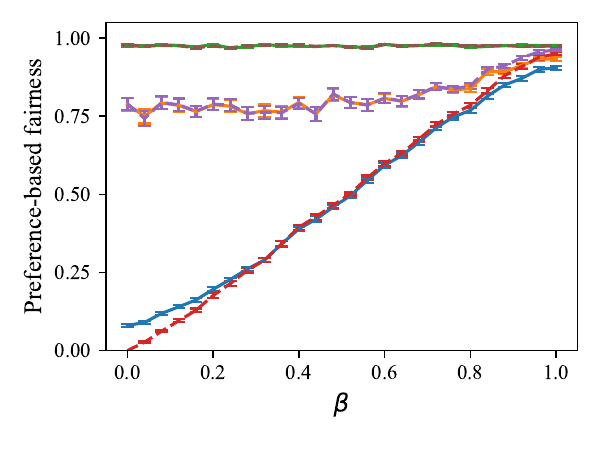}
            }
            \subfigure[{$\cD_{\rm Gauss}$ and $\sU$}]{
                \includegraphics[width=0.31\linewidth, trim={0cm 0cm 0cm 0cm},clip]{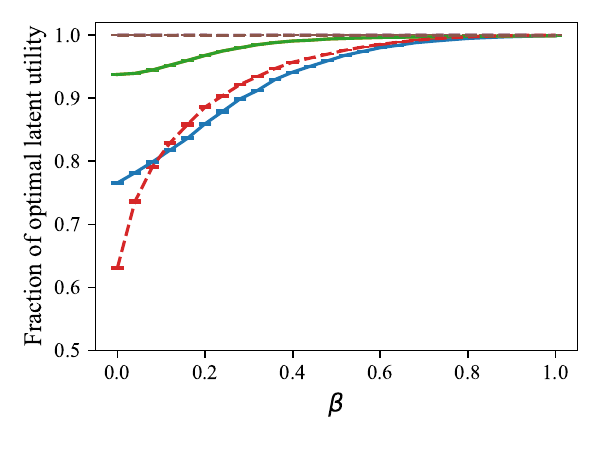}
            }
            \par 
            \subfigure[{$\cD_{\rm Pareto}$ and $\sP^{(1)}$}]{
                \includegraphics[width=0.31\linewidth, trim={0cm 0cm 0cm 0cm},clip]{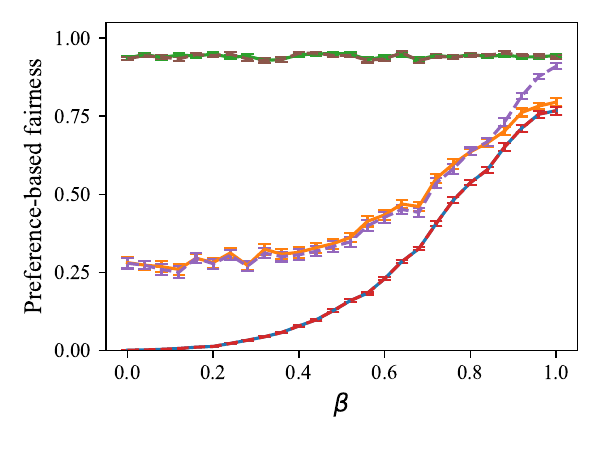}
            }
            \subfigure[{$\cD_{\rm Pareto}$ and $\sP^{(3)}$}]{
                \includegraphics[width=0.31\linewidth, trim={0cm 0cm 0cm 0cm},clip]{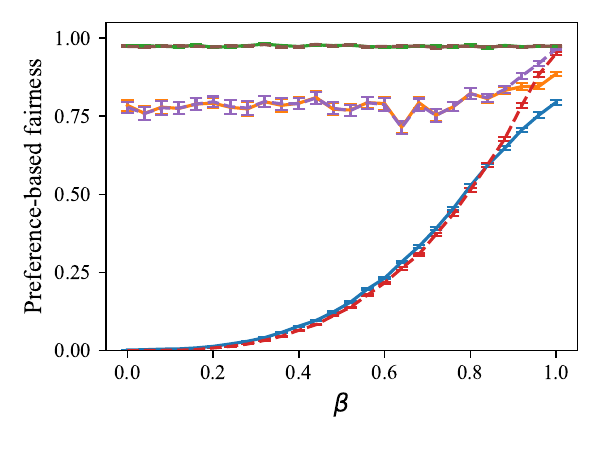}
            }
            \subfigure[{$\cD_{\rm Pareto}$ and $\sU$}]{
                \includegraphics[width=0.31\linewidth, trim={0cm 0cm 0cm 0cm},clip]{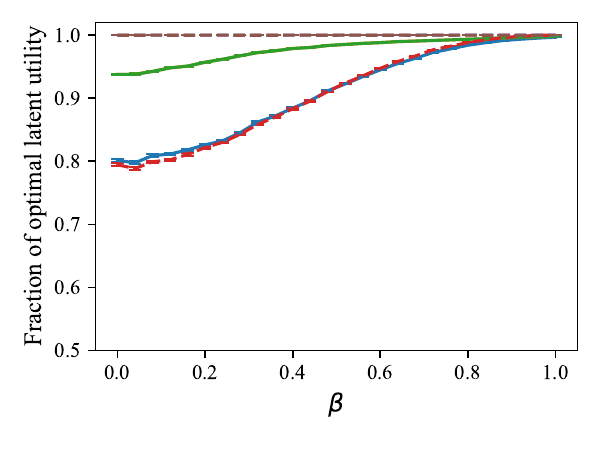}
            }
            \caption{
                Preference-based fairness is measured by $\sP^{(1)}$ and $\sP^{(3)}$, and utility ratio is measured under the noisy $\beta$-bias model.
                The control lines with no noise (dotted) are compared to the experimental lines with noise (solid). 
                (a) measures $\sP^{(1)}$ when utilities are generated from $\cD_{\rm Gauss}$. (b) measures $\sP^{(3)}$ under the same utility distribution. (c) shows the utility ratio when utilities are sampled from $\cD_{\rm Gauss}$. Note that the utility ratio lines for $\Agroup$ and $\Ainst$ overlap completely. (d), (e), and (f) are equivalent to (a), (b), and (c), respectively, except that utilities are drawn from $\cD_{\rm Pareto}$ instead of $\cD_{\rm Gauss}$.
                5
                Our main observation is that noise decreases the preference-based fairness at high $\beta$ and decreases the utility ratio of $\Ainst$ at low $\beta$. See \cref{sec:additional:empirical:betanoise} for details and discussion.
                The $x$-axis denotes $\beta$, the $y$-axis denotes $\sP^{(1)}$, $\sP^{(3)}$, or $\sU$, and the error bars denote the standard error of the mean over 50 iterations.
            }
            \label{fig:simulation:noisebias}
        \end{figure}

        \subsection{Simulation with noise in estimated utilities under the implicit variance model}\label{sec:simulation:noise}\label{sec:additional:empirical:impvar}
            In this section, we consider noise in the estimated utilities under the implicit variance model, which is an instance in which \cref{thm:institutionWiseConstraints} may not hold \cite{EmelianovGGL20}. The main purpose of this simulation is to test whether or not the results of \cref{thm:institutionWiseConstraints} hold under a different bias model. Since noise minimally changes the distribution as a whole, we expect institution-wise constraints to be relatively unaffected and would like to see how well group-wise constraints hold as a comparison.
            Given parameters $\delta_1\geq 0$ and $\delta_2 \geq 0$ controlling the extent of the noise, for each item $i$, the estimated utility of item $i$ is 
            \[
                \hu_{i, \sigma} \coloneqq \begin{cases}
                    u_i + \delta_1 \cdot \zeta_i & \text{if } i\in G_1,\\ 
                    u_i+ \delta_2 \cdot \zeta_i & \text{if } i\in G_2, 
                \end{cases} \quad \text{where}\quad
                \zeta_i \sim \cN(0,1).
            \]

            \paragraph{Setup.}
                We fix the parameters $n=1000,$ $p=5$, and $k_\ell=100$ for each $\ell\in [5]$.
                For each choice of $\cD\in \inbrace{\cD_{\rm Pareto}, \cD_{\rm Gauss}}$, we vary both $\delta_1$ and $\delta_2$ over $[0,2]$ and report the average preference-based fairness (both $\sP^{(1)}$ and $\sP^{(3)}$) over 50 iterations using both group-wise constraints ($\Agroup$) and institution-wise constraints ($\Ainst$).
                For this simulation, we let $\phi = 0.25$. As such, we are assuming that there is a consistent central ranking with variances in individual preferences.

            \paragraph{Results and observations.} We report results for $\sP^{(1)}$ in \cref{fig:noisebias1} and for $\sP^{(3)}$ in \cref{fig:noisebias5}.
            Our main observation is that as expected, noise had minimal effect on $\Ainst$. 
            We also see that when $\delta_1 = \delta_2$, $\Agroup$ does nearly as well as $\Ainst$, but the noise has a much greater effect on $\Agroup$.
            For all values of $\delta_1$ and $\delta_2$ tested, $\Ainst$ retained a high preference-based fairness. $\sP^{(1)} \geq 0.93$ and $\sP^{(3)} \geq 0.96$ for both $\cD_{\rm Gauss}$ and $\cD_{\rm Pareto}$. 
            $\Agroup$ also retained high preference-based fairness when $\delta_1 = \delta_2$ ($\sP^{(1)} \geq 0.90$, $\sP^{(3)} \geq 0.95$). However, when $\delta_1 \neq \delta_2$, there are cases where $\sP^{(1)} \leq 0.50$ and $\sP^{(3)} \leq 0.7$.
            The difference between the preference-based fairness resulting from $\Ainst$ and $\Agroup$ can approach 0.50 ($\sP^{(1)}$) and 0.27 ($\sP^{(3)}$) such as when $\delta_1 = 2$ and $\delta_2 = 0$.
            Institution-wise constraints maintain near-optimal preference-based fairness under noise in estimated utilities, while group-wise constraints only do so when the noise parameters are equal.

            \begin{figure}[ht!]
            \centering
            \subfigure[{$\Agroup$ and $
            \cD_{\rm Gauss}$}]{
                \includegraphics[width=0.28\linewidth, trim={0cm 0cm 0cm 0cm},clip]{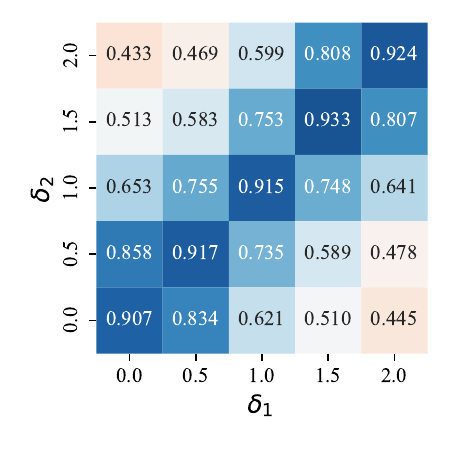}
            }
            \subfigure[{$\Agroup$ and $
            \cD_{\rm Pareto}$}]{
                \includegraphics[width=0.28\linewidth, trim={0cm 0cm 0cm 0cm},clip]{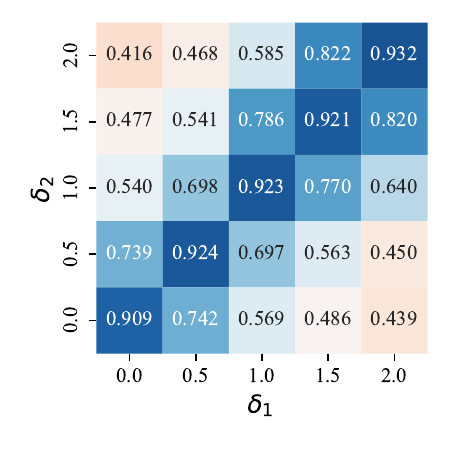}
            }
            \par 
            \subfigure[{$\Ainst$ and $
            \cD_{\rm Gauss}$}]{
                \includegraphics[width=0.28\linewidth, trim={0cm 0cm 0cm 0cm},clip]{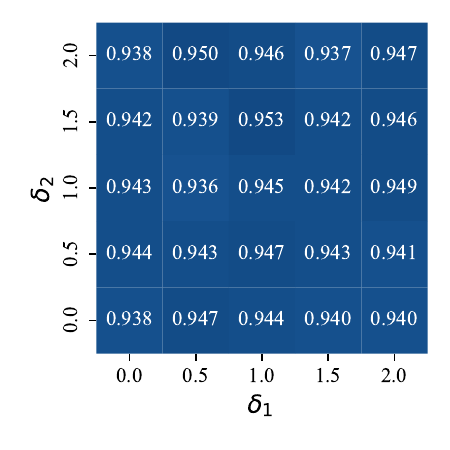}
            }
            \subfigure[{$\Ainst$ and $
            \cD_{\rm Pareto}$}]{
                \includegraphics[width=0.28\linewidth, trim={0cm 0cm 0cm 0cm},clip]{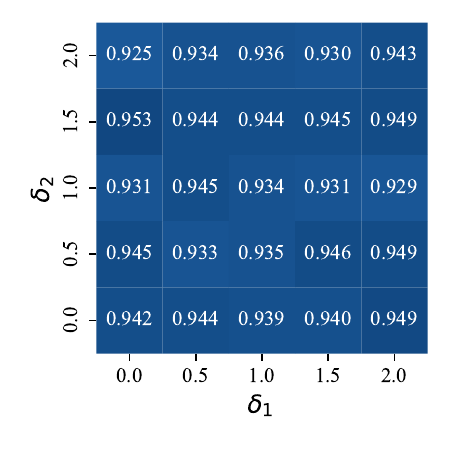}
            }
            \caption{
                Preference-based fairness as measured by $\sP^{(1)}$ under the implicit variance model, where the heatmap scales from 0 (red) to 1 (blue). (a) shows $\sP^{(1)}(\Agroup)$ when utilities are generated from $\cD_{\rm Gauss}$. (b) shows $\sP^{(1)}(\Agroup)$ when utilities are generated from $\cD_{\rm Pareto}$.  (c) shows $\sP^{(1)}(\Ainst)$ when utilities are generated from $\cD_{\rm Gauss}$. (d) shows $\sP^{(1)}(\Ainst)$ when utilities are generated from $\cD_{\rm Pareto}$. 
                Our main observation is that even with noise, institution-wise constraints maintain high preference-based fairness, while group-wise constraints may not. See \cref{sec:additional:empirical:impvar} for details and discussion.
                The $x$-axis denotes $\delta_1$, the $y$-axis denotes $\delta_2$, and the values are the average result over 50 iterations.
        }
            \label{fig:noisebias1}
            \vspace{-3mm}
        \end{figure}

        \begin{figure}[ht!]
            \centering
            \subfigure[{$\Agroup$ and $
            \cD_{\rm Gauss}$}]{
                \includegraphics[width=0.28\linewidth, trim={0cm 0cm 0cm 0cm},clip]{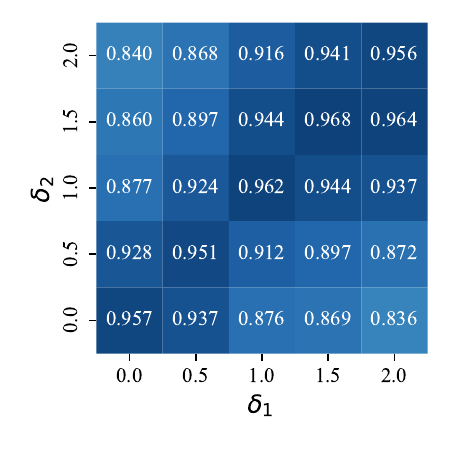}
            }
            \subfigure[{$\Agroup$ and $
            \cD_{\rm Pareto}$}]{
                \includegraphics[width=0.28\linewidth, trim={0cm 0cm 0cm 0cm},clip]{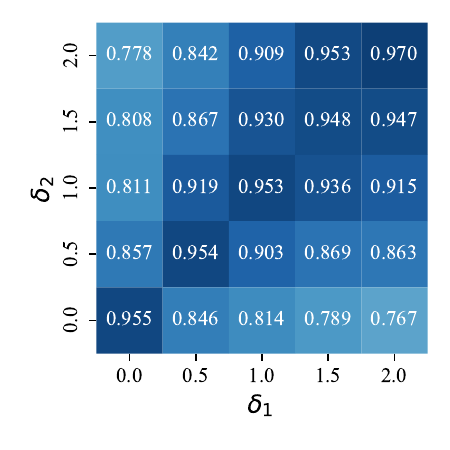}
            }
            \par 
            \subfigure[{$\Ainst$ and $
            \cD_{\rm Gauss}$}]{
                \includegraphics[width=0.28\linewidth, trim={0cm 0cm 0cm 0cm},clip]{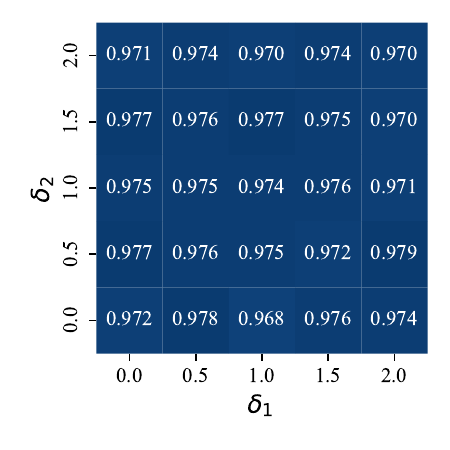}
            }
            \subfigure[{$\Ainst$ and $
            \cD_{\rm Pareto}$}]{
                \includegraphics[width=0.28\linewidth, trim={0cm 0cm 0cm 0cm},clip]{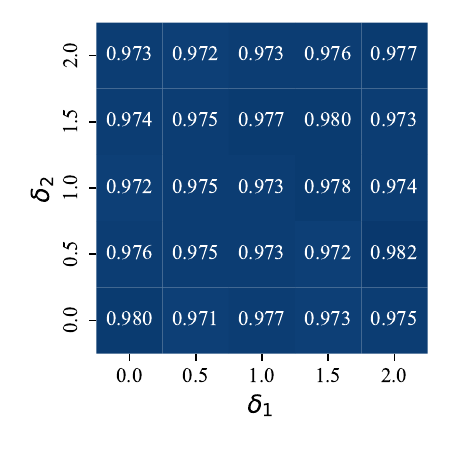}
            }
            \caption{
                Preference-based fairness as measured by $\sP^{(3)}$ under the implicit variance model, where the heatmap scales from 0 (red) to 1 (blue). (a) shows $\sP^{(3)}(\Agroup)$ when utilities are generated from $\cD_{\rm Gauss}$. (b) shows $\sP^{(3)}(\Agroup)$ when utilities are generated from $\cD_{\rm Pareto}$.  (c) shows $\sP^{(3)}(\Ainst)$ when utilities are generated from $\cD_{\rm Gauss}$. (d) shows $\sP^{(3)}(\Ainst)$ when utilities are generated from $\cD_{\rm Pareto}$. 
                Our main observation is that even with noise, institution-wise constraints maintain high preference-based fairness, while group-wise constraints may not. See \cref{sec:additional:empirical:impvar} for details and discussion.
                The $x$-axis denotes $\delta_1$, the $y$-axis denotes $\delta_2$, and the values are the average result over 50 iterations.
            }
            \label{fig:noisebias5}
        \end{figure}

\section{Empirical Validation of \cref{thm:specialcase}} \label{sec:additional:empirical:specialcase}
    In this section, we empirically show that \cref{thm:specialcase} extends to more robust settings than those required by the theorem. 
    In particular, \cref{thm:specialcase} assumes that the latent utilities of candidates are drawn from the uniform distribution on $[0,1]$ and the $\beta$-bias model is used to generate estimated utilities. An extension of the theorem, \cref{thm:specialcaselog}, proves a similar result for all log-concave densities, where upper bounds on $\sR$ and $\sU$ are given.
    Here, we conduct an empirical study using synthetic data to show that the implications of \cref{thm:specialcase} and \cref{thm:specialcaselog} continue to hold when the utilities are drawn from the truncated Gaussian distribution $\cD_{\rm Gauss}$ and the Pareto distribution $\cD_{\rm Pareto}$ (\cref{sec:paretogaussiantest4.1}); when the bias parameter $\beta$ is stochastic (\cref{sec:othermodels1-4.1}); and when the utilities are biased using the implicit variance model (\cref{sec:othermodels2-4.1}). 

    \subsection{Pareto and truncated Gaussian distributions of utilities}
    \label{sec:paretogaussiantest4.1}
    We validate the results of \cref{thm:specialcase} under the uniform distribution. Furthermore, we also validate \cref{thm:specialcaselog}, which extends the results of \cref{thm:specialcase} to all log-concave densities, by also testing $\cD_{\rm Gauss}$ and $\cD_{\rm Pareto}$. 

    \paragraph{Setup.} We generate latent utilities $u$, where $u \sim \cD \in \{\cD_{\rm Unif}, \cD_{\rm Gauss}, \cD_{\rm Pareto}\}$. In setting $n=10000$, $p=5$, and $k_\ell = 1000$ for each $\ell \in [5]$, preferences are generated from a Mallows distribution with $\phi = 0.25$ and $\Ast$ is run on the resulting data. We set $n=10000$ to minimize the error term. We vary $\beta$ in the range $[0, 1]$. We calculate $\sR$, $\sP^{(1)}$, and $\sU$ for over 50 iterations. 

    \paragraph{Results and observations}. Results are shown in \cref{fig:specialcase_nonoise}. Our main observation is that for the uniform distribution, $\sR$ and $\sU$ closely follow the theoretical results states in \cref{thm:specialcase}. This means the error term is likely small and negligible.
    In this simulation, $\sR$ for the uniform distribution closely models theoretical results $(\pm 0.01)$ and $\sU$ closely models theoretical predictions for all $\beta$. Since $\sP$ has an upper bound consistent with $\sR$, that statement also holds. We calculate $\sP^{(1)}$ to visualize the lower bound of $\sP$. 
    Under the premise of no noise for $\cD_{\rm Gauss}$ or $\cD_{\rm Pareto}$, we find that the results for $\cD_{\rm Gauss}$ closely follow the results for the uniform distribution, with differences $\leq 0.1$ for both representational fairness and utility ratio. The Pareto distribution did not closely match the predicted results and had a lower representational fairness for all $\beta$. The Pareto distribution resulted in a higher utility ratio for $\beta \leq 0.15$ and a lower one for higher $\beta$. 
    In general, the results for all three distributions match the results of \cref{thm:specialcase} for $\sR$ and $\sU$, with the occasional exception from $\cD_{\rm Pareto}$.

    \begin{figure}[ht!]
            \centering
            \includegraphics[width=0.6\linewidth]{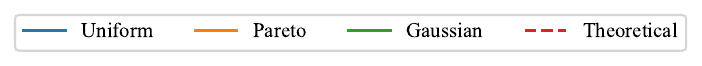}
            \par 
            \subfigure[{$\sR$}]{
                \includegraphics[width=0.31\linewidth, trim={0cm 0cm 0cm 0cm},clip]{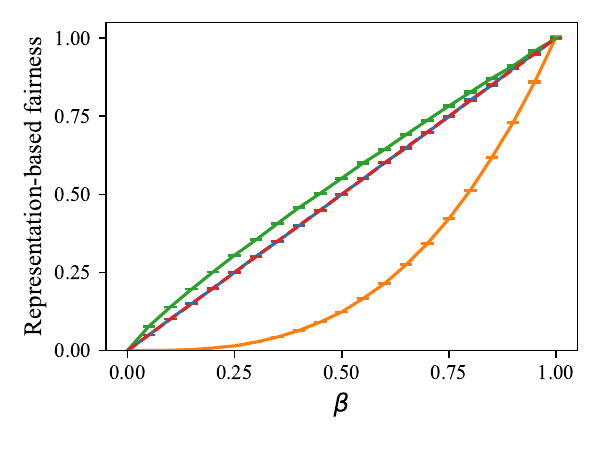}
            }
            \subfigure[{$\sP^{(1)}$}]{
                \includegraphics[width=0.31\linewidth, trim={0cm 0cm 0cm 0cm},clip]{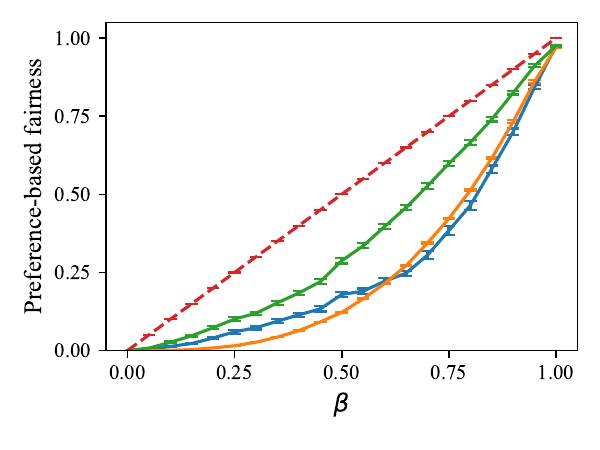}
            }
            \subfigure[{$\sU$}]{
                \includegraphics[width=0.31\linewidth, trim={0cm 0cm 0cm 0cm},clip]{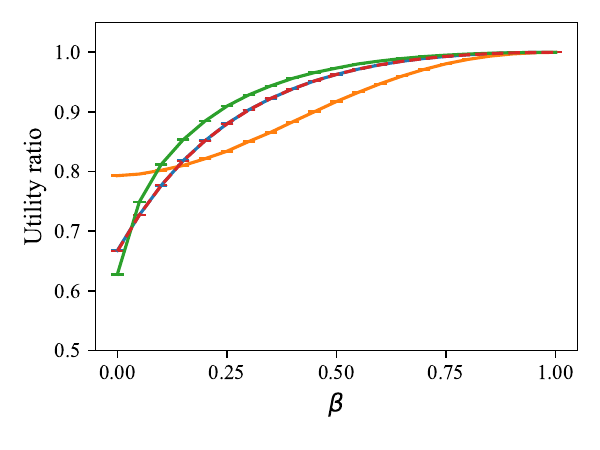}
            }
            \caption{
                Representation-based fairness is measured by $\sR$, preference-based fairness is measured by $\sP^{(1)}$, and utility ratio is measured under the $beta$-bias model with different utility distributions. (a) measures $\sR$ when utilities are generated from $\cD_{\rm Unif}$, $\cD_{\rm Gauss}$, and $\cD_{\rm Pareto}$, as well as the theoretical results for $\cD_{\rm Unif}$. (b) measures $\sP^{(1)}$, the lower bound for preference-based fairness. (c) measures the utility ratios of the three utility distributions. Our main observation is that for the uniform distribution, $\sR$ and $\sU$ closely follow the theoretical results stated in \cref{thm:specialcase}. See \cref{sec:additional:empirical:specialcase} for details and discussion.
                The $x$-axis denotes $\beta$, the $y$-axis denotes $\sR$, $\sP^{(1)}$, or $\sU$, and the error bars denote the standard error of the mean over 50 iterations.
            }
            \label{fig:specialcase_nonoise}
        \end{figure}

    \subsection{Noise in the bias parameter}
    \label{sec:othermodels1-4.1}
    We further test the robustness of \cref{thm:specialcase} by adding noise to the bias parameter, with the procedure detailed in \cref{sec:additional:empirical:betanoise}.

    \paragraph{Setup.} We generate latent utilities $u$, where $u \sim \cD \in \{\cD_{\rm Unif}, \cD_{\rm Gauss}, \cD_{\rm Pareto}\}$. In setting $n=10000$, $p=5$, and $k_\ell = 1000$ for each $\ell \in [5]$, preferences are generated from a Mallows distribution with $\phi = 0.25$ and $\Ast$ is run on the resulting data. We set $n=10000$ to minimize the error term. We vary $\beta \in [0, 1]$ with noise, where $\beta$ is sampled over a Gaussian distribution truncated to $[0, 1]$ with a standard deviation of 0.1 (detailed in \cref{sec:additional:empirical:betanoise}). We calculate $\sR$, $\sP^{(1)}$, and $\sU$ for over 50 iterations. 

    \paragraph{Results and observations}. Results are shown in \cref{fig:specialcase_beta}. Our main observation is that even under noise in the bias parameter, for the uniform distribution, $\sR$ and $\sU$ closely follow the theoretical results stated in \cref{thm:specialcase}. The representational fairness and utility ratio of all three distributions resembled the theoretical results, although $\cD_{\rm Pareto}$ deviated the most.
    We observe that the data is skewed for extreme values of $\beta$ ($\beta \leq 0.25, \beta \geq 0.75$). When considering the graph measuring $\sR$ with noise, we see that $\sR$ for the uniform distribution and $\cD_{\rm Gauss}$ is higher than without noise at low beta (At $\beta = 0$, within $0.10-0.15$). At $\beta = 1$, $\sR$ for all 3 distributions with noise is lower than without noise (within $0.1 - 0.25$). When looking at $\sP^{(1)}$, the most noticeable difference is when $\beta \to 1$. At $\beta = 1$, the difference lies between 0.15 ($\cD_{\rm Unif}$) and 0.25 ($\cD_{\rm Gauss}$). When looking at $\sU$, there is almost no difference for $\beta \geq 0.25$, but for low $\beta$, noise decreases $\sU$.
        Even under noise in the bias parameter, all three distributions of utilities follow the trends indicated in the theoretical results of \cref{thm:specialcase} for $\sR$ and $\sU$, with occasional exceptions in the case of  $\cD_{\rm Pareto}$.

    \begin{figure}[ht!]
            \centering
            \includegraphics[width=0.6\linewidth]{figures_FINAL/specialcase/gsbounds_legend.pdf}
            \par
             \subfigure[{$\sR$ with noise in $\beta$}]{
                \includegraphics[width=0.31\linewidth, trim={0cm 0cm 0cm 0cm},clip]{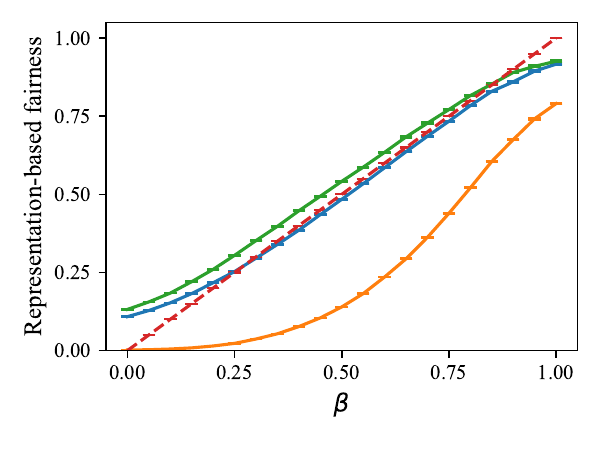}
            }
            \subfigure[{$\sP^{(1)}$ with noise in $\beta$}]{
                \includegraphics[width=0.31\linewidth, trim={0cm 0cm 0cm 0cm},clip]{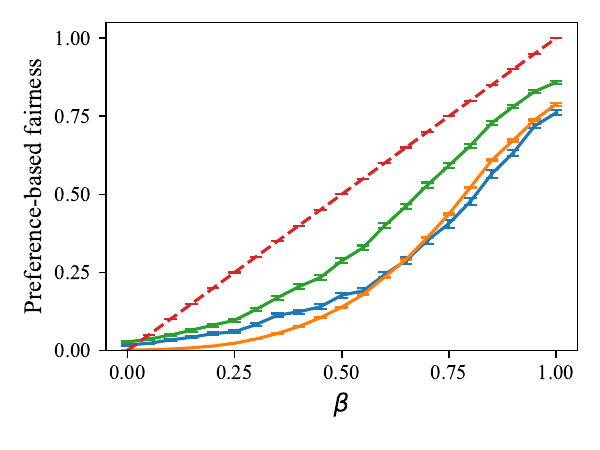}
            }
            \subfigure[{$\sU$ with noise in $\beta$}]{
                \includegraphics[width=0.31\linewidth, trim={0cm 0cm 0cm 0cm},clip]{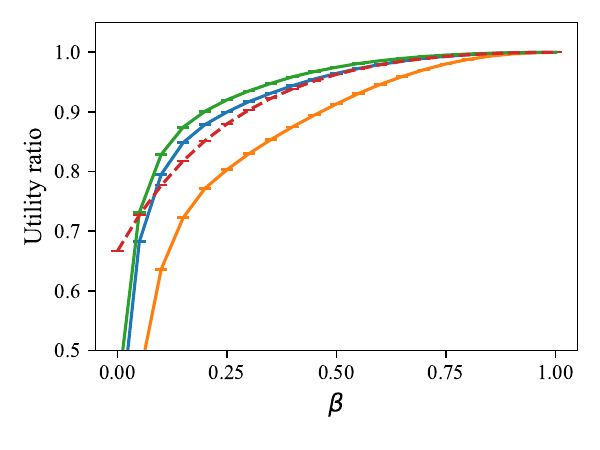}
            }
            \caption{
                Representation-based fairness is measured by $\sR$, preference-based fairness is measured by $\sP^{(1)}$, and utility ratio is measured under the noisy $\beta$-bias model with different utility distributions. (a) measures $\sR$ when utilities are generated from $\cD_{\rm Unif}$, $\cD_{\rm Gauss}$, and $\cD_{\rm Pareto}$, as well as the theoretical results for $\cD_{\rm Unif}$. (b) measures $\sP^{(1)}$, the lower bound for preference-based fairness. (c) measures the utility ratios of the three utility distributions. Our main observation is that noise skews the results for extreme $\beta$, but they generally still adhere to \cref{thm:specialcase}. See \cref{sec:additional:empirical:specialcase} for details and discussion.
                The $x$-axis denotes $\beta$, the $y$-axis denotes $\sR$, $\sP^{(1)}$, or $\sU$, and the error bars denote the standard error of the mean over 50 iterations.
            }
            \label{fig:specialcase_beta}
        \end{figure}

    \subsection{Implicit variance model}
    \label{sec:othermodels2-4.1}
    We also test \cref{thm:specialcase} by adding noise to estimated utilities under a simplified version of the implicit variance model detailed in \cref{sec:additional:empirical:impvar}. While considering the implicit variance model, we fix the variance parameter $\delta_1$ for group $G_1$ to 0 and only vary the parameter $\delta_2$ for group $G_2$ between $0$ and $2$.

    \paragraph{Setup.} We generate latent utilities $u$, where $u \sim \cD \in \{\cD_{\rm Unif}, \cD_{\rm Gauss}, \cD_{\rm Pareto}\}$. We add noise to half of the utilities under the implicit variance model. Here $n=10000$, $p=5$, and $k_\ell = 1000$ for each $\ell \in [5]$, preferences are generated from a Mallows distribution with $\phi = 0.25$ and $\Ast$ is run on the resulting data. We set $n=10000$ to minimize the error term. We fix $\beta = 1$ and vary $\delta = \delta_2$, the noise parameter, from $0$ to $2$. We calculate $\sR$, $\sP^{(1)}$, and $\sU$ for over 50 iterations. 

    \paragraph{Results and observations}. Results are shown in \cref{fig:specialcase_impvar}. Our main observation is that the representational fairness and utility ratio of all three distributions resembled the theoretical results, while preference-based fairness for all three distributions decreased as $\delta$ increased.
    In this simulation, $\sR$ for the uniform distribution closely models theoretical results $(\pm 0.05)$ and $\sU$ almost exactly models theoretical predictions for all $\beta$. Since $\sP$ has an upper bound consistent with $\sR$, that statement also holds. We calculate $\sP^{(1)}$ to visualize the lower bound of $\sP$. 
    Under the implicit variance model, where noise was added to the estimated utilities, we see that it does not affect $\sU$ at all ($\sU = 1$). 
    In terms of representational fairness, we observe that the fairness resulting from $\cD_{\rm Gauss}$ and $\cD_{\rm Pareto}$ decrease slightly due to noise. They drop to $0.90$ and $0.80$, respectively, and then plateau after $\delta \geq 0.5$. When looking at preference-based fairness in \cref{fig:specialcase_impvar}(b), we see that as noise increases, the lower bound on preference-based fairness ($\sP^{(1)}$) decreases from 1 to as low as $0.45$ for $\cD_{\rm Pareto}$ and $\cD_{\rm Gaussian}$  and $0.30$ for $\cD_{\rm Unif}$ at $\delta = 2$.
    We do see that although representational fairness and utility ratio stay consistent, noise can decrease the preference-based fairness guaranteed by these three distributions of preferences, specifically when there is noise in the estimated utilities under the implicit variance model.

        \begin{figure}[ht!]
            \centering
            \includegraphics[width=0.6\linewidth]{figures_FINAL/specialcase/gsbounds_legend.pdf}
            \par
             \subfigure[{$\sR$ with noise in utilities}]{
                \includegraphics[width=0.31\linewidth, trim={0cm 0cm 0cm 0cm},clip]{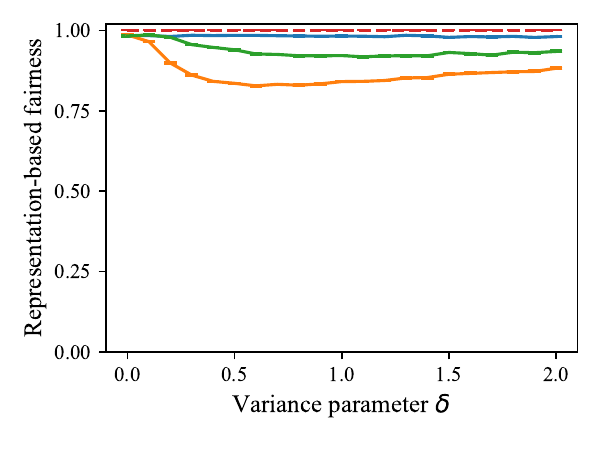}
            }
            \subfigure[{$\sP^{(1)}$ with noise in utilities}]{
                \includegraphics[width=0.31\linewidth, trim={0cm 0cm 0cm 0cm},clip]{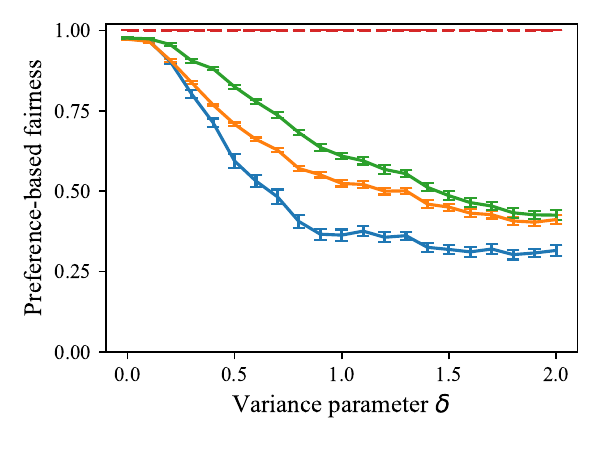}
            }
            \subfigure[{$\sU$ with noise in utilities}]{
                \includegraphics[width=0.31\linewidth, trim={0cm 0cm 0cm 0cm},clip]{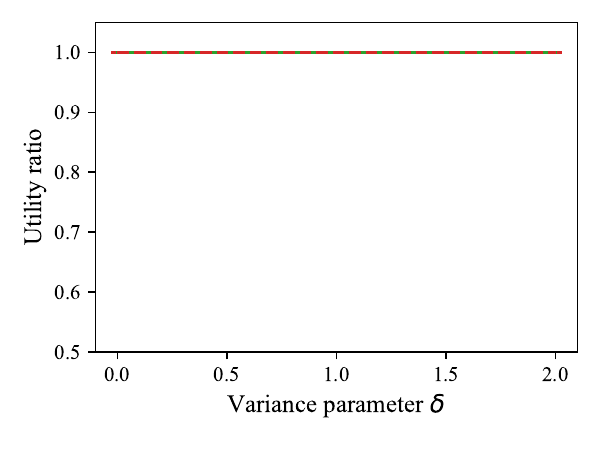}
            }
            \caption{
                Representation-based fairness is measured by $\sR$, preference-based fairness is measured by $\sP^{(1)}$, and utility ratio is measured under the implicit variance model with different utility distributions. (a) measures $\sR$ when utilities are generated from $\cD_{\rm Unif}$, $\cD_{\rm Gauss}$, and $\cD_{\rm Pareto}$, as well as the theoretical results for $\cD_{\rm Unif}$. (b) measures $\sP^{(1)}$, the lower bound for preference-based fairness. (c) measures the utility ratios of the three utility distributions. Our main observation is that $\sR$ and $\sU$ closely follow the theoretical results stated in \cref{thm:specialcase}, but preference-based fairness decreases as noise increases. See \cref{sec:additional:empirical:specialcase} for details and discussion.
                The $x$-axis denotes $\delta$, the $y$-axis denotes $\sR$, $\sP^{(1)}$, or $\sU$, and the error bars denote the standard error of the mean over 50 iterations.
            }
            \label{fig:specialcase_impvar}
        \end{figure}

\end{document}